\newcommand{\cref}{\ref}
\newtheorem{theorem}{Theorem}[section]
\newtheorem{lemma}[theorem]{Lemma}
\newtheorem{claim}[theorem]{Claim}
\newtheorem{corollary}[theorem]{Corollary}
\newtheorem*{question*}{Question}
\newtheorem{definition}[theorem]{Definition}
\newtheorem*{definition*}{Definition}
\DeclarePairedDelimiterX\setc[2]{\{}{\}}{\,#1 \;\delimsize\vert\; #2\,}
\DeclareMathOperator{\sgn}{sign}
\DeclareMathOperator*{\EE}{\mathbb{E}}
\DeclareMathOperator{\Ber}{Ber}
\newcommand{\RR}{\mathbb{R}}
\newcommand{\PP}{\mathbb{P}}
\newcommand{\cB}{\mathcal{B}}
\newcommand{\cD}{\mathcal{D}}
\newcommand{\cN}{\mathcal{N}}
\newcommand{\cS}{\mathcal{S}}
\newcommand{\cT}{\mathcal{T}}
\newcommand{\eps}{\varepsilon}
\title{Adversarially-Robust Inference on Trees via Belief Propagation}
\author{Samuel B. Hopkins\thanks{Massachusetts Institute of Technology, 77 Massachusetts Ave, Cambridge, MA 02139. SBH is supported by NSF CAREER award No. 2238080 and MLA@CSAIL. \texttt{samhop@mit.edu.}} \hspace{2 pt} and Anqi Li\thanks{Department of Pure Mathematics and Mathematical Statistics (Centre for Mathematical Sciences), University of Cambridge, Cambridge CB30WB, England, UK. AL is supported by the Trinity Studentship in Mathematics. Part of this work was completed while AL was an intern at Microsoft Research (New England). \texttt{anqili@mit.edu.}}}
\begin{document}

\maketitle

\begin{abstract}
    We introduce and study the problem of posterior inference on tree-structured graphical models in the presence of a malicious adversary who can corrupt some observed nodes.
    In the well-studied \emph{broadcasting on trees} model, corresponding to the ferromagnetic Ising model on a $d$-regular tree with zero external field, when a natural signal-to-noise ratio exceeds one (the celebrated \emph{Kesten-Stigum threshold}), the posterior distribution of the root given the leaves is bounded away from $\mathrm{Ber}(1/2)$, and carries nontrivial information about the sign of the root.
    This posterior distribution can be computed exactly via dynamic programming, also known as belief propagation.
    
    We first confirm a folklore belief that a malicious adversary who can corrupt an inverse-polynomial fraction of the leaves of their choosing makes this inference impossible.
    Our main result is that accurate posterior inference about the root vertex given the leaves \emph{is} possible when the adversary is constrained to make corruptions at a $\rho$-fraction of randomly-chosen leaf vertices, so long as the signal-to-noise ratio exceeds $O(\log d)$ and $\rho \leq c \eps$ for some universal $c > 0$.
    Since inference becomes information-theoretically impossible when $\rho \gg \eps$, this amounts to an information-theoretically optimal fraction of corruptions, up to a constant multiplicative factor.
    Furthermore, we show that the canonical belief propagation algorithm performs this inference.
\end{abstract}

\tableofcontents

\newpage

\setcounter{page}{1}

\section{Introduction}

Posterior inference is the central problem in Bayesian statistics: given a multivariate probability distribution, often specified by a \emph{graphical model}, and observed values for a subset of the random variables in this distribution, the goal is to infer resulting conditional, or \emph{posterior}, distribution on the unobserved variables.
Bayesian methods allow rich domain knowledge to be incorporated into inference, via prior distributions, and posterior distributions offer an expressive language to describe the uncertainty remaining in unobserved variables given observations.
Observations are not necessarily distributed iid conditioned on unobserved variables -- a graphical model may specify a complex joint distribution among observed variables.

How robust are posterior inferences to corruptions or errors in observed data?
Or, to misspecifiation of the underlying probabilistic  model?
The rapidly developing field of robust estimation addresses similar questions in the \emph{frequentist} setting where the goal is to construct point estimators which retain provable accuracy guarantees when a small fraction of otherwise iid samples have been maliciously corrupted \cite{diakonikolas2023algorithmic}.
But, to our knowledge, relatively little attention has been paid to questions of robustness to adversarial (i.e. non-probabilistic) contamination in posterior inference.

We study robustness of posterior inference to adversarial corruption and model misspecification for a protoypical family of graphical models: broadcast processes on trees (see \cite{evans2000broadcasting} and references therein).
\begin{definition}[Broadcasting and inference in regular trees]
Given a depth-$t$ tree with vertices $V$ where each non-leaf node has $d$ children, we consider the following joint probability distribution on $\{ \pm 1\}^V$.
First, assign the root vertex a uniform draw from $\{ -1,+1\}$.
Then, recursively assign each child vertex the same sign as its parent with probability $\tfrac{1+\eps}{2}$ and otherwise the opposite sign.
Observing the signs of the leaf nodes $\sigma_L$ in the broadcast tree, our goal is to infer the posterior distribution of the sign $\sigma_R$ of the root.
\end{definition}

The broadcast process on trees is a useful ``model organism'' for our purposes:
\begin{enumerate}
\item Like many graphical models used in practice, it is tree-structured.
This means that posterior inference is algorithmically tractable via dynamic programming, in this setting called ``belief propagation''.
\item Some graphical models exhibit \emph{correlation decay}, meaning that the covariance/mutual information between subsets of random variables decays (exponentially) in the graph distance between the corresponding subsets.
Strong forms of correlation decay (e.g., strong spatial mixing) imply that the distribution on any unobserved node far enough in graph distance to every observed node is insensitive to the values of the observed nodes.
This means that any corruptions in observations won't affect the posterior on the unobserved node, but also means that even without corruptions, no interesting inference can be made about the unobserved node.

As $d$ and $\eps$ increase, the broadcast process exhibits long-range correlations, because the leaf signs $\sigma_L$ collectively carry information about the sign $\sigma_R$ of the root vertex.
In particular, when $d\eps^2 > 1$ (the celebrated \emph{Kesten-Stigum threshold}), this holds even in the limit of infinite tree depth, that is, $\lim_{t \rightarrow \infty} I(\sigma_L ; \sigma_R) > 0$.
Thus, there is nontrival inference to be performed, but by the same token, incorrect observations of the leaf signs could adversely affect the accuracy of this inference.
\end{enumerate}

We introduce and study several models for adversarially-robust inference in a broadcast tree.
In each of these models, a malicious adversary observes the leaf signs $\sigma_L$ resulting from a broadcast process and may flip a subset of them -- the specifics of how this subset is chosen are important, and vary across our models.
The corrupted leaf signs are then passed to an inference algorithm which aims to output the conditional distribution of the root sign \emph{conditioned on the signs of the non-corrupted leaves}.
Crucially, we do not know which subset of leaf signs has been flipped at inference time.

\paragraph{Organization}
In the remainder of this section, we give a high level overview of our results (Section~\ref{sec:results}), discuss how our results can be interpreted in terms of model misspecification (Section~\ref{sec:model-mis}), discuss some related work (Section~\ref{sec:rel-work}) and provide some proof ideas for our main results (Section~\ref{sub-sec:overview}). 


\subsection{Results}\label{sec:results}
The first adversarial model we consider is the simplest and most powerful: for some $\rho > 0$, the \emph{$\rho$-fraction adversary} can choose any $\rho d^t$ leaves (out of $d^t$ leaves in total) and flip their signs.
In this setting we confirm what we believe to be folklore \cite{Polyanskiy2023}:\footnote{However, we are not aware of anywhere it is written in the literature.} for any $\rho > d^{-\Omega(t)}$, this adversary can make the posterior distribution at the root unidentifiable from given the (corrupted) leaf signs.
In what follows, for a random variable $X$, we write $\{ X \}$ for the distribution of $X$, and for an event $E$, we write $\{X \, | \, E\}$ for the distribution of $X$ conditioned on $E$.
We also write $d_{TV}(\cdot,\cdot)$ for the total variation distance between two distributions.
\begin{theorem}[Proof in Section~\ref{sec:info-theory}] 
\label{thm:lower-bound-intro}
    There exists $\eps_0 > 0$ such that for every $\rho$, $d$ and $\eps < \eps_0$, there exists a $\eps^{O(t)}$-fraction adversary $A$ such that if $(\sigma_R,\sigma_L)$ is distributed according to the broadcast process with parameters $d,\eps$,
    \[
    d_{TV}( \{ \sigma_L \, | \, \sigma_R = 1 \}, \{ A(\sigma_L) \, | \, \sigma_R = -1 \}) \leq e^{-\Omega(t)} \, .
    \]
\end{theorem}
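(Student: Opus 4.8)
The plan is to reduce the theorem to the construction of a good \emph{coupling} of the two leaf-label distributions $P^+ := \{\sigma_L \mid \sigma_R = +1\}$ and $P^- := \{\sigma_L \mid \sigma_R = -1\}$, and then to have the adversary ``rewrite history.'' Concretely, suppose we have a coupling $\Pi$ of $(\sigma_L^-, \sigma_L^+)$ with $\sigma_L^- \sim P^-$ and $\sigma_L^+ \sim P^+$, and let $F$ be the (random) number of leaves on which $\sigma_L^-$ and $\sigma_L^+$ disagree. Define the adversary $A$ on input $x$ to sample $y$ from the conditional law $\Pi(\,\cdot \mid \text{first coordinate}=x\,)$; if $y$ differs from $x$ in at most $\rho d^t$ leaves, output $y$ (a legitimate flip of at most $\rho d^t$ leaves), and otherwise output $x$ unchanged (flipping nothing, which is also within budget). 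Then, feeding $A$ a sample $\sigma_L \sim P^-$, the pair $(\sigma_L, A(\sigma_L))$ is distributed exactly as $(\sigma_L^-,\sigma_L^+)$ except on the ``overflow'' event $\{F > \rho d^t\}$, so $d_{TV}\bigparen{\{A(\sigma_L)\mid \sigma_R=-1\},\, \{\sigma_L \mid \sigma_R=+1\}} \le \PP[F > \rho d^t]$. It thus suffices to build a coupling with $\PP[F > \rho d^t] \le e^{-\Omega(t)}$ for a suitable $\rho = \eps^{O(t)}$.

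I would build $\Pi$ recursively down the tree. At the root, the child-sign vectors are i.i.d.\ with mean $+\eps$ (in the $+$-tree) resp.\ mean $-\eps$ (in the $-$-tree), valued in $\{\pm1\}$; couple each child's two signs by the maximal coupling, so the $i$-th child receives the same sign in both trees with probability $1-\eps$. The structural point that makes the recursion work is that the maximal coupling of these two particular laws can only disagree ``in the consistent direction'': when the $i$-th child's signs differ, they are necessarily $+1$ in the $+$-tree and $-1$ in the $-$-tree. Hence for a child whose two signs agree we couple the two subtrees below it identically (shared broadcast randomness), contributing zero disagreements; for a child whose signs differ, we are left to couple a $(+1)$-rooted with a $(-1)$-rooted depth-$(t-1)$ broadcast subtree, which is literally the same problem one level down, so we recurse — with base case a single leaf, where a $+1$ leaf and a $-1$ leaf always differ (one disagreement). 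A routine induction verifies this is a genuine coupling, i.e.\ that each marginal is the correct broadcast law; the only thing requiring care is the conditional-independence bookkeeping across the $d$ children.

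With this coupling, $F$ equals the sum, over the children whose signs disagree, of independent copies of the disagreement count of a depth-$(t-1)$ coupling; since each child disagrees independently with probability $\eps$, the expectations satisfy the recursion $\EE[F_t] = \eps d \cdot \EE[F_{t-1}]$ with $\EE[F_0] = 1$, hence $\EE[F] = (\eps d)^t$ — an expected $\eps^t$-fraction of the $d^t$ leaves. Setting $\rho := \eps^{t/2}$, Markov's inequality gives $\PP[F > \rho d^t] \le (\eps d)^t / (\eps^{t/2} d^t) = \eps^{t/2}$, which is $e^{-\Omega(t)}$ as soon as $\eps < \eps_0$ for any fixed $\eps_0 < 1$. (If $\rho d^t < 1$, the event $\{F \le \rho d^t\}$ simply forces $F = 0$, so the adversary flips nothing and the bound is unaffected.) Combined with the first paragraph, $A$ is an $\eps^{O(t)}$-fraction adversary achieving $d_{TV} \le \eps^{t/2} = e^{-\Omega(t)}$, which proves the theorem.

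\textbf{Main obstacle.} The only non-mechanical step is the recursive coupling: noticing that the maximal coupling of the two child-sign laws disagrees only in the ``$+$ over $-$'' direction, so that the disagreeing branch is a genuine scaled-down instance of the original coupling problem rather than something new, and then checking that gluing these recursive couplings together yields the two correct broadcast marginals (the across-children conditional independence needs a short argument). Once the coupling is in hand, the expectation recursion and the Markov/overflow argument are routine.
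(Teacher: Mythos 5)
Your proof is correct, and it follows the same overall strategy as the paper (the proof of Theorem~\ref{thm:non-spread} in Section~\ref{sec:info-theory}): construct a coupling of $\{\sigma_L \mid \sigma_R = +1\}$ and $\{\sigma_L \mid \sigma_R = -1\}$ under which the Hamming distance is small with high probability, let the adversary ``rewrite history'' along the coupling, and control the overflow event by Markov's inequality. The one place where you genuinely diverge is the construction of the coupling itself: you build it directly and recursively, applying the maximal coupling of the two child-sign laws at each edge (disagreement probability exactly $\eps$, with the observation that disagreements are always in the monotone ``$+$ over $-$'' direction so the disagreeing branch is literally a scaled-down copy of the original coupling problem); the paper instead samples the full $+$-rooted tree conditioned on its leaves and then generates the $-$-rooted partner via a top-down ``marking'' process with a slightly lossier disagreement parameter $\xi = \tfrac{4\eps}{1+2\eps}$, verifying the $\mathcal{D}^-$-marginal by an explicit Markov-property check. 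Your recursion yields the clean identity $\EE[F_t] = (\eps d)^t$ where the paper settles for the upper bound $\xi^t$ per leaf; both give $e^{-\Omega(t)}$, and the Markov/adversary steps afterward are essentially identical in the two arguments.
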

An alternative interpretation of Theorem~\ref{thm:lower-bound-intro} is that the Wasserstein distance, with respect to the Hamming metric, between the distributions $\{ \sigma_L \, | \, \sigma_R = 1\}$ and $\{ \sigma_L \, | \, \sigma_R = -1 \}$, decays exponentially with $t$.

Theorem~\ref{thm:lower-bound-intro} implies that no algorithm can reliably distinguish whether $\sigma_R = 1$ or $\sigma_R = -1$, with advantage better than $e^{-\Omega(t)}$ over random guessing, in the presence of an $\eps^{O(t)}$-fraction adversary.
Accurately computing the posterior distribution $\{ \sigma_R \, | \, \sigma_L \}$ and then sampling from it would distinguish these cases with nonvanishing advantage as $t \rightarrow \infty$ (if $\eps^2 d > 1$).
Hence, computing the posterior in the presence of this adversary is impossible.

While this may make it appear that posterior inference in the broadcast tree is inherently non-robust, recent works \cite{MNS16,YP22} tell a contrasting story about a weaker adversary.
The \emph{$\rho$-random} adversary simply flips each leaf sign independently with probability $0 \leq \rho < 1/2$.
It turns out that the posterior at the root vertex can still be accurately recovered in the presence of the $\rho$-random adversary as long as $\rho(\eps,d)$ is held fixed as $t \rightarrow \infty$ \cite{MNS16,YP22} -- this is sometimes call ``robust reconstruction.''
This suggests the question:
\begin{quote}
\begin{center}
\emph{Is posterior inference in the broadcast process possible in the presence of an adversary more malicious than the $\rho$-random adversary?}
\end{center}
\end{quote}

We introduce a \emph{semirandom} adversary, whose power lies in between the worst-case adversary of Theorem~\ref{thm:lower-bound-intro} and the $\rho$-random one.
For the semirandom adversary, the locations of allowed sign flips are random, but the decision whether to make a flip is made adversarially, in full knowledge of all of the leaf signs.
\begin{definition}[$\rho$-semirandom adversary]
Fix $\rho > 0$.
A $\rho$-semirandom adversary receives leaf signs $\sigma_L$ and flips an independent coin $x_u$ for each leaf $u$ which is heads with probability $\rho$.
For each leaf $u$, if $u$'s coin is heads, the adversary may choose to flip the sign $\sigma_u$.
\end{definition}

Our main result shows that when the signal-to-noise ratio $d \eps^2$ exceeds the Kesten-Stigum threshold by a logarithmic factor, there is $\rho(\eps,d) > 0$ such that the distribution of the root vertex can be successfully inferred even in the presence of a $\rho$-semirandom adversary, for large-enough depth $t$. In what follows, we write $(\sigma_R, \sigma_L) \sim \cD_{d, \eps, t}$ to denote $(\sigma_R, \sigma_L)$ distributed according to the broadcast on tree process with parameters $d, \eps$ run up to depth $t$. 


\begin{theorem}[Main theorem, follows from Lemma~\ref{lem:small-eps} and Lemma~\ref{lem:contract-large}]
\label{thm:const-corrupt}
For any $\delta > 0 $ there exists $C$ such that for any $d, \varepsilon$ satisfying $d\varepsilon^2 >C \log \tfrac{d}{1-\eps}$ there exists $\rho_0(\eps) = \Omega(\eps)$ and $t_0(\delta, d)$ such that if $\rho < \rho_0$ and $t > t_0$, for any $\rho$-semirandom adversary $A$, the belief-propagation algorithm \texttt{BP} satisfies
\[
\EE_{\sigma_L \sim \cD_{d, \eps, t}} d_{TV}(\{ \sigma_R \, | \, \sigma_L \}, \texttt{BP}(A(\sigma_L))) \leq \delta \, ,
\]
where the expectation is taken over the broadcast process $\sigma_R, \sigma_L$ as well as the random choice of which vertices the adversary $A$ may choose to corrupt.
\end{theorem}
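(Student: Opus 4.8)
The plan is to analyze belief propagation directly as a recursion on posterior magnetizations, and to show that when the signal is strong the BP messages become so \emph{confident} that they are insensitive to a sparse, randomly-located set of sign flips.

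\textbf{Setup and reduction.} For a node $v$ write $m_v = \EE[\sigma_v \mid \sigma_{L(v)}]$ for the honest posterior magnetization of $\sigma_v$ given the leaves $L(v)$ of $v$'s subtree, and $\tilde m_v$ for the same quantity evaluated on the \emph{corrupted} leaves; these are exactly the quantities \texttt{BP} computes, via $m_v = \bigl(\prod_i(1+\eps m_{u_i}) - \prod_i(1-\eps m_{u_i})\bigr)\big/\bigl(\prod_i(1+\eps m_{u_i}) + \prod_i(1-\eps m_{u_i})\bigr)$ over the children $u_1,\dots,u_d$ of $v$, initialized at the leaves by the (honest or corrupted) observed signs. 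Since $d_{TV}(\{\sigma_R \mid \sigma_L\},\texttt{BP}(A(\sigma_L))) = \tfrac12\abs{m_R - \tilde m_R}$, it suffices to show $\EE\abs{m_R - \tilde m_R} \le 2\delta$. I will index nodes by height (leaves at height $0$, root at height $t$); conditioned on $\sigma_v$, the subtrees hanging off the children of $v$ are mutually independent and the children's signs are i.i.d.\ copies of $\sigma_v$ through the broadcast channel.

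\textbf{Confidence amplification.} Fix $\gamma = e^{-\Theta(d\eps^2)}$ and call a node $v$ \emph{good} if $m_v\sigma_v \ge 1-\gamma$ and, moreover, $\tilde m_v\sigma_v \ge 1-\gamma$ for \emph{every} flipping of heads-leaves the adversary could perform inside $v$'s subtree (so that $\abs{m_v - \tilde m_v}\le\gamma$ automatically); otherwise $v$ is \emph{bad}. Because this event depends only on $v$'s subtree, goodness of siblings is a family of conditionally independent events given their parent's sign, and — the key structural point — the number of children of $v$ rendered bad by the \emph{global} adversary is at most the number of children whose \emph{local} badness event occurs, since restricting a global flipping to a subtree is a legal local flipping and the messages inside a subtree depend only on that subtree. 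Let $p_\ell$ bound the probability that a height-$\ell$ node is bad. The heart of the argument is a deterministic \textbf{contraction lemma}: if at most $k := \floor{\eps d/10}$ children of $v$ are bad and the number of children agreeing in sign with $\sigma_v$ is within $O(\sqrt d\log d)$ of its mean $\tfrac{1+\eps}{2}d$, then $v$ is good. Indeed, writing $\Lambda_v$ for $v$'s message log-likelihood ratio, the $\ge d-k$ good children each contribute $\approx\pm\log\tfrac{1+\eps}{1-\eps}$ and the $\le k$ bad ones contribute at worst $\mp\log\tfrac{1+\eps}{1-\eps}$, so $\Lambda_v\sigma_v \gtrsim \eps d\log\tfrac{1+\eps}{1-\eps} = \Omega(d\eps^2)$, which exceeds any constant once $d\eps^2 > C\log\tfrac{d}{1-\eps}$; hence $m_v,\tilde m_v$ lie within $e^{-\Omega(d\eps^2)}\le\gamma$ of $\sigma_v$. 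This is the one place the SNR hypothesis is used (with $C = C(\delta)$); the only subtlety is to check the error from good children being merely $\gamma$-confident, which is $O(d\eps\gamma/(1-\eps))$, stays well below $\Omega(d\eps^2)$ — and $d\eps^2 > C\log\tfrac{d}{1-\eps}$ forces $\gamma = e^{-\Theta(d\eps^2)} \ll \eps(1-\eps)$, so it does. Combining the contraction lemma with Chernoff bounds on the number of bad children and on the sign split gives the recursion
\[
p_\ell \;\le\; \Pr\bigl[\mathrm{Bin}(d,p_{\ell-1}) \ge k\bigr] + e^{-\Omega(\log^2 d)}, \qquad p_0 \le \rho.
\]
Taking $\rho_0 = c\eps$ for a small universal $c$ makes the first term at level $1$ exponentially small, and a short induction (for any sufficiently tiny $q$ one has $\Pr[\mathrm{Bin}(d,q)\ge k] \le e^{-\Omega(\eps d)}$) yields $p_\ell \le e^{-\Omega(d\eps^2)} + e^{-\Omega(\log^2 d)}$ for all $\ell\ge1$; since $d\eps^2 > C\log d$ forces $d$ itself to be large, both terms are at most $\delta$ once $C\ge C(\delta)$, so no separate treatment of small $d$ is needed. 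Finally $\EE\abs{m_R - \tilde m_R} \le 2p_t + \gamma \le 2\delta$, for $t$ exceeding the threshold $t_0(\delta,d)$ at which the induction has spun up.

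\textbf{Obstacles and the two-lemma split.} The bookkeeping in the contraction lemma differs between small $\eps$ (where $\log\tfrac{1+\eps}{1-\eps}\approx2\eps$ is tiny and must be weighed against the $O(d\eps\gamma)$ error) and $\eps$ bounded away from $0$ (where $\log\tfrac{1}{1-\eps}$ can be large); executing these two cases is what Lemma~\ref{lem:small-eps} and Lemma~\ref{lem:contract-large} do, after which Theorem~\ref{thm:const-corrupt} follows by splitting on the size of $\eps$. I expect the genuine difficulty to be threading three constraints at once: achieving the information-theoretically optimal corruption rate $\rho = \Omega(\eps)$ (which pins $k = \Theta(\eps d)$), using only the minimal SNR assumption $d\eps^2 \gtrsim \log\tfrac{d}{1-\eps}$ (which pins $\gamma = e^{-\Theta(d\eps^2)}$), and controlling a worst-case adversary with a global view of all leaves — the last of these is exactly what makes the local/global badness reduction, and the resulting conditional independence of sibling events, the technical linchpin of the recursion.
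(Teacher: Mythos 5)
Your proof takes a genuinely different route from the paper's, and the core idea is sound. The paper controls the quantity $\EE \max_{\sf adversary}\lvert X_u - Z_u\rvert$ directly and shows it contracts by a factor of $\tfrac12$ as one moves up each level, by decomposing the belief-propagation recursion via the mean value theorem and exploiting conditional independence of siblings. You instead track the probability $p_\ell$ that a height-$\ell$ node is ``robustly bad'' (i.e., some legal adversary can knock its BP belief below $1-\gamma$ in magnitude) and show $p_\ell$ stays exponentially small via a self-reinforcing binomial recursion. Both arguments rest on the same structural observation that you call the ``linchpin'' and that the paper also flags: the number of children the global adversary can render bad is at most the number of children whose \emph{local} adversary can do so, and these local events are conditionally independent given the children's spins. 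The paper is at pains to rule out a naive union bound over all height-$k$ vertices (see its discussion around the ``bad assumption''); your argument sidesteps that objection precisely because the contraction lemma tolerates up to $k=\Theta(\eps d)$ bad children rather than demanding that all vertices be good. A side-effect of the two routes: the paper's contraction gives a root error decaying geometrically in the depth $t$, whereas yours gives a $t$-uniform bound of order $p_\infty + \gamma$ — both suffice for the theorem as stated.

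There is one concrete slip in the stated contraction lemma. You require the sign split among the $d$ children to be within $O(\sqrt d\log d)$ of its mean $\tfrac{1+\eps}{2}d$ and charge the failure probability $e^{-\Omega(\log^2 d)}$. But the tolerance you actually need is $\Theta(\eps d)$: with $k=\Theta(\eps d)$ bad children flipping their contributions, the log-likelihood ratio at $v$ is roughly $(\eps d + 2s - 2k)\log\tfrac{1+\eps}{1-\eps}$ where $s$ is the sign-split fluctuation, so a deviation $\lvert s\rvert$ approaching $\eps d/2$ can already reverse the sign. When $d$ is large relative to $C$, the hypothesis $d\eps^2 > C\log d$ allows $\eps d < \sqrt d\log d$, and then a deviation of size $\sqrt d\log d$ would swamp the signal, making the lemma as written false. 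The fix is to demand $\lvert s\rvert \le c\eps d$ for a small constant $c$, charging a Chernoff failure probability of $e^{-\Omega(\eps^2 d)}$, which is $\le d^{-\Omega(C)}$ under the SNR hypothesis — plenty small. With that repair the recursion $p_\ell \le \Pr[\mathrm{Bin}(d,p_{\ell-1})\ge \eps d/10] + e^{-\Omega(\eps^2 d)}$ closes (one also has to check $p_1 < \eps/20$ so the recursion self-sustains, which again requires $C$ large enough but no further ideas), and your argument goes through.
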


The algorithm \texttt{BP} in Theorem~\ref{thm:const-corrupt} simply computes the posterior distribution of the root spin $\sigma_R$ as if the leaf spins had been $A(\sigma_L)$ rather than $\sigma_L$, via dynamic programming -- this is the canonical method to compute posterior distributions in tree-structured graphical models \cite{evans2000broadcasting, MNS16}. 
Our analysis of belief propagation borrows techniques from the arguments of \cite{MNS16} for the random-adversary case, but since the semirandom adversary can introduce nasty dependencies among leaf vertices, these arguments are far from transferring immediately.

We can also replace the assumption that the allowed corruptions are in random locations with a natural deterministic assumption on the pattern of allowed corruption locations.
Concretely, if $d\eps^2> C \log \tfrac{d}{1-\eps}$, then for every $c > 0$ there is $k$ such that if the adversary makes at most $c$ corruptions 
 in every height-$k$ subtree of the broadcast tree, we show that our algorithm successfully infers the distribution at the root vertex.
We capture this in Theorem~\ref{thm:well-spread}.

\paragraph{Open question: robustness down to the KS threshold}
Theorem~\ref{thm:const-corrupt} leaves an important open question: is robustness against a semirandom adversary possible for all $d \eps^2 > 1$, as in the random-adversary case?
Or, does the semirandom adversary shift the information-theoretic phase transition from non-recoverability to recoverability of the root spin away from $1$?\footnote{In the related \emph{stochastic blockmodel} setting, a so-called ``monotone'' adversary is known to shift the analogous non-recoverability threshold by a constant factor \cite{moitraperrywein}. In our setting, a monotone adversary would correspond to one who observes the root sign $\sigma_R$ and may flip any leaf sign $\sigma_L$ to agree with $\sigma_R$.}

\subsection{Adversarial Corruption versus Model Misspecification}\label{sec:model-mis}
Adversarial robustness is of course desirable when inference is performed with potentially-corrupted data.
But is it useful beyond protection against malicious data poisoning?

\paragraph{Background: corruptions and misspecification in frequentist statistics} In (frequentist) robust statistics, there is an appealing relationship between adversarial corruption of a subset of otherwise-iid samples and learning/estimation under model misspecification.
Suppose we design a learning algorithm which takes samples from some distribution $D$ in a class of distributions $\mathcal{D}$, of which $1\%$ have been corrupted by an adversary, and successfully learns some $\hat{D}$ which is close to $D$ in total variation distance.

Now, suppose that $\mathcal{D}$ is misspecified, in the sense that it does not contain the ground truth distribution $D$, but only contains some $D' \in \mathcal{D}$ such that $TV(D,D') \leq 0.001$.
Since the adversary could have coupled $D$ and $D'$ to make corrupted samples from $D'$ look as though they are from $D$, then even given samples from $D$ the algorithm must still learn some $\hat{D}$ with small $TV(\hat{D},D)$.

\paragraph{Misspecified Bayesian models}
Adversarially-robust algorithms in our setting are also robust against an appropriate notion of model misspecification.
Concretely, fix a joint probability distribution $\mu(x_0,x_1,\ldots,x_n)$ and random variables $X_0,\ldots,X_n$ jointly distributed according to $\mu$, and consider the posterior inference problem where we observe a joint sample $X_1,\ldots,X_n$ and aim to output the distribution $\{X_0 \, | \, X_1,\ldots,X_n \}$.
Now, let $\mathcal{S} \subseteq 2^{[n]}$ be a set of possible subsets of observed variables which are correctly specified by $\mu$, the remaining ones being misspecified -- formally, we introduce the following definition:
\begin{definition}
\label{def:intro-S-robust}
An algorithm $\texttt{ALG}$ solves the $\mathcal{S}$-misspecified inference problem for $\mu$ with error $\delta$ if for every $S \in \mathcal{S}$ and every $\mu'$ such that $\mu |_S = \mu'|_S$,
\[
\EE_{x = (x_S, x_{\overline S}) \sim \mu'} d_{TV}(\texttt{ALG}(x), \{X_0 \, | \, X_S = x_S \}) \leq \delta \, .
\]
\end{definition}
Importantly, the algorithm $\texttt{ALG}$ only knows $\mathcal{S}$, not the particular $S$ or $\mu'$.

In Definition~\ref{def:intro-S-robust}, $\mu'$ is our ``ground truth'' description of the world, and $\mu$ is our misspecified model.
Given a sample $x_1,\ldots,x_n$ from $\mu'|_{\{1,\ldots,n\}}$, if we knew $\mu'$ then the best inference we could make about $x_0$ would be the conditional distribution of $x_0$ given $x_1,\ldots,x_n$.
But there is a problem -- we do not know $\mu'$.
If we knew $S$, we could discard the observations whose distributions we know nothing about, and infer $\{X_0 \, | \, X_S\}$.
We do not know $S$, but Definition~\ref{def:intro-S-robust} requires \texttt{ALG} to compete with a hypothetical algorithm that does.

Of course, many other possible notions of model misspecification for the posterior inference problem are possible -- we make no claim that this definition is universally applicable.
Exploring alternative notions of graphical model misspecification and their consequences for posterior inference is a fascinating open direction.

\paragraph{Interpreting our results as robustness to model misspecification}
In the context of the broadcast process on trees, we take $X_0$ above to be the sign at the root vertex $\sigma_R$, and $X_1,\ldots,X_n$ to be the leaf vertex signs $\sigma_L$.
Theorem~\ref{thm:lower-bound-intro} shows if we take $\mathcal{S}$ to be all subsets of size $(1-\rho) d^t$ then the $\mathcal{S}$-misspecified problem is impossible.
Theorem~\ref{thm:const-corrupt} has the following corollary, showing that there is a large set $\mathcal{S}$ of possible misspecifications against which robust inference is possible (indeed, a random $\mathcal{S}$ suffices).
\begin{corollary}
\label{cor:intro-misspecification}
    Under the same hypotheses on $\delta,d,\eps,\rho$ as Theorem~\ref{thm:const-corrupt}, there is some $t_0(d, \delta)$ such that if $t \geq t_0$ then there exists $\mathcal{S} \subseteq 2^{[d^t]}$ with $|\mathcal S| \geq 2^{\Omega_{\eps,d}(d^t)}$ and an algorithm which solves the $\mathcal{S}$-misspecified inference problem with error $\delta$, where $\Omega_{\eps,d}(\cdot)$ indicates asymptotic behavior as $t \rightarrow \infty$.
\end{corollary}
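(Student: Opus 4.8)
The plan is to read Corollary~\ref{cor:intro-misspecification} off of Theorem~\ref{thm:const-corrupt} by a short averaging-and-counting argument, under the dictionary: a fixed \emph{corruptible set} $T$ of leaves is the same object as the misspecified index set $\overline{T}=S$, while a $\rho$-semirandom adversary is simply a distribution over corruptible sets (each leaf placed in $T$ independently with probability $\rho$). All of the real content lives in Theorem~\ref{thm:const-corrupt}; the only new ingredient is one elementary counting inequality.

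\textbf{From the average over a random corruptible set to a large good family.} Write $N=d^t$, $\mu=\cD_{d,\eps,t}$, and for a fixed $T\subseteq[N]$ with $S:=\overline{T}$ set
\[
\Phi(T):=\EE_{(\sigma_R,\sigma_L)\sim\mu}\ \max_{\tau\in\{\pm1\}^{T}}\ d_{TV}\bigl(\{\sigma_R\mid\sigma_S\},\ \texttt{BP}(\sigma_S,\tau)\bigr),
\]
the worst-case expected error of \texttt{BP} against an adversary that may reset only the leaves in $T$, measured against the posterior of the root given the leaves it cannot touch (this is the quantity controlled by Theorem~\ref{thm:const-corrupt}), where $(\sigma_S,\tau)$ is the leaf configuration agreeing with $\sigma_L$ on $S$ and equal to $\tau$ on $T$. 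The semirandom adversary which, after drawing corruptible set $T$ and observing $\sigma_L$, plays the pointwise maximizer above is a legal $\rho$-semirandom adversary, so Theorem~\ref{thm:const-corrupt} applied with error parameter $\delta/2$ (at the cost of enlarging $C$ by an absolute constant) yields $\EE_{T}\,\Phi(T)\le\delta/2$, the expectation over the random corruptible set. By Markov's inequality, $\mathcal T:=\{T:\Phi(T)\le\delta\}$ satisfies $\PP[T\in\mathcal T]\ge\tfrac12$.

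\textbf{Counting, and identifying $\mathcal S$.} Because $\rho<\rho_0=O(\eps)<\tfrac12$, every individual outcome obeys $\PP[T=T_0]=\rho^{|T_0|}(1-\rho)^{N-|T_0|}\le(1-\rho)^N$, hence
\[
|\mathcal T|\ \ge\ \frac{\PP[T\in\mathcal T]}{(1-\rho)^{N}}\ \ge\ \tfrac12(1-\rho)^{-N}\ \ge\ \tfrac12 e^{\rho N}\ =\ 2^{\Omega_{\eps,d}(d^t)}.
\]
Put $\mathcal S:=\{\overline{T}:T\in\mathcal T\}$, so $|\mathcal S|=|\mathcal T|\ge 2^{\Omega_{\eps,d}(d^t)}$, and let $t_0(d,\delta)$ be the threshold supplied by Theorem~\ref{thm:const-corrupt}. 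It remains to check \texttt{BP} solves the $\mathcal S$-misspecified problem. Fix $S=\overline{T}\in\mathcal S$ and any $\mu'$ with $\mu'|_S=\mu|_S$. Realize $\mu'$ as the output law of the semirandom adversary with corruptible set $T$ which keeps $x_S=\sigma_S$ and, discarding $\sigma_T$, resamples $x_T\sim\mu'(\cdot\mid\sigma_S)$ with fresh randomness (well-defined since $\mu'(\sigma_S)=\mu(\sigma_S)>0$); the identity $\mu|_S=\mu'|_S$ makes the output law exactly $\mu'$, and this adversary alters only leaves in $T$, so (since $\Phi(T)$ upper-bounds the error of every, possibly randomized, adversary confined to $T$) its expected error is at most $\Phi(T)\le\delta$. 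As $\{X_0\mid X_S=x_S\}$ is the broadcast posterior $\{\sigma_R\mid\sigma_S\}$, a function of $x_S$ left untouched by this adversary, this is precisely the statement $\EE_{x\sim\mu'}\,d_{TV}(\texttt{BP}(x),\{X_0\mid X_S=x_S\})\le\delta$, completing the proof.

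\textbf{Main difficulty.} There is no substantive obstacle here: Theorem~\ref{thm:const-corrupt} carries the load. The two points that need a little attention are (i) verifying that both the worst-within-$T$ strategy and the resampling strategy are legitimate $\rho$-semirandom adversaries, so that Theorem~\ref{thm:const-corrupt} applies to them and $\Phi(T)$ is an honest bound against every adversary confined to $T$; and (ii) the counting step, whose whole content is the observation that a constant-probability event under a law of entropy $\Omega(d^t)$ must comprise exponentially many atoms. The harmless use of $\delta/2$ (hence a marginally larger constant $C$) in the first step can be avoided by instead stating the corollary with error $(1+o(1))\delta$.
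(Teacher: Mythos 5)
Your proposal is essentially the paper's own argument: apply Theorem~\ref{thm:const-corrupt}, use Markov's inequality to find a large family of corruptible sets against which \texttt{BP} has small error, and argue that each misspecification $\mu'$ with $\mu'|_S=\mu|_S$ can be simulated by an adversary confined to $T=\overline{S}$. Your counting step is in fact somewhat cleaner than the paper's: you bound $|\mathcal T|$ directly via the maximum atom probability $(1-\rho)^N$ of the Bernoulli measure, whereas the paper detours through a uniform distribution on subsets with $|S|\in[\rho d^t/2,2\rho d^t]$ and a binomial-coefficient lower bound.

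One step is compressed to the point of being stated inaccurately. You define $\Phi(T)$ with target $\{\sigma_R\mid\sigma_S\}$ and assert parenthetically that this ``is the quantity controlled by Theorem~\ref{thm:const-corrupt}.'' It is not: the theorem bounds $\EE\, d_{TV}(\{\sigma_R\mid\sigma_L\},\texttt{BP}(A(\sigma_L)))$, conditioning the reference posterior on \emph{all} of $\sigma_L$, including the uncorrupted values $\sigma_T$. The two targets are reconciled by convexity of total variation together with the fact that $\max_\tau$ and the inner expectation over a fresh $\sigma'_T\sim\mu(\cdot\mid\sigma_S)$ can be exchanged (in the favorable direction): since $\{\sigma_R\mid\sigma_S\}=\EE_{\sigma'_T\mid\sigma_S}\{\sigma_R\mid\sigma_S,\sigma'_T\}$,
\[
\Phi(T)\ \le\ \EE_{\sigma_S}\EE_{\sigma'_T\mid\sigma_S}\max_{\tau}\ d_{TV}\bigl(\{\sigma_R\mid\sigma_S,\sigma'_T\},\texttt{BP}(\sigma_S,\tau)\bigr)\ =\ \EE_{\sigma_L}\max_{\tau}\ d_{TV}\bigl(\{\sigma_R\mid\sigma_L\},\texttt{BP}(\sigma_S,\tau)\bigr),
\]
and only the right-hand side is what Theorem~\ref{thm:const-corrupt} directly controls. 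The paper's deferred proof makes exactly this Jensen step explicit; your write-up should too, since without it the identification of $\Phi(T)$ with the theorem's quantity is false as stated, even though the final bound $\EE_T\Phi(T)\le\delta$ is correct. With that one sentence added, the argument is complete, and as you note the error budget $\delta/2$ is not even needed.
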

The proof of this corollary constructs a $\rho$-semi-random adversary which can sample from a misspecification distribution $\mu'$.
We defer the proof to Section~\ref{subsec:defer}.

\subsection{Related Work}\label{sec:rel-work}

\paragraph{Algorithmic Robust Statistics}
Algorithmic robust statistics in high dimensions has developed rapidly in both statistics and computer science in the last decade.
This field has focused on parameter estimation, distribution learning, and prediction using (otherwise-)iid samples of which a small fraction have been maliciously corrupted.
The recent book \cite{diakonikolas2023algorithmic} provides a comprehensive overview.

We highlight one parallel between our results and robust supervised learning.
In PAC learning, the \emph{Massart noise} model \cite{massart2006risk} is a parallel to our semirandom adversary.
In the Massart noise model, for each labeled example $(x,f(x))$, the learner gets to see $(x,y)$, where $y$ is selected by flipping a coin which comes up heads with probability $\rho < 1/2$ and then allowing an adversary who sees $x$ the chance to make a (randomized) decision whether to let $y = f(x)$ or $y = 1-f(x)$.
The Massart noise model is an important middle ground between randomized noise models (e.g. \emph{random classification noise} \cite{kearns1990computational}) and nastier noise models (e.g. agnostic learning), and has recently led to several algorithmic advances \cite{diakonikolas2019distribution,diakonikolas2020learning}.

A couple of recent works in algorithmic robust statistics design ``robustified'' versions of belief propagation or its dense-graph analogue, approximate message passing \cite{ivkov2023semidefinite,liu2022minimax}, often in the context of adversarially-robust algorithms analogues of the ``linearized-BP'' algorithm for community detection (often called the ``nonbacktracking spectral algorithm'' \cite{abbe2018community}) in the stochastic blockmodel \cite{banks2021local,ding2022robust}.

Finally, our work has a parallel in \cite{chen2022kalman}, which studies Bayesian inference with corruption in a Gaussian time-series setting (Kalman filtering).
This work also finds a strong information-theoretic impossibility result for corruptions at adversarially-chosen times (corresponding for us to adversarially-chosen leaf locations), and an efficient inference algorithm when corruptions are made at random times but with knowledge of the rest of the time series.

\paragraph{Broadcasting on Trees}
Broadcasting on trees is an important special case of the ferromagnetic Ising model, and the problem of reconstructing the root vertex label from observations at the leaves has been extensively studied in probability, statistical physics, and computer science \cite{mossel2004survey}.
It has significant connections to Markov-chain mixing times \cite{berger2005glauber,martinelli2004glauber}, phylogenetic reconstruction \cite{daskalakis2006optimal}, and community detection \cite{mossel2015reconstruction}.
Of particular interest because of an important application to inference in the stochastic block model \cite{MNS16,YP22} is the ``robust reconstruction'' problem, originally introduced in \cite{janson2004robust} to study sharpness of the Kesten-Stigum phase transition.
From the perspective of our work, this is the root-inference problem in the presence of a random adversary.

\paragraph{Robust Bayesian Inference}
Ensuring that Bayesian inferences are robust to inaccuracies in the choices of priors and models is a longstanding concern dating at least to the 1950s \cite{good1950probability,berger1986robust}, with too vast a literature to survey here.
Well-studied approaches involve choosing flat-tailed or noninformative priors and placing a hyperprior on a family of models to capture uncertainty about the true model.

\subsection{Overview of Proofs}\label{sub-sec:overview}


\paragraph{Optimal recovery from $\rho$-semirandom adversary}
In Section~\ref{sec:small-eps}, we prove Theorem~\ref{thm:const-corrupt} in the regime of small $\eps$, that is $\eps \leq \eps^{*}$ for some small $\eps^{*}$. In Section~\ref{subsec:large-eps}, we finish the proof of Theorem~\ref{thm:const-corrupt} by addressing the case that $\eps >\eps^*$. Both proofs use the following template.
The belief propagation algorithm produces, for each vertex $u$, a ``belief'' $Z_u \in [-1,1]$, which implicitly specifies an inferred posterior distribution $\{ \sigma_u \, | \, \sigma_{\text{decendents}(u)} \}$ by specifying the bias of that distribution.
$Z_u$ is a function of $Z_{u1},\ldots,Z_{ud}$, where $u1,\ldots,ud$ are children of $u$ and $Z_u = \sigma_u$ for a leaf-vertex $\sigma$.

Let $X_u$ denote the belief which would have been produced by BP if it were run on a non-corrupted leaf spins $\sigma_L$, and $Z_u$ the belief produced when BP is run with corrupted leaf spins.
Our goal will be to show that $|X_u - Z_u| < \delta$ for $\delta$ as small as we like, so long as $u$ is at large-enough height in the tree.

We begin by showing that when $d \eps^2$ exceeds Kesten-Stigum threshold by a $O(\log d)$ multiplicative factor, the difference $|Z_u - X_u|$ in our estimated belief $Z_u$ for any vertex $u$ at height $1$ is moderately close to the actual ``ground truth'' belief $X_u$. That is, we show that \[\EE \max_{\sf adversary}|Z_u - X_u| = O(\eps) \, , \]
where the expectation is taken over the randomness in the broadcast process and also the allowed corruption locations $x_u$ in the $\rho$-semirandom adversary (Lemma~\ref{lem:contract-small}), and $\max_{\sf adversary}$ denotes maximizing over choices of leaf-spin flips made by the $\rho$-semirandom adversary at the allowed locations $x$.

Next, we employ a contraction argument to show that this ``worst case'' perturbation of beliefs contracts as we move up the broadcast tree. More precisely, in Lemma~\ref{lem:contract-induct} we basically show that if $u$ is the parent of $ui$ then 
\[ \EE \max_{\sf adversary}|Z_u - X_u| \leq \frac{1}{2} \EE \max_{\sf adversary}|Z_{ui} - X_{ui}|. \]
This would then show that by taking a sufficiently large tree, we are able to recover the belief at the root to arbitrarily high precision. 

We note that \cite{MNS16} employs a similar contraction argument. However, the random adversary of \cite{MNS16} flips each leaf spin independently. In comparison, our $\rho$-semirandom adversary can introduce new long-range correlations between the leaves in our broadcast tree.

Our contraction argument uses a first order Taylor expansion of the belief propagation function. Fix a vertex $u$ with children $u1, \ldots, ud$. In Lemma~\ref{lem:contract-induct}, we effectively show that there is some $f \colon [-1,1] \to \RR$ which captures the effect of each $Z_{ui}$ on $BP(Z_{u1},\ldots,Z_{ud})$ in the sense that
\begin{align*}
    \EE \max_{\sf adversary}|Z_u - X_u| &= \EE \max_{\sf adversary} |BP(Z_{u1}, \ldots, Z_{ud}) - BP(X_{u1}, \ldots, X_{ud})| \\
    &\leq \EE \max_{\sf adversary} \left| \prod_{i=1}^{d} f(Z_{ui}) - \prod_{i=1}^{d} f(X_{ui}) \right| \\
    &\leq \EE \left[ \max_{\sf adversary}\sum_{i=1}^{d} \max \{ |f'(Z_{ui})|,  |f'(X_{ui})|\}  \cdot |X_{ui} - Z_{ui}| \cdot \max \left\{ \prod_{j \neq i}f(Z_{uj}), \prod_{j \neq i}f(X_{uj}) \right\} \right]
\end{align*}
where the second inequality above follows from an application of the mean value theorem (plus some additional facts about monotonicity of $f$).

Now, if $|X_{ui} - Z_{ui}|$ were independent of $Z_{uj},X_{uj}$ (conditioned on $\sigma_u$), as it would be in the random adversary setting, we could modify the above chain of inequalities to condition each one on $\sigma_u$ and then use this independence to separate the $|X_{ui} - Z_{ui}|$ term from the $\max \{ \prod_{j \neq i} f(Z_{uj}),\prod_{j \neq i} f(X_{uj}) \}$ term. As long as we could show that the latter term was small, we would be able to obtain a contraction in this way.

But because of long range correlations introduced by the adversary, even though $|X_{ui} - Z_{ui}|$ is expected to be small by the induction hypothesis, it is conceivable that the adversary can coordinate its flips in a way that blows up $\max \{ |f'(Z_{ui})|,  |f'(X_{ui})|\}  \max \left\{ \prod_{j \neq i}f(Z_{uj}), \prod_{j \neq i}f(X_{uj}) \right\} $.
To circumvent this, we carefully reintroduce independence by splitting the $\rho$-semi-random adversary into several \emph{independent} ``local'' adversaries who can only see small subtrees. Formally, this corresponds to continuing the above with the bound
\begin{align*}
    &\leq \EE \left[ \sum_{i=1}^{d} \left ( \max_{\sf adversary}\max \{ |f'(Z_{ui})|,  |f'(X_{ui})|\}  \cdot |X_{ui} - Z_{ui}| \right ) \cdot \left ( \max_{\sf adversary} \max \left\{ \prod_{j \neq i}f(Z_{uj}), \prod_{j \neq i}f(X_{uj}) \right ) \right\} \right]
\end{align*}

Strengthening the induction hypothesis to include $\EE \max_{\sf adversary}|Z_{uj} - X_{uj}| = O(\eps)$ will allow us to bound remaining terms above involving $f$ and $f'$, leading to our contraction.

\paragraph{Information-theoretic lower bound: Proof of Theorem~\ref{thm:lower-bound-intro}}
To prove that a $\rho$-fraction adversary makes posterior inference impossible, we construct a coupling between the distributions $\{ \sigma_L \, | \, \sigma_R = 1 \}$ and $\{ \sigma_L' \, | \, \sigma_R' = -1\}$ such that with all but exponentially-small probability, $\sigma_L$ and $\sigma_L'$ differ on a $d^{-\Omega(t)}$-fraction of coordinates.
The key observation is that, for a spin $\sigma_u$ and children $\sigma_{u1},\ldots,\sigma_{ud}$, the distributions $\{ \sigma_{u1},\ldots,\sigma_{ud} \, | \, \sigma_u = 1 \}$ and $\{ \sigma_{u1},\ldots, \sigma_{ud} \, | \, \sigma_u = -1 \}$ have Wasserstein distance roughly $\eps d$.
If we are allowed to flip an $\eps$-fraction of coordinates, we can couple the distributions successfully with high probability.
This $\eps$-fraction propagates down a height-$t$ tree to become an $\eps^t$ fraction of necessary flips.

This argument can even be adapted to the $\rho$-semirandom adversary when $\rho$ is $\Omega(\eps)$ (see Theorem~\ref{thm:semi-random-info}), by showing that the coupling needs only to flip signs which the $\rho$-semirandom adversary is allowed allowed to flip.



\paragraph{Spread adversary}

In Theorem~\ref{thm:well-spread} we show that for every $c>0$ there exists $k$ such that if an adversary makes at most $c$ corruptions in every height-$k$ subtree of the broadcast tree, then we can optimally infer the root. 

The algorithm for Theorem~\ref{thm:well-spread} proceeds in two stages: first we run a ``noise injection'' phase -- flipping each leaf node independently with small probability -- and then we run belief propagation. We appeal to the bounded-sensitivity property of the posterior inference function in Theorem~\ref{thm:c-flip} (which is where the noise injection phase arises) in order to establish the base case of the contraction. The inductive step of the contraction is identical to that of Theorem~\ref{thm:const-corrupt}.

\section{Preliminaries}\label{sec:prelim}
For a set $\Omega$, we write $\Delta(\Omega)$ to denote the space of probability distributions over $\Omega$. If $\Omega = \{ \pm 1\}$, a distribution $\nu \in \Delta(\{ \pm 1\})$ is also associated to a \emph{belief} $B = \EE_{b \sim \nu} b$.
For $\nu, \nu' \in \Delta(\{\pm 1\})$, clearly $d_{TV}(\nu,\nu') \leq O(|B - B'|)$ where $B,B'$ are the respective beliefs.

For a random variable $X$ and event $E$, we write $\{ X \, | \, E \}$ for the distribution of $X$ conditioned on $E$. For distributions $\mu, \nu$, we write $d_{TV}(\mu, \nu)$ for the total variation distance between $\mu$ and $\nu$. We write $\cD_{d, \eps, t}$ to denote the random process of generating spins on depth $t$ $d$-regular trees according to the broadcast on tree process with parameters $d, \eps$. In an abuse of notation, we will often write $(\sigma_R, \sigma_L) \sim \cD_{d, \eps, t}$ to mean a sample from the marginal distribution of $\cD_{d, \eps, t}$ on the spins of the root $R$ and the $d^t$ spins of the leaves $L$.

\section{$\rho$-semi-random Adversary}\label{sec:semi-rand}
In this section we prove our main result, Theorem~\ref{thm:const-corrupt}. We make no attempt at optimizing the relevant constants and prove Theorem~\ref{thm:const-corrupt} with
\begin{equation}
    \label{eq:parameter-bounds}
     t_0 =  \log(d) + \log(\delta^{-1}) \,, \,  \text{and}\quad \rho = \frac{\eps}{4}.
\end{equation}

Given the output of a broadcast on tree process that was run up till depth $t$, we recursively apply belief propagation until we reach the root and output the belief thus obtained. Specifically, suppose the spins of the leaves are given by $Z_{i, 0}$ ($1 \leq i \leq d^{t}$), then recursively define 
\[ Z_{j,i} = BP(Z_{dj+1, i-1}, Z_{dj+2, i-1}, \ldots, Z_{d(j+1)-1, i-1}) \]
for $1 \leq i \leq t$ and $1 \leq j \leq d^{i}$, where $BP \colon \RR^d \to \RR$ is the belief propagation function given by 
\[ BP(X_1, \ldots, X_d) = \frac{\prod_{i=1}^{d}(1 + \varepsilon X_{i}) - \prod_{i=1}^{d}(1 - \varepsilon X_{i})}{\prod_{i=1}^{d}(1 + \varepsilon X_{i}) + \prod_{i=1}^{d}(1 - \varepsilon X_{i})}. \] 

We argue separately in the cases of small and large $\eps$.

\subsection{Small-$\eps$ case}
\label{sec:small-eps}
In this section we will prove Theorem~\ref{thm:const-corrupt} when $\eps < \eps^*$ for some sufficiently small constant $\eps^*$; we capture this in Lemma~\ref{lem:small-eps}.
This small-$\eps$ case already captures the main ideas in the proof of Theorem~\ref{thm:const-corrupt}; we defer the large-$\eps$ case to Section~\ref{subsec:large-eps}.
Before we state Lemma~\ref{lem:small-eps} we introduce some convenient notation.
\begin{definition}
    Given $d, \varepsilon, \rho, t$, for a function $f \, : \, \{ \pm 1 \}^{d^t} \rightarrow \RR$, let $\EE \max_{\sf{adversary}} f(x)$ be the expected value of:
    \begin{itemize}
        \item Sampling $\sigma_L \sim \mathcal{D}_{d,\eps,t}$ and a $0/1$-valued vector $x$ of length $d^t$ with entries independently equal to $1$ with probability $\rho$ and $0$ otherwise
        \item Given $\sigma_L$ and $x$, return $\max f(\sigma_L')$ where $\sigma_L$ and $\sigma_L'$ differ only on the coordinates indicated by $x$.
    \end{itemize}
\end{definition}

The main result in this section is: 
 \begin{lemma}\label{lem:small-eps}
     There exist absolute constants $C$ and $\varepsilon^{*} > 0$ such that if $d\varepsilon^2 \geq C \log(d)$ and $\varepsilon \leq \varepsilon^{*}$ then for any $\delta > 0$,
     with $\rho(\eps,d), t_0(\delta,d)$ as in \eqref{eq:parameter-bounds}, for any $t \geq t_0$,
    \[\EE \max_{\sf{adversary}} {|X_{\sf{root}, t} - Z_{\sf{root}, t}|}  \leq \delta.\]
 \end{lemma}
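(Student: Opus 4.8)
The plan is to set up an induction on the height $h$ of the vertex, showing that the quantity
\[
\Delta_h := \EE \max_{\sf adversary} |X_{u,h} - Z_{u,h}|
\]
(for $u$ a generic vertex at height $h$, noting by symmetry this does not depend on which vertex) satisfies both $\Delta_h = O(\eps)$ for all $h \geq 1$ and $\Delta_h \leq \tfrac12 \Delta_{h-1}$ once $h \geq 2$. Together these give $\Delta_t \leq 2^{-(t-1)} \cdot O(\eps) \leq \delta$ once $t \geq t_0 = \log d + \log \delta^{-1}$, which is exactly the claimed bound. First I would invoke Lemma~\ref{lem:contract-small} (the ``base case'' lemma sketched in the overview) to get $\Delta_1 = O(\eps)$; this is where the hypothesis $d\eps^2 \geq C \log d$ enters, ensuring that a single level of belief propagation applied to a $\rho$-corrupted bottom layer only moves the belief by $O(\eps)$ in expectation. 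This should already be available as a cited lemma earlier in the paper.

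The heart of the argument is the inductive step, which is Lemma~\ref{lem:contract-induct} as previewed in Section~\ref{sub-sec:overview}. Fix $u$ at height $h$ with children $u1,\dots,ud$ at height $h-1$. I would write $BP(Z_{u1},\dots,Z_{ud}) = g\bigl(\prod_i f(Z_{ui})\bigr)$ for the appropriate scalar reparametrization $f(z) = \tfrac{1+\eps z}{1-\eps z}$ (so that $BP$ becomes a Möbius function of the product), and then apply the mean value theorem coordinate by coordinate, exactly as in the displayed chain of inequalities in the overview, to bound
\[
\max_{\sf adversary}|X_u - Z_u| \leq \sum_{i=1}^d \Bigl(\max_{\sf adversary} \max\{|f'(Z_{ui})|, |f'(X_{ui})|\}\,|X_{ui}-Z_{ui}|\Bigr)\cdot\Bigl(\max_{\sf adversary}\max\{\textstyle\prod_{j\neq i} f(Z_{uj}), \prod_{j\neq i} f(X_{uj})\}\Bigr).
\]
The crucial and somewhat delicate move is that the adversary's action factorizes across the $d$ disjoint subtrees rooted at $u1,\dots,ud$: the coins $x_v$ and the broadcast noise below distinct children are independent, so once we have pulled the max inside the sum (separating the $i$-th subtree from the rest), the two bracketed factors become \emph{independent} random variables and the expectation of the product splits. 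Then: the first factor is controlled by the strengthened induction hypothesis $\EE\max_{\sf adversary}|X_{ui}-Z_{ui}| = O(\eps)$ together with the bound $|f'| = O(\eps)$ on $[-1,1]$ in the small-$\eps$ regime, giving $O(\eps^2)$; and the second factor is $O(1/\eps)$ in expectation — this is where one needs a concentration estimate showing that the product $\prod_{j\neq i} f(\cdot)$ over the $d-1$ remaining children does not blow up, using again $d\eps^2 \gg \log d$. Multiplying, summing over $i$, gives $d \cdot O(\eps^2) \cdot O(1/\eps) = O(d\eps)$ — wait, that is too weak, so the real estimate must be sharper: the correct accounting is that $\EE\bigl[\max\{|f'(Z_{ui})|,|f'(X_{ui})|\}\,|X_{ui}-Z_{ui}|\bigr] = O(\eps)\cdot\Delta_{h-1}$ rather than $O(\eps^2)$, and the second factor contributes $O(1/(d\eps))$ per term after the correct normalization of $BP$, so the $d$ terms sum to $O(1/\eps)\cdot\Delta_{h-1}$, and choosing $C$ large enough forces the constant below $\tfrac12$. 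I would be careful to get these normalizations right.

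The main obstacle, as the authors themselves flag, is legitimately commuting $\max_{\sf adversary}$ with the sum over children and thereby restoring independence: a priori the adversary in subtree $i$ and the adversary in subtree $j$ see the same global information and could coordinate, so the inequality $\max_{\sf adversary}(A_i B_i) \leq (\max_{\sf adversary} A_i)(\max_{\sf adversary} B_i)$ with \emph{separate} maxima must be justified — it holds because $A_i$ depends only on the spins/coins in subtree $i$ while $B_i$ depends only on those in the other subtrees, so a joint adversary is no stronger than $d$ independent local adversaries here. Once that decoupling is in hand, the remaining work is the two quantitative estimates ($O(\eps)$ on the derivative-weighted perturbation, $O(1)$ on the product of the other children after normalization), both of which are Chernoff-type computations using $d\eps^2 \gg \log d$ and monotonicity/boundedness of $f$ on $[-1,1]$ for $\eps \leq \eps^*$; I would relegate these to the proof of Lemma~\ref{lem:contract-induct} and simply chain base case and inductive step to conclude.
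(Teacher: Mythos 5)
Your skeleton is the right one — induct on height, base case from Lemma~\ref{lem:contract-small}, inductive contraction via the mean value theorem on a product form of $BP$, and the crucial decoupling of the global $\rho$-semirandom adversary into independent ``local'' adversaries on disjoint subtrees so that $\EE$ of a product factorizes. All of that matches the paper. But there are two genuine gaps that would sink the argument as written.

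First, the reparametrization. You write $BP$ as a Möbius function of $\prod_i f(Z_{ui})$ with $f(z) = \tfrac{1+\eps z}{1-\eps z}$. The paper instead passes (via Claim~\ref{claim:small-1}, the elementary bound $|\tfrac{1}{1+x} - \tfrac{1}{1+y}| \leq \tfrac1p|x^p - y^p|$) to the \emph{square root} $f(x) = \sqrt{\tfrac{1-\eps x}{1+\eps x}}$, and this $p=\tfrac12$ power is not cosmetic. The contraction ultimately rests on $\EE[f(X_{uj}) \mid \sigma_u = +1] < 1$ so that $\prod_{j\neq i}\EE[f(X_{uj})\mid\sigma_u=+1]$ decays geometrically in $d$. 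For the square-rooted $f$, the Taylor expansion $f(\eps Y) \leq 1 - \eps Y + \tfrac35\eps^2 Y^2$ together with $\EE[Y\mid\sigma_u = +1] \gtrsim \eps$ gives an expectation $\approx 1 - \Omega(\eps^2) < 1$. Without the square root, $\tfrac{1-\eps x}{1+\eps x} = 1 - 2\eps x + 2\eps^2 x^2 + \cdots$, and the quadratic term dominates: the same computation yields an expectation $> 1$, so the product \emph{grows} and there is no contraction. (Your sign convention $\tfrac{1+\eps z}{1-\eps z}$ makes this even worse, since conditioned on $\sigma_u = +1$ each factor is typically close to $\tfrac{1+\eps}{1-\eps} > 1$.)

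Second, the quantitative accounting. You bound the ``other children'' product by $O(1/(d\eps))$ per term, but the paper's bound is $e^{-\Omega(d\eps^2)}$ — exponentially small, not polynomially — obtained by multiplying the $\leq 1 - \Omega(\eps^2)$ per-factor bounds across $d-1$ children. This matters because the sum over $d$ children contributes a factor of $d$, and the hypothesis $d\eps^2 \geq C\log d$ enters precisely to make $d \cdot e^{-\Omega(d\eps^2)} \leq d^{1-\Omega(C)} \ll \tfrac12$; your proposed $O(1/(d\eps))$ per-term estimate does not interact with $C$ in the right way, which is why you found yourself saying ``choosing $C$ large forces the constant below $\tfrac12$'' without a mechanism for that to happen. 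You also need the small refinement of replacing the MVT intermediate points with the extremal random variables $Y_{ui} = X_{ui} - |X_{ui}-Z_{ui}|$ and $Y'_{ui} = X_{ui} + |X_{ui}-Z_{ui}|$ (Eq.~\eqref{eq:monotone}), which lets you take the $\max$ over the adversary's sign choices cleanly; your $\max\{\prod f(Z), \prod f(X)\}$ is close but not quite the right object, since the MVT point is strictly between $X$ and $Z$.
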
 

Note that Lemma~\ref{lem:small-eps} establishes Theorem~\ref{thm:const-corrupt} in the case $\eps \leq \eps^{*}$ since $X_{\sf root, t}$ is the bias corresponding to the posterior distribution of the root.
Before we describe the main ideas behind the proof, we need one more piece of notation.
If $x \in \{0,1\}^{d^t}$ is the random bitstring produced by the $\rho$-semi-random adversary, we define:
\begin{definition}
    Given a vector $x \in \{0,1\}^{d^t}$ indexed by leaves of a $(d+1)$-ary tree of depth $t$ and an (internal) node $u$ in that tree, we write $x_{u}$ to denote the restriction of $x$ to the leaves of the subtree rooted at $u$.
\end{definition}

The key steps in our contraction argument are given by the following two lemmas. The first, Lemma~\ref{lem:contract-small} captures that even at height-$1$ internal nodes, the adversary cannot (in expectation) corrupt the beliefs by more than $O(1/d)$.
We will prove Lemma~\ref{lem:contract-small} later in this section.

\begin{lemma}\label{lem:contract-small}
    There exists an absolute constant $C$ such that for any $k \geq 1$, if $d\varepsilon^2 \geq C \log\left( \tfrac{d}{1-\eps} \right)$ then for any vertex $u$ at level $t-1$, we have
    \[ \EE \max_{\sf{adversary}} \left[\left|X_{u, 1} - Z_{u, 1}\right| \mathrel{\Big|}  \sigma_u = +1 \right] \leq \frac{(1-\eps)^{1/4}}{100 d}.\] 
\end{lemma}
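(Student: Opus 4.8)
\textbf{Plan of proof for Lemma~\ref{lem:contract-small}.}
The plan is to analyze a single height-$1$ node $u$ with children $u1,\dots,ud$, whose spins $\sigma_{u1},\dots,\sigma_{ud}$ are the leaves of the subtree. Conditioned on $\sigma_u=+1$, these children spins are independent, each equal to $+1$ with probability $\tfrac{1+\eps}{2}$. The true belief at $u$ is $X_{u,1}=BP(\sigma_{u1},\dots,\sigma_{ud})$ and the corrupted belief is $Z_{u,1}=BP(\sigma'_{u1},\dots,\sigma'_{ud})$, where $\sigma'_{ui}=\sigma_{ui}$ unless $x_{ui}=1$, in which case the adversary may flip it. Writing $S=\sum_i \sigma_{ui}$ and $S'=\sum_i \sigma'_{ui}$, note $BP(\pm1,\dots)$ depends only on the number of $+1$'s among its arguments: if $m$ of the arguments are $+1$ then $BP$ equals $g(m):=\tfrac{(1+\eps)^m(1-\eps)^{d-m}-(1-\eps)^m(1+\eps)^{d-m}}{(1+\eps)^m(1-\eps)^{d-m}+(1-\eps)^m(1+\eps)^{d-m}} = \tanh\!\big((m-\tfrac d2)\log\tfrac{1+\eps}{1-\eps}\big)$. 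So $X_{u,1}=g(M)$ and $Z_{u,1}=g(M')$ where $M=\tfrac{S+d}{2}$, $M'=\tfrac{S'+d}{2}$, and $|M-M'|\le |x_u|:=\sum_i x_{ui}$.

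First I would record the basic quantitative facts: $g$ is increasing and $1$-Lipschitz after the linear reparametrization, i.e.\ $|g(m)-g(m')|\le \tanh(\ell\,|m-m'|)\le \ell\,|m-m'|$ with $\ell=\log\tfrac{1+\eps}{1-\eps}\le 3\eps$ (for small $\eps$), but more usefully $|g(m)-g(m')|\le 1-g(\min(m,m'))^2$ times something, or simply that $g$ saturates: once $m-\tfrac d2$ is a modest multiple of $1/\ell$, $g(m)$ is exponentially close to $1$. Concretely, under $\sigma_u=+1$ the mean of $M$ is $\tfrac d2(1+\eps)$, i.e.\ $M-\tfrac d2$ has mean $\tfrac{d\eps}{2}$ and $M$ concentrates around this by Hoeffding/Chernoff. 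The crucial point is that when $d\eps^2\ge C\log d$, the "signal" $\tfrac{d\eps}{2}$ is large compared to both the standard deviation $\sqrt d/2$ and the adversary's budget, which is concentrated around $\rho d=\tfrac{\eps d}{4}$; so with overwhelming probability $M$ is already deep in the saturation regime of $g$ and stays there after the adversary's at most $|x_u|$ flips.

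The main calculation then splits the expectation $\EE\max_{\sf adversary}|g(M)-g(M')|$ into three regimes. (i) The "good" event $\cG$: $M-\tfrac d2\ge \tfrac{d\eps}{4}$ and $|x_u|\le \tfrac{d\eps}{8}$, say; on $\cG$, both $M$ and $M'$ exceed $\tfrac d2+\tfrac{d\eps}{8}$, so $1-g(M),1-g(M')\le 2\exp(-\tfrac{d\eps}{8}\cdot\ell)\le 2\exp(-\Omega(d\eps^2))$, hence $|g(M)-g(M')|\le 2\exp(-\Omega(d\eps^2))\le 2\exp(-\Omega(C\log d))$, which is $\le \tfrac{(1-\eps)^{1/4}}{200d}$ once $C$ is a large enough absolute constant. (ii) The event $M-\tfrac d2<\tfrac{d\eps}{4}$: by a Chernoff bound on the binomial $M$ this has probability $\le \exp(-\Omega(d\eps^2))\le d^{-\Omega(C)}$, and on it we use the trivial bound $|g(M)-g(M')|\le 2$, contributing $\le 2d^{-\Omega(C)}\le \tfrac{(1-\eps)^{1/4}}{400 d}$ for $C$ large. (iii) The event $|x_u|>\tfrac{d\eps}{8}$: since $\EE|x_u|=\rho d=\tfrac{\eps d}{4}$, wait—this is the delicate point, $\tfrac{\eps d}{8}$ is \emph{below} the mean, so I instead take the threshold in $\cG$ to be, say, $|x_u|\le \tfrac{d\eps}{2}$ (still below the signal $\tfrac{d\eps}{4}$? no). Let me recalibrate: pick the good event as $M-\tfrac d2\ge \tfrac{3d\eps}{8}$ and $|x_u|\le \tfrac{d\eps}{8}$ won't work since $\rho=\eps/4$; instead use $M-\tfrac d2\ge \tfrac{3 d\eps}{8}$ and $|x_u|\le \tfrac{d\eps}{8}$ is wrong. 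The clean choice: let $\mu_x=\rho d=\tfrac{\eps d}{4}$; take the good event to be $M-\tfrac d2 \ge \tfrac{7d\eps}{16}$ and $|x_u|\le \tfrac{5 d\eps}{16}$. Then $M'-\tfrac d2\ge \tfrac{7d\eps}{16}-\tfrac{5d\eps}{16}=\tfrac{d\eps}{8}$, so both beliefs are $\ge g(\tfrac d2+\tfrac{d\eps}{8})=1-2\exp(-\Omega(d\eps^2))$. The failure of $M-\tfrac d2\ge\tfrac{7d\eps}{16}$ is a lower-tail Chernoff event for a binomial with mean-deviation $\tfrac{d\eps}{2}$, probability $\le\exp(-\Omega(d\eps^2))$; the failure $|x_u|>\tfrac{5d\eps}{16}$ is an \emph{upper}-tail event for $\mathrm{Bin}(d,\rho)$ with mean $\tfrac{d\eps}{4}=\tfrac{4d\eps}{16}$ exceeding it by $\tfrac{d\eps}{16}$, so its probability is $\le\exp(-\Omega(d\eps^2/\rho))=\exp(-\Omega(d\eps))\le\exp(-\Omega(d\eps^2))$ (using $\eps\le 1$), fine. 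On each bad event bound $|g(M)-g(M')|$ by $2$. Summing, $\EE\max_{\sf adversary}|g(M)-g(M')|\le 2\exp(-\Omega(d\eps^2))+4\exp(-\Omega(d\eps^2))\le 6\exp(-\Omega(C\log d))$, which is at most $\tfrac{(1-\eps)^{1/4}}{100d}$ once the absolute constant $C$ in $d\eps^2\ge C\log(\tfrac{d}{1-\eps})$ is chosen large enough. (The $(1-\eps)^{1/4}$ factor is harmless: it is $\ge (1-\eps)^{1/4}$, and since the left side decays like $d^{-\Omega(C)}$ while the right side is $\gtrsim (1-\eps)^{1/4}/d$, taking $C$ large kills the gap; one should carry the $\log\tfrac{1}{1-\eps}$ term through the Chernoff bounds, which only helps.)

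The main obstacle, and the step that needs the most care, is regime (iii): making sure the adversary's \emph{maximum} budget $|x_u|$ (a random quantity concentrated near $\rho d=\tfrac{\eps d}{4}$) is provably smaller than the signal margin $\tfrac{d\eps}{2}$ of $M$ above $\tfrac d2$, with failure probability also only $\exp(-\Omega(d\eps^2))$. This is exactly where $\rho=\eps/4$ (a small constant times $\eps$) is used, and it is why the hypothesis is $d\eps^2\ge C\log d$ rather than $d\eps^2\ge 1$: we need the Chernoff exponents $\Omega(d\eps^2)$ to dominate $\log d$ with room to spare, since there are $d^{t-1}$ height-$1$ nodes and the later union/summation over the tree (in Lemma~\ref{lem:small-eps}) will want each node's contribution to be $O(1/d)$ or smaller. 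Everything else—Lipschitzness of $g$, saturation, the reparametrization by $\tanh$—is routine once these concentration bounds are in place.
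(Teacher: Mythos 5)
Your proposal is correct and follows essentially the same approach as the paper: decompose into a high-probability good event where $S$ (the uncorrupted sum) is large positive and the adversary's budget $|x_u|$ is small, so that both $X_{u,1}$ and $Z_{u,1}$ saturate near $1$ (you use the $\tanh$ form while the paper uses the $\prod((1-\eps Z)/(1+\eps Z))^{1/2}$ form via Claim~\ref{claim:small-1}, but these are algebraically the same function), plus a small-probability bad event controlled by Chernoff/Hoeffding where the trivial bound $|X-Z|\le 2$ suffices. The specific thresholds differ slightly from the paper's ($S\ge\eps d/2$ and $|x_u|\le\eps d/3$ in the paper versus your $7d\eps/16$ and $5d\eps/16$ for the counts), but the structure, the use of $\rho=\eps/4$, and the role of $d\eps^2\gtrsim\log(d/(1-\eps))$ are identical.
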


We note that because we are above the Kesten-Stigum regime, it follows that $\frac{(1-\eps)^{1/4}}{100 d} \leq \frac{\eps}{100}$, and so we can recover the beliefs at level $t-1$ up to error $O(\eps)$. 
The $O(\eps)$ bound on the RHS of Lemma~\ref{lem:contract-small} is crucial as the base case for an induction up the remaining $t-1$ levels of the tree.
Our induction hypothesis, captured below in Lemma~\ref{lem:contract-induct}, relies on the computed beliefs $Z_{u,r}$ at a higher level $r \geq 1$ being distance at most $\lesssim \eps$ to $X_{u,r}$ (in expectation).


For the both the base case and inductive step of the contraction argument, we will start by rearranging the belief propagation function definition of $X_{u,r}$ to $X_{u,r} = 2 / (1 + \prod_{i \leq d} \tfrac{1-\eps X_{ui,r-1}}{1+ \eps X_{ui,r-1}}) - 1$.
We obtain a simpler bound on $X_{ur} - Z_{ur}$ by then applying an elementary-calculus bound on $\tfrac{1}{1+x} - \tfrac{1}{1+y}$ to $|X_{u,r} - Z_{u,r}|$:
\begin{claim}[Proof in Section~\ref{subsec:defer}]
\label{claim:small-1}
    For any $0 < p < 1$ and any $x,y \geq 0$, we have $\left| \frac{1}{1+x} - \frac{1}{1+y} \right| \leq \frac{1}{p} |x^p - y^p|$. 
\end{claim}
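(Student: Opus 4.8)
The plan is to reduce the two-variable inequality $\left| \frac{1}{1+x} - \frac{1}{1+y} \right| \leq \frac{1}{p} |x^p - y^p|$ to a one-variable statement and then verify the latter by elementary calculus. First I would observe that the claim is symmetric in $x$ and $y$, so without loss of generality assume $x \geq y \geq 0$. Since both sides vanish when $x = y$, and since the right-hand side is nonnegative, it suffices to show that $g(x) := \frac{1}{p} x^p + \frac{1}{1+x}$ is nondecreasing on $[0,\infty)$ for every fixed $0 < p < 1$: indeed $g(x) \geq g(y)$ rearranges exactly to $\frac{1}{1+y} - \frac{1}{1+x} \leq \frac{1}{p}(x^p - y^p)$, which is the desired bound (the left side being $\left|\frac{1}{1+x} - \frac{1}{1+y}\right|$ under our ordering assumption).

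Next I would compute the derivative: for $x > 0$,
\[
g'(x) = x^{p-1} - \frac{1}{(1+x)^2},
\]
and the goal becomes showing $g'(x) \geq 0$, i.e. $x^{p-1} \geq (1+x)^{-2}$, i.e. $(1+x)^2 \geq x^{1-p}$. Since $0 < p < 1$ we have $0 < 1-p < 1$, so $x^{1-p} \leq \max\{1, x\} \leq 1 + x \leq (1+x)^2$ for all $x \geq 0$ (using $x^{1-p} \le 1$ when $x \le 1$ and $x^{1-p} \le x$ when $x \ge 1$). This gives $g'(x) \geq 0$ on $(0,\infty)$, hence $g$ is nondecreasing on $[0,\infty)$ by continuity at $0$, completing the argument. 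One should also handle the boundary case $y = 0$ directly if desired, though it follows from the monotonicity of $g$ as well.

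I do not anticipate a genuine obstacle here — the inequality is elementary once phrased via the auxiliary function $g$. The only mild subtlety is making sure the reduction to monotonicity correctly tracks the sign of $x^p - y^p$ versus $\frac{1}{1+x} - \frac{1}{1+y}$ (they move in opposite directions, which is exactly why packaging both terms into a single monotone $g$ works), and handling the non-differentiability of $x^p$ at $x=0$, which is harmless since we only need monotonicity on the closed interval and $g$ is continuous there.
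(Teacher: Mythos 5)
Your proof is correct and essentially the same as the paper's: both arguments reduce to the pointwise derivative bound $(1+x)^{-2} \leq x^{p-1}$ and then integrate (you package this as monotonicity of the auxiliary function $g(x) = \tfrac{1}{p}x^p + \tfrac{1}{1+x}$, the paper as $|f'| \leq g'/p$ plus the fundamental theorem of calculus, which is the same thing). The case analysis you use to verify $(1+x)^2 \geq x^{1-p}$ via $x^{1-p} \leq \max\{1,x\}$ matches the paper's split into $x \leq 1$ and $x \geq 1$.
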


So, it will be enough to bound $\left | \left( \prod_{i=1}^{d} \frac{1 - \varepsilon X_{ui,r}}{1 + \varepsilon X_{ui,r}} \right)^{1/2} - \left( \prod_{i=1}^{d} \frac{1 - \varepsilon Z_{ui,r}}{1 + \varepsilon Z_{ui,r}} \right)^{1/2} \right|$.
Conditioned on the values of $|X_{ui,r} - Z_{ui,r}|$, the extremal values for $\left( \prod_{i=1}^{d} \frac{1 - \varepsilon Z_{ui,r}}{1 + \varepsilon Z_{ui,r}} \right)^{1/2}$ are achieved when $\mathrm{sign}(X_{ui,r} - Z_{ui,r})$ is the same for all $i$; we introduce random variables $Y_{v,r} = X_{v,r} - |X_{v,r} - Z_{v,r}|$ and $Y'_{v,r} = X_{v,r} + |X_{v,r} - Z_{v,r}|$ to capture the hypothetical situation that all these signs have lined up.

The following Lemma~\ref{lem:contract-induct} says that both the resulting extremal values are close to $\left ( \prod_{i =1}^d \tfrac{1-\eps X_{ui,r}}{1 + \eps X_{ui,r}} \right )^{1/2}$.
The technical key underlying our contraction argument is that we get a bound relying only on the assumption that 
\[
\max_{i \leq d} \EE \left[\max_{\sf{adversary}}| Y_{ui,r} - X_{ui,r}|\mathrel{\Big|}  \sigma_{ui} = +1 \right]
\]
is in turn bounded.
It would be much easier to argue under the stronger assumption of a bound on
\begin{equation}
\label{eq:bad-assumption}
\EE \left [ \max_{i \leq d} \max_{\sf{adversary}}| Y_{ui,r} - X_{ui,r}|\mathrel{\Big|}  \sigma_{u} = +1 \right] \, .
\end{equation}
The latter would amount to the assumption that the adversary has been unable to have much effect on the beliefs computed at \emph{any} of the children of $u$.
But this would be too strong an assumption to use inductively -- the risk is that at level $r+1$ of the induction we would need an assumption on
\begin{equation}
\EE \left [ \max_{i,j \leq d} \max_{\sf{adversary}}| Y_{uij,r} - X_{uij,r}|\mathrel{\Big|}  \sigma_{u} = +1 \right] \, ,
\end{equation}
and so on.
This eventually would amount to a union bound aiming to show that $|Y_{u,k} - X_{u,k}| \leq \eps$ simultaneously for every height-$k$ vertex $u$, but this simply isn't true -- since the adversary gets to corrupt a constant fraction of leaves, there are subtrees of the broadcast tree where they could achieve, say, $|X_{u,k} - Y_{u,k}| > 0.1$.

In the non-adversarial setting, \cite{mossel2015reconstruction} can make a similar argument, avoiding an assumption like \eqref{eq:bad-assumption} by leveraging the independence of the beliefs $Z_{ui,k}$ conditioned on the sign $\sigma_u$.
Crucially for them, the signs of $Z_{ui,k} - X_{ui,k}$ conditioned on $\sigma_u$ are independent across $i=1,\ldots,d$, which leads important cancellations.
This independence fails in the adversarial setting, because the adversary's choices whether to corrupt a leaf vertex may depend on signs of other far-away leaves!
We carefully re-introduce independence by replacing the adversary who sees the whole tree with ``local'' adversaries who see only subtrees -- we must argue that these local adversaries are not too much weaker than the original $\rho$-semirandom adversary.
We show this argument in Section~\ref{subsec:small-eps-contract}.

\begin{lemma}\label{lem:contract-induct}
    Define $ Y_{v,r} = X_{v,r} - |X_{v,r} - Z_{v,r}| \, ,$ and $ Y_{v,r}' = X_{v,r} + |X_{v,r} - Z_{v,r}|$. There exist constants $\varepsilon^{*}, C', R > 0$ such that if $\varepsilon \leq \varepsilon^{*}$, $\varepsilon^2 d > C'$, the following holds: let $r \geq 1$, for any vertex $u$ at level $t-r-1$ with children $u1, \ldots, ud$ and suppose $\xi := \EE  \left[\max_{\sf{adversary}}| Y_{ui,r} - X_{ui,r}|\mathrel{\Big|}  \sigma_{ui} = +1 \right] \leq \frac{\eps}{10}$, then 
    \[ \EE\left[ \max_{\sf{adversary}} \left| \left( \prod_{i=1}^{d} \frac{1 - \varepsilon X_{ui,r}}{1 + \varepsilon X_{ui,r}} \right)^{1/2} - \left( \prod_{i=1}^{d} \frac{1 - \varepsilon Y_{ui,r}}{1 + \varepsilon Y_{ui,r}} \right)^{1/2} \right| \mathrel{\Big|}  \sigma_u = +1 \right] \leq d e^{-R(d-1)\eps^2} \xi. \] 
    Furthermore, the above holds with $Y'_{ui,k}, Y'_{ui,r}$ replacing $Y_{ui,k}, Y_{ui,r}$ respectively.
\end{lemma}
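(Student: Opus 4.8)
The plan is to analyze the quantity $g(Z) := \bigl(\prod_{i=1}^d \tfrac{1-\eps Z_{ui,r}}{1+\eps Z_{ui,r}}\bigr)^{1/2}$ as a product $\prod_i f(Z_{ui,r})$ with $f(z) = \bigl(\tfrac{1-\eps z}{1+\eps z}\bigr)^{1/2}$, and bound the difference $|g(X) - g(Y)|$ via a telescoping/mean-value-theorem argument exactly as sketched in the proof overview. First I would record the elementary facts about $f$: it is positive and monotone decreasing on $[-1,1]$, $f(0)=1$, and $|f'(z)| = \eps f(z)/(1-\eps^2 z^2) \le \eps f(z) \cdot O(1)$ (using $\eps \le \eps^*$ small and $|z|\le 1$; more carefully $f'(z) = -\eps f(z)/(1-\eps^2z^2)$). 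Telescoping gives
\[
|g(X) - g(Y)| \le \sum_{i=1}^d \max\{|f'(\zeta_i)|\} \cdot |X_{ui,r} - Y_{ui,r}| \cdot \max\Bigl\{\prod_{j\ne i} f(X_{uj,r}), \prod_{j\ne i} f(Y_{uj,r})\Bigr\},
\]
where $\zeta_i$ lies between $X_{ui,r}$ and $Y_{ui,r}$; since $Y_{ui,r} = X_{ui,r} - |X_{ui,r}-Z_{ui,r}|$ and $|Z_{ui,r}|\le 1$, both $|f'(\zeta_i)|$ and $\prod_{j\ne i}f(\cdot)$ can be bounded in terms of $f$ evaluated at the $X$'s and at the shifted points. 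Because $f$ is decreasing, $\prod_{j\ne i} f(Y_{uj,r}) \le \prod_{j\ne i} f(X_{uj,r} - |X_{uj,r}-Z_{uj,r}|)$, which I will further control using the crude bound $|X_{uj,r}-Z_{uj,r}| \le$ something (it is trivially at most $2$, but we want a bound that doesn't destroy the product).

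The crucial move — and the one I expect to be the main obstacle — is handling the $\max_{\sf adversary}$ and the expectation. As the overview explains, we cannot afford to bound $\EE[\max_i \max_{\sf adversary}|Y_{ui,r}-X_{ui,r}|]$; we only have the per-child bound $\xi = \EE[\max_{\sf adversary}|Y_{ui,r}-X_{ui,r}| \mid \sigma_{ui}=+1] \le \eps/10$. The resolution is to split the single $\rho$-semirandom adversary on the subtree at $u$ into $d$ independent "local" adversaries, one per child-subtree: the $i$-th local adversary sees only the leaves below $ui$. Since the coins $x_v$ are independent across leaves and the broadcast spins $\sigma_{uj,r}$ are conditionally independent given $\sigma_u$, once we pass to local adversaries the quantities $|Y_{ui,r}-X_{ui,r}|$ and $\prod_{j\ne i} f(Y_{uj,r})$ become conditionally independent given $\sigma_u$. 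Concretely, $\max_{\sf adversary} \le \prod_{i} (\text{local } \max_i)$ in the relevant sense (the true adversary is at least as constrained as the product of local ones is generous), so
\[
\EE\Bigl[\max_{\sf adversary}|g(X)-g(Y)| \,\Bigm|\, \sigma_u = +1\Bigr] \le \sum_{i=1}^d \EE\Bigl[\max_{\text{local }i}\bigl(|f'(\zeta_i)|\,|X_{ui,r}-Y_{ui,r}|\bigr)\Bigm|\sigma_u\Bigr]\cdot \EE\Bigl[\max_{\text{local}}\prod_{j\ne i} f(\cdot) \Bigm| \sigma_u\Bigr].
\]
I would then need a lemma (presumably available from the small-$\eps$ section or provable by the same techniques) that a local adversary restricted to a single child-subtree is no stronger than the global $\rho$-semirandom adversary on that subtree, so that $\EE[\max_{\text{local }i}|f'(\zeta_i)||X_{ui,r}-Y_{ui,r}| \mid \sigma_u] \lesssim \eps \cdot \EE[\max_{\sf adversary}|Y_{ui,r}-X_{ui,r}|\mid\sigma_{ui}] \le \eps\xi$ after averaging over $\sigma_{ui}$ conditioned on $\sigma_u$ (here I use $|f'| \lesssim \eps f \lesssim \eps$ and the fact that $f(\zeta_i) = O(1)$).

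The last ingredient is the product-of-$f$ term: I need $\EE[\max_{\text{local}} \prod_{j\ne i} f(Y_{uj,r}) \mid \sigma_u] \le e^{-R(d-1)\eps^2}$ (and similarly for the $Y'$ version and for the $X$ version, whichever is larger). By conditional independence given $\sigma_u$, this expectation factors as $\prod_{j\ne i} \EE[\max_{\text{local}} f(Y_{uj,r}) \mid \sigma_u]$. For each factor I would show $\EE[\max f(Y_{uj,r}) \mid \sigma_u] \le e^{-R\eps^2}$: intuitively, when $\sigma_u = +1$ the belief $X_{uj,r}$ is positive in expectation (correlated with $+1$ by the broadcast process above Kesten–Stigum), $f$ is decreasing, so $f(X_{uj,r})$ has mean strictly below $1$ by an amount $\Theta(\eps^2)$ coming from the $\eps$-bias of $\sigma_{uj}$ given $\sigma_u$ together with the standard second-moment estimate $\EE[X_{uj,r} \mid \sigma_u=+1] \ge c\eps$ and $\EE[X_{uj,r}^2]$ bounded below (this is exactly where $d\eps^2 > C'$ and $\eps \le \eps^*$ enter, via a Taylor expansion $f(z) = 1 - \eps z + \tfrac{\eps^2}{2}(\cdots) z^2 + \cdots$ and the bound that the adversary's allowed $\rho = \eps/4$-fraction of flips shifts this mean by at most, say, a $1/2$-factor). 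Multiplying $d-1$ such factors gives the claimed $e^{-R(d-1)\eps^2}$, and combining with $\sum_i \eps\xi = d\eps\xi$ yields the bound $d e^{-R(d-1)\eps^2}\xi$ after adjusting $R$. The $Y'$ case is symmetric since $Y'_{uj,r} = X_{uj,r} + |X_{uj,r}-Z_{uj,r}|$ still lies in $[-1,1]$ and $f$ is monotone. I expect the delicate point throughout is making the "local adversary is not much weaker than the global one" reduction precise — quantifying exactly how the product of independent local $\max$'s dominates the single global $\max$, and checking that the loss incurred is only a constant factor absorbable into $R$.
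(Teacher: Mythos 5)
Your proof is correct and follows essentially the same route as the paper: mean-value-theorem telescoping across the children, splitting the global $\max_{\sf adversary}$ into per-child-subtree local maxes to restore conditional independence given $\sigma_u$, invoking spin-flip symmetry to pass from conditioning on $\sigma_u$ to conditioning on $\sigma_{ui}$ in the $\xi$-term, and then a Taylor/second-moment estimate (in the paper: Claim~\ref{claim:small-3} plus Lemma~\ref{lem:36}, giving $\EE[X_{uj,r}\mid\sigma_u=+1]\geq 7\eps/8$) to show each factor $\EE[\max f(Y_{uj,r})\mid\sigma_u=+1]\leq e^{-R\eps^2}$.

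One small correction to how you have framed the difficulty: the step where you say you would ``need a lemma that a local adversary \ldots is no stronger than the global $\rho$-semirandom adversary'' is actually automatic and not where the work is. Since $|X_{ui,r}-Y_{ui,r}|$ depends only on the leaves of the subtree at $ui$, the global and local adversaries achieve \emph{exactly} the same maximum of that quantity; the nontrivial move is the one you do correctly identify, namely that replacing $\max_{\sf adv}(A\cdot B)$ by $(\max_{\sf adv}A)\cdot(\max_{\sf adv}B)$ only strengthens the adversary. The real content left implicit in your sketch is the quantitative version of ``$f(X_{uj,r})$ has mean strictly below $1$ by $\Theta(\eps^2)$,'' which the paper makes precise via the explicit inequality $f(x)\leq 1-x+\tfrac{3}{5}x^2$ and the bound $\EE[X_{uj,r}\mid\sigma_u=+1]\geq\tfrac{7}{8}\eps$; this is where the hypotheses $\xi\leq\eps/10$, $\eps\leq\eps^*$, and $d\eps^2>C'$ are all used to ensure the exponent is genuinely negative.
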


Given Lemmas~\ref{lem:contract-small} and~\ref{lem:contract-induct}, we show how to deduce Lemma~\ref{lem:small-eps}.


\begin{proof}[Proof of Lemma~\ref{lem:small-eps}]
   Let $\alpha = \eps/10$. We prove by induction on $r \geq 1$ that for any vertex $u$ at level $r$, we have 
   \begin{equation}\label{eq:non-asymp-small}
       \EE \left[ \max_{\sf{adversary}} |X_{u,t-r-1} - Z_{u,t-r-1}| \mathrel{\Big|} \sigma_u = +1 \right] \leq 2^{-(r-1)} \alpha.
   \end{equation}
   Having proved this, it will suffice to take $t_0 \asymp \log(\delta^{-1})$ and iteratively apply (\ref{eq:non-asymp-small}).

    The base case when $r= 1$ follows from Lemma~\ref{lem:contract-small}. Assume the claim is true for $r$ and we aim to prove the claim for $r+1$. Fix a vertex $u$ at level $t-(r+1)$ and let its children be $u1, \ldots, ud$. 

    
    By taking $p = \frac{1}{2}$ in Claim~\ref{claim:small-1}, it follows for any $k$ that 
\begin{align}\label{eq:polyrize}
    |X_{u,k+1} - Z_{u,k+1}| &= |BP(X_{u1, k}, \ldots, X_{ud,k}) - BP(Z_{u1,k}, \ldots, Z_{ud,k})| \notag \\
    &\leq 4 \left| \left( \prod_i \frac{1 - \varepsilon X_{ui,k}}{1 + \varepsilon X_{ui,k}} \right)^{1/2} - \left( \prod_i \frac{1 - \varepsilon Z_{ui,k}}{1 + \varepsilon Z_{ui,k}} \right)^{1/2} \right|.
\end{align}


By the monotonicity of $x \mapsto \tfrac{1- \eps x}{1+\eps x}$, we have the pointwise inequalities 
\[ \frac{1 - \varepsilon Y_{ui,k}'}{1 + \varepsilon Y_{ui,k}'} \leq \frac{1 - \varepsilon Z_{ui,k}}{1 + \varepsilon Z_{ui,k}} \leq \frac{1 - \varepsilon Y_{ui,k}}{1 + \varepsilon Y_{ui,k}} \]
and therefore 
\begin{align}\label{eq:monotone}
   &\left| \left( \prod_i \frac{1 - \varepsilon X_{ui,k}}{1 + \varepsilon X_{ui,k}} \right)^{1/2} - \left( \prod_i \frac{1 - \varepsilon Z_{ui,k}}{1 + \varepsilon Z_{ui,k}} \right)^{1/2} \right| \notag \\
   &\leq \max \left\{\left| \left( \prod_i \frac{1 - \varepsilon X_{ui,k}}{1 + \varepsilon X_{ui,k}} \right)^{1/2} - \left( \prod_i \frac{1 - \varepsilon Y_{ui,k}}{1 + \varepsilon Y_{ui,k}} \right)^{1/2} \right|, \left| \left( \prod_i \frac{1 - \varepsilon X_{ui,k}}{1 + \varepsilon X_{ui,k}} \right)^{1/2} - \left( \prod_i \frac{1 - \varepsilon Y_{ui,k}'}{1 + \varepsilon Y_{ui,k}'} \right)^{1/2} \right|  \right\}. 
\end{align}

That is, the ``worst case adversary'' is the one that is able to perturb the spins at level $t$ such that running belief propagation on these spins produces beliefs $\{ B_{ui,k}\}$ satisfying $\sgn(B_{ui,k} - Z_{ui,k}) = \sgn(B_{uj,k} - Z_{uj,k})$ for all $1 \leq i,j \leq k$. It follows that 
    \begin{align*}
        &\EE\left[\max_{\sf{adversary}}|X_{u,t -r -1} - Z_{u,t-r-1}| \mathrel{\Big |} \sigma_u = +1 \right]\\
        &\leq 4 \EE \left[ \max_{\sf{adversary}}  \left| \left( \prod_i \frac{1 - \varepsilon X_{ui,t-r}}{1 + \varepsilon X_{ui,t-r}} \right)^{1/2} -  \prod_i \left(\frac{1 - \varepsilon Z_{ui,t-r}}{1 + \varepsilon Z_{ui,t-r}} \right)^{1/2} \right|  \mathrel{\Big |} \sigma_u = +1 \right]  \\
        &\leq 4 \EE \left[ \max_{\sf{adversary}}  \left| \left( \prod_i \frac{1 - \varepsilon X_{ui,t-r}}{1 + \varepsilon X_{ui,t-r}} \right)^{1/2} -  \prod_i \left(\frac{1 - \varepsilon Y_{ui,t-r}}{1 + \varepsilon Y_{ui,t-r}} \right)^{1/2} \right|  \mathrel{\Big |} \sigma_u = +1 \right] \\
        &\qquad + 4 \EE \left[ \max_{\sf{adversary}}  \left| \left( \prod_i \frac{1 - \varepsilon X_{ui,t-r}}{1 + \varepsilon X_{ui,t-r}} \right)^{1/2} -  \prod_i \left(\frac{1 - \varepsilon Y_{ui,t-r}'}{1 + \varepsilon Y_{ui,t-r}'} \right)^{1/2} \right|  \mathrel{\Big |} \sigma_u = +1 \right]
    \end{align*}
    Applying Lemma~\ref{lem:contract-induct} to the above gives 
    \begin{align*}
          \EE\left[\max_{\sf{adversary}}|X_{u,t-r-1} - Z_{u,t-r-1}| \mathrel{\Big |} \sigma_u = +1 \right] &\leq 16d (2^{-r-1} \alpha) e^{-R(d-1)\eps^2}
    \end{align*}
    and by choosing $C$ sufficiently large such that $\eps^2 d > C \log(d)$, we can make the above smaller than $2^{-(r+1)-1}\alpha$ as desired.
\end{proof}

\subsubsection{Base case: level $t-1$}

In this subsection, we prove Lemma~\ref{lem:contract-small}. Fix a vertex $u$ at level $t-1$, and suppose the sum of the uncorrupted spins of its $d$ children is $S$ while the sum of the corrupted spins of its $d$ children is $S'$. 

\subparagraph{Proof idea: } The main idea is that a Chernoff bound gives $S, S' = \Theta(\eps d)$. Vaguely speaking, if most of the input to the belief function $BP$ function is $+1$, then we would expect the output of $BP$ to also be quite close to 1. By writing $|Z_{u,1} - X_{u,1}| \leq \left | \left( \prod_{i=1}^{d} \frac{1 - \varepsilon X_{ui,r}}{1 + \varepsilon X_{ui,r}} \right)^{1/2} - \left( \prod_{i=1}^{d} \frac{1 - \varepsilon Z_{ui,r}}{1 + \varepsilon Z_{ui,r}} \right)^{1/2} \right|$ \`a la Claim~\ref{claim:small-1}, we can convert $S, S' = \Omega(\eps d)$ and $\eps^2 d \gtrsim \log(d)$ to a $O(1/d)$ bound on the magnitude $|Z_{u,1} - X_{u,1}|$.  

\begin{proof}[Proof of Lemma~\ref{lem:contract-small}]
    Let $S$ denote the random variable given by the sum of the $d$ uncorrupted spins of the children of $u$ and let $S'$ be the sum of the $d$ corrupted spins of the children of $u$. Then we have that $\EE[S \, | \, \sigma_u = +1] = \eps d$, and in particular the Chernoff bound implies that for sufficiently large $d$ (which we can arrange by taking a sufficiently large $C$), we have
    \[ \PP\left[S \leq \frac{\eps d}{2} \, \mathrel{\big|} \sigma_u = +1 \right] \leq \exp(-\eps d/8) \leq (1 - \eps/2)^{d} \leq \frac{(1-\eps)^{1/4}}{800d}.\]
    By a Chernoff bound, since $\rho = \eps/4$ in (\ref{eq:parameter-bounds}), we also have for sufficiently large $d$,
    \begin{equation}\label{eq:prob-uncorrupt-maj}
    \mathbb{P}\left[ |x_{u}| \geq \frac{\varepsilon d}{3} \right] \leq \exp\left(-\Omega\left(\varepsilon d \right)\right) \leq (1 - \eps/2)^{d} \leq \frac{(1-\eps)^{1/4}}{800d}.
\end{equation}
    In particular, it follows that 
    \[
    \PP \left ( S \geq \frac{\eps d}{2} \text{ and }  \min_{\sf adversary} S' \geq \frac{\eps d}{6} \right ) \geq \left(1 - \frac{(1-\eps)^{1/4}}{800d} \right)^2 \geq 1 -\frac{(1-\eps)^{1/4}}{400d}.
    \]

    Next, by Claim~\ref{claim:small-1}, note that we can bound 
    \begin{align*}
        \EE \max_{\sf adversary}|Z_{u,1} - X_{u,1}| &\leq  2\EE \left[ \max_{\sf adversary}\left | \left( \prod_{i=1}^{d} \frac{1 - \varepsilon X_{ui,r}}{1 + \varepsilon X_{ui,r}} \right)^{1/2} - \left( \prod_{i=1}^{d} \frac{1 - \varepsilon Z_{ui,r}}{1 + \varepsilon Z_{ui,r}} \right)^{1/2} \right| \mathrel{\big |} S \geq \frac{\eps d}{2} \, , \, S' \geq \frac{\eps d}{6} \right] \\ &\quad \quad \quad  + 2 \cdot \frac{(1-\eps)^{1/4}}{400d}  \\
        &= 2\EE \left[\max_{\sf adversary} \left|\left( \frac{1 - \eps}{1+\eps} \right)^{S'/2} - \left( \frac{1 - \eps}{1+\eps}\right)^{S/2} \right| \mathrel{\big |} S \geq \frac{\eps d}{2} \, , \,  S' \geq \frac{\eps d}{6} \right] + \frac{(1-\eps)^{1/4}}{200 d} \\
        &\leq 2\left( \frac{1-\eps}{1+\eps} \right)^{\frac{\eps d}{4}} + 2\left( \frac{1-\eps}{1+\eps} \right)^{\frac{\eps d}{12}} +  \frac{(1-\eps)^{1/4}}{200 d} \\
        &\leq 4e^{-\Omega_C(\eps^2 d)} + \frac{(1-\eps)^{1/4}}{200 d} \\
        &\leq \frac{(1-\eps)^{1/4}}{100 d}
    \end{align*}
    by taking a sufficiently large $C$. 
\end{proof}

\subsubsection{Contraction in the presence of an adversary}
\label{subsec:small-eps-contract}

We turn to the proof of Lemma~\ref{lem:contract-induct}. In the proof, we relate $\left| \left( \prod_{i=1}^{d} \tfrac{1 - \varepsilon X_{ui,r}}{1 + \varepsilon X_{ui,r}} \right)^{1/2} - \left( \prod_{i=1}^{d} \tfrac{1 - \varepsilon Y_{ui,r}}{1 + \varepsilon Y_{ui,r}} \right)^{1/2} \right|$ with $|X_{ui,r} - Y_{ui,r}|$ (whose expected value is bounded by assumption) by applying the mean value theorem. Specifically, we write 
\begin{equation}\label{eq:MVT}
    \left| \left( \prod_{i=1}^{d} \frac{1 - \varepsilon X_{ui,r}}{1 + \varepsilon X_{ui,r}} \right)^{1/2} - \left( \prod_{i=1}^{d} \frac{1 - \varepsilon Y_{ui,r}}{1 + \varepsilon Y_{ui,r}} \right)^{1/2} \right| \leq \sum_{i=1}^{d}|X_{ui,r} - Y_{ui,r}| \max_{\substack{y_1 \in[Y_{u1,r}, X_{u1,r}]}} \frac{\partial f}{\partial x}(y_i) \prod_{j \neq i} f(y_j)
\end{equation}
where $f(x) = \left( \frac{1- \eps x}{1+ \eps x}\right)^{1/2}$. The following two claims allow us to bound the size of each of the terms in the derivative. We comment that the proof of Lemma~\ref{lem:contract-induct} strongly utilizes the independence of the locations in $x$ where a flip is allowed. This is because a priori we are only able to work with the expected value version of \eqref{eq:MVT}, and it is this independence that allows us to split the products and bound them term by term.  


\begin{claim}\label{claim:small-3}
   Let $f(x) = \left(\frac{1-x}{1+x} \right)^{1/2}$. There exists $c > 0$ such that if $|x| \leq c$, then we have $|f(x) | \leq 1 -  x + \frac{3x^2}{5}$.
\end{claim}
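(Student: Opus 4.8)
The claim asserts that $f(x) = \left(\frac{1-x}{1+x}\right)^{1/2}$ satisfies $|f(x)| \leq 1 - x + \tfrac{3x^2}{5}$ for $|x| \leq c$ and some absolute constant $c > 0$. Since $f$ is nonnegative on $(-1,1)$, the absolute value is immaterial and it suffices to bound $f(x)$ from above. The plan is a routine Taylor-expansion argument with explicit control of the remainder. First I would compute the Taylor expansion of $g(x) := \log f(x) = \tfrac12\bigl(\log(1-x) - \log(1+x)\bigr)$ around $x = 0$. Using $\log(1-x) = -x - \tfrac{x^2}{2} - \tfrac{x^3}{3} - \cdots$ and $\log(1+x) = x - \tfrac{x^2}{2} + \tfrac{x^3}{3} - \cdots$, the even-order terms cancel and we get $g(x) = -x - \tfrac{x^3}{3} - \tfrac{x^5}{5} - \cdots$, so in particular $g(x) \leq -x + O(x^3)$ for $|x|$ small; more precisely, for $|x| \leq \tfrac12$ one has $g(x) \leq -x + C_0 x^2$ for a small absolute constant $C_0$ (in fact one can take the $x^2$ coefficient as small as we like by shrinking $c$, since the true next term is cubic — but keeping an $x^2$ bound is cleaner and still suffices).

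Then I would exponentiate: $f(x) = e^{g(x)} \leq e^{-x + C_0 x^2}$. Now I expand the exponential. Writing $u = -x + C_0 x^2$, we have $e^u \leq 1 + u + \tfrac{u^2}{2} e^{|u|}$, and for $|x| \leq c$ with $c$ small, $|u| \leq 2|x|$ is small, so $e^{|u|} \leq 1 + o(1) \leq \tfrac{11}{10}$ say. Expanding, $e^u \leq 1 + (-x + C_0 x^2) + \tfrac{11}{20}(-x+C_0x^2)^2 = 1 - x + C_0 x^2 + \tfrac{11}{20}x^2 + O(x^3)$. The dominant quadratic contribution comes from the $\tfrac12 x^2$ term in $e^{-x}$, which after the correction factors is at most $\tfrac{11}{20}x^2 + (\text{lower order})$. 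The target coefficient is $\tfrac{3}{5} = \tfrac{12}{20}$, which strictly exceeds $\tfrac{11}{20}$, leaving room to absorb the $C_0 x^2$ term and the $O(x^3)$ remainder by choosing $c$ small enough. Concretely, I would fix $c$ so that all the error contributions beyond $\tfrac{11}{20}x^2$ sum to at most $\tfrac{1}{20}x^2$, giving $f(x) \leq 1 - x + \tfrac{3}{5}x^2$ on $[-c,c]$.

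I do not anticipate any genuine obstacle here; this is an elementary single-variable estimate and the only mild care needed is bookkeeping the constants so that the final quadratic coefficient lands at or below $\tfrac{3}{5}$. The one point worth double-checking is the sign: the bound must be tight enough near $x \to 0^+$ to capture the correct linear term $-x$ (which it does, since $f'(0) = -1$), and it must hold for negative $x$ as well, where $1 - x + \tfrac35 x^2 > 1 \geq f(x)$ trivially once one notes $f$ is decreasing and $f(0) = 1$ — so the negative-$x$ case is actually free, and all the work is for $x \in [0,c]$. An alternative, even more hands-on route would be to bound $f$ directly: square the claimed inequality (valid since both sides are positive for small $x$) to reduce to showing $\frac{1-x}{1+x} \leq \bigl(1 - x + \tfrac35 x^2\bigr)^2$, i.e. $1 - x \leq (1+x)\bigl(1 - x + \tfrac35 x^2\bigr)^2$, which is a polynomial inequality in $x$ that one checks holds on a neighborhood of $0$ by expanding and comparing low-order coefficients (the constant and linear terms match, and the quadratic coefficient on the right is positive). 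Either approach works; I would present whichever is shorter in the writeup.
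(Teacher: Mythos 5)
Your proposal is correct, and in fact your closing ``alternative, even more hands-on route'' is precisely the paper's proof: square the inequality, clear the denominator to $1-x \leq (1+x)\bigl(1-x+\tfrac{3}{5}x^2\bigr)^2$, expand, and observe that the right-hand side equals $1 - x + \tfrac{x^2}{5} + O(x^3) \geq 1-x$ for $|x|$ small. Your primary route through $\log f$ also works: the even-order Taylor terms of $\tfrac12\bigl(\log(1-x)-\log(1+x)\bigr)$ cancel, giving $g(x) = -x - \tfrac{x^3}{3} - \cdots$, and exponentiating yields a quadratic coefficient $\tfrac12 + o(1) < \tfrac35$, so the slack absorbs the cubic remainder once $c$ is small. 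The logarithmic route is a touch more bookkeeping (you need a uniform bound on $e^{|u|}$ and care with the remainder constant), while the paper's squaring route reduces immediately to a one-line polynomial comparison; both land in the same place, and the paper simply chose the shorter one, as you anticipated you would too.
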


\begin{claim}\label{claim:small-4}
    There are $\varepsilon^{*}, \kappa > 0$ such that for all $\varepsilon < \varepsilon^{*}$, all $x \in [-2, 2]$ and all $a \in [-2, 2]$, we have 
    \[ \left|\frac{\partial}{\partial x} \left(\frac{1 - \varepsilon (x+a)}{1 + \varepsilon (x +a) } \right)^{1/2} \right| \leq \kappa. \]
\end{claim}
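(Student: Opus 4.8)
The plan is to compute the derivative explicitly and observe that on the stated compact domain the argument $\varepsilon(x+a)$ stays bounded away from $\pm 1$ provided $\varepsilon^*$ is small enough, which makes everything finite and uniformly bounded. First I would write $u = \varepsilon(x+a)$, so that $|u| \leq 4\varepsilon \leq 4\varepsilon^*$, and the function in question is $g(u) = \left(\frac{1-u}{1+u}\right)^{1/2}$. A direct computation gives $g'(u) = -\frac{1}{(1-u)^{1/2}(1+u)^{3/2}}$, and by the chain rule
\[
\frac{\partial}{\partial x}\left(\frac{1-\varepsilon(x+a)}{1+\varepsilon(x+a)}\right)^{1/2} = \varepsilon \cdot g'(u) = -\frac{\varepsilon}{(1-u)^{1/2}(1+u)^{3/2}}.
\]

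Next I would bound the denominator from below. Choosing $\varepsilon^* \leq \tfrac{1}{8}$ ensures $|u| \leq 4\varepsilon^* \leq \tfrac{1}{2}$, so $1 - u \geq \tfrac12$ and $1+u \geq \tfrac12$, hence $(1-u)^{1/2}(1+u)^{3/2} \geq (\tfrac12)^{1/2}(\tfrac12)^{3/2} = \tfrac14$. Therefore
\[
\left|\frac{\partial}{\partial x}\left(\frac{1-\varepsilon(x+a)}{1+\varepsilon(x+a)}\right)^{1/2}\right| \leq \frac{\varepsilon}{1/4} = 4\varepsilon \leq 4\varepsilon^* \leq \tfrac12,
\]
so the claim holds with, say, $\kappa = \tfrac12$ (or even $\kappa = 1$) for any $\varepsilon^* \leq \tfrac18$.

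There is essentially no obstacle here: the only point requiring the slightest care is confirming that the domain $x,a \in [-2,2]$ together with small $\varepsilon$ keeps the base of the square root strictly positive, so that the derivative exists and is continuous, and then reading off a crude uniform bound. Any sufficiently small $\varepsilon^*$ (and the $\varepsilon^*$ may be taken consistent with the one in Claim~\ref{claim:small-3} and in Lemma~\ref{lem:contract-induct} by taking the minimum) works, and $\kappa$ can be taken to be an absolute constant independent of $d$.
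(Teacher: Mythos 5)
Your proof is correct and follows essentially the same route as the paper's: compute the derivative explicitly via the chain rule, observe that the substitution $u=\varepsilon(x+a)$ keeps $|u|$ small on the stated compact domain once $\varepsilon^*$ is small, and read off a uniform bound on the resulting expression. Your version is in fact slightly cleaner — you carry the chain-rule factor $\varepsilon$ explicitly and give a concrete $\kappa$, whereas the paper's displayed derivative formula drops the $\varepsilon$ factor and has a minor constant slip — but the underlying idea is identical.
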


We defer the proofs of Claims~\ref{claim:small-3} and \ref{claim:small-4} to the Appendix. We also quote without proof the following lemma from \cite{MNS16}. 

\begin{lemma}[{\cite[Lemma 3.5]{MNS16}}]\label{lem:var-majority}
    For a vertex $u$ at level $s$, let $u_1, \ldots, u_{d^r}$ denote its $d^r$ children at level $s+r$. Let $S_r$ denote the sum of the spins $\sum_{i=1}^{d^r} \sigma_{u_i}$. Then 
    \[ \mathrm{Var}[ S_r \, | \, \sigma_u = +1] = (1-\eps)^2 d^r \frac{(\eps^2 d)^r - 1}{\eps^2d - 1}.\]
\end{lemma}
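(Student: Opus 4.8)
The plan is a direct first- and second-moment computation that leans only on the Markov structure of the broadcast process. I would open by recording two correlation facts. For a parent $p$ and child $c$, $\EE[\sigma_c \mid \sigma_p] = \tfrac{1+\eps}{2}\sigma_p - \tfrac{1-\eps}{2}\sigma_p = \eps\sigma_p$, so iterating along a path shows $\EE[\sigma_v \mid \sigma_u] = \eps^{k}\sigma_u$ whenever $v$ is a descendant of $u$ at tree-distance $k$; in particular $\EE[S_r \mid \sigma_u = +1] = d^{r}\eps^{r}$. For two descendants $a,b$ of $u$ with least common ancestor $w$, the edges leaving $w$ toward $a$ and toward $b$ go to distinct children, so the two subtrees containing $a$ and $b$ are conditionally independent given $\sigma_w$; hence $\EE[\sigma_a\sigma_b \mid \sigma_w] = \eps^{\dist(w,a)}\eps^{\dist(w,b)}\sigma_w^2 = \eps^{\dist(a,b)}$, and since this quantity is deterministic it also equals $\EE[\sigma_a\sigma_b \mid \sigma_u = +1]$.

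Next I would expand the second moment and group by the least common ancestor. Writing $\EE[S_r^2 \mid \sigma_u = +1] = \sum_{i,j}\EE[\sigma_{u_i}\sigma_{u_j}\mid \sigma_u=+1]$, the $d^{r}$ diagonal terms each equal $1$. For $i \ne j$, if $\mathrm{lca}(u_i,u_j)$ is at distance $\ell$ from $u$ (so $0 \le \ell \le r-1$) then $\dist(u_i,u_j) = 2(r-\ell)$ and the term equals $\eps^{2(r-\ell)}$; the number of such ordered pairs is $d^{\ell}\cdot d^{\,r-\ell}(d-1)d^{\,r-\ell-1}$, where $d^{\ell}$ counts the choices of l.c.a.\ $w$, and $d^{\,r-\ell}(d-1)d^{\,r-\ell-1}$ counts the choices of $u_i$ among the $d^{\,r-\ell}$ level-$(s+r)$ descendants of $w$ followed by $u_j$ among such descendants in a different child-subtree of $w$. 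Summing over $\ell$ and substituting $m=r-\ell$ gives $\EE[S_r^2 \mid \sigma_u=+1] = d^{r} + (d-1)d^{\,r-1}\sum_{m=1}^{r}(\eps^2 d)^{m}$.

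It remains to evaluate the geometric sum and subtract $(\EE[S_r\mid\sigma_u=+1])^2 = d^{r}(\eps^2 d)^{r}$. For $\eps^2 d \ne 1$ one has $\sum_{m=1}^{r}(\eps^2 d)^m = \eps^2 d\cdot\tfrac{(\eps^2 d)^r-1}{\eps^2 d-1}$; plugging in, and using $(\eps^2 d)^r - 1 = (\eps^2 d-1)\cdot\tfrac{(\eps^2 d)^r-1}{\eps^2 d-1}$ together with $(d-1)\eps^2 = \eps^2 d - \eps^2$, the expression collapses to $(1-\eps^2)\,d^{r}\,\tfrac{(\eps^2 d)^r-1}{\eps^2 d-1}$, which is the variance identity claimed (with correlation prefactor $1-\eps^2$); the $\eps^2 d = 1$ case is recovered by continuity, the sum being $r$ there. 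There is no genuine obstacle beyond keeping the ordered-pair count straight, which I would double-check two ways: summing $d^{\ell}\cdot d^{\,r-\ell}(d-1)d^{\,r-\ell-1}$ over $0\le\ell\le r-1$ must return $d^{r}(d^{r}-1)$, the total number of off-diagonal ordered pairs, and the final formula at $r=1$ must reduce to $\mathrm{Var}[S_1\mid\sigma_u=+1] = d(1-\eps^2)$, which is immediate since $S_1$ is a recentered $\mathrm{Binomial}(d,\tfrac{1+\eps}{2})$.
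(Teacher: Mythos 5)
The paper quotes this lemma from \cite{MNS16} without proof, so there is no internal argument to compare against; your derivation is a correct, self-contained substitute. The conditional-independence step giving $\EE[\sigma_a\sigma_b\mid\sigma_u=+1]=\eps^{\dist(a,b)}$, the decomposition of $\EE[S_r^2\mid\sigma_u=+1]$ by the level of the least common ancestor, and the ordered-pair count (which you rightly verify sums to $d^r(d^r-1)$) are all sound, and the final algebra collapses as you say. The one substantive point is that your computation yields the prefactor $1-\eps^2$, not $(1-\eps)^2$ as displayed in the lemma statement, and your $r=1$ sanity check settles which is right: $S_1$ is a sum of $d$ conditionally i.i.d.\ $\pm1$ spins each with variance $1-\eps^2$, so $\mathrm{Var}[S_1\mid\sigma_u=+1]=d(1-\eps^2)$, whereas the stated formula evaluates to $d(1-\eps)^2$. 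The statement therefore carries a typo rather than your proof carrying an error; this is corroborated by the paper's own use of the prefactor $1-\eps^2$ in the variance formula invoked in the proof of Theorem~\ref{thm:c-flip}. The discrepancy is harmless downstream, since the application in Lemma~\ref{lem:36} only uses that the prefactor is at most $1$.
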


Using this second moment bound on the majority vote, we can deduce that the ground truth beliefs $X_{ui,k}$ are all close to 1 in the regime where we exceed the Kesten-Stigum threshold by a constant. 

\begin{lemma}\label{lem:36}
    There exists $C > 0$ such that for all $d, \eps$ with $\eps^2 d > C$ , for all $s \geq 1$, we have \[\EE[X_{u,s} \, | \, \sigma_{u} = +1] \geq \frac{7}{8}.\]
\end{lemma}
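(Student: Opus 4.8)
The plan is to lower bound $\EE[X_{u,s} \mid \sigma_u = +1]$ by comparing the true belief $X_{u,s}$ against a cruder estimator of $\sigma_u$ -- namely the (renormalized) majority vote of the $d^s$ leaves below $u$ -- and invoking the optimality of the posterior mean. Recall that $X_{u,s}$ is the conditional expectation $\EE[\sigma_u \mid \sigma_{\mathrm{leaves}(u)}]$, which is the $L^2$-optimal estimator of $\sigma_u$ from the leaf data. Equivalently, $1 - \EE[X_{u,s}\sigma_u]$ (the "error" of BP) is at most the mean-squared error of any other estimator. So I will exhibit a simple linear estimator $g = \lambda S_s$, where $S_s = \sum_{i=1}^{d^s}\sigma_{u_i}$ is the leaf-sum and $\lambda$ is chosen optimally, and show that its correlation with $\sigma_u$ is close to $1$ when $\eps^2 d > C$.

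First I would compute the relevant first and second moments. By the broadcast recursion, $\EE[\sigma_{u_i}\mid \sigma_u = +1] = \eps^s$ for each leaf $u_i$, so $\EE[S_s \mid \sigma_u = +1] = (\eps d)^s \cdot \eps^{-s}\cdot\eps^s$; more precisely $\EE[S_s\mid\sigma_u=+1] = d^s \eps^s$. Wait -- let me recompute: each of the $d^s$ leaves has $\EE[\sigma_{u_i}\mid\sigma_u=1]=\eps^s$, so $\EE[S_s\mid\sigma_u=1]=d^s\eps^s=(d\eps)^s\cdot(\eps/\eps)$... the point is $\EE[S_s\mid\sigma_u=1] = (d\eps)^s$ when we write it as $d^s\eps^s$. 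Meanwhile Lemma~\ref{lem:var-majority} gives $\mathrm{Var}[S_s\mid\sigma_u=+1] = (1-\eps)^2 d^s\frac{(\eps^2 d)^s-1}{\eps^2 d - 1} \le \frac{(\eps^2 d)^s d^s}{\eps^2 d - 1}\cdot(1-\eps)^2$. The key ratio is then
\[
\frac{\big(\EE[S_s\mid\sigma_u=1]\big)^2}{\mathrm{Var}[S_s\mid\sigma_u=1]} \ge \frac{d^{2s}\eps^{2s}(\eps^2 d-1)}{(1-\eps)^2(\eps^2 d)^s d^s} = \frac{\eps^2 d - 1}{(1-\eps)^2},
\]
since $d^{2s}\eps^{2s} = (\eps^2 d)^s d^s$. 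Thus when $\eps^2 d > C$ this ratio is at least $C-1$, which can be made as large as we like.

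The second step is to convert this signal-to-noise ratio into a lower bound on the posterior mean. Using the estimator $g = \lambda S_s$ with $\lambda$ optimal, one has the standard identity that the minimum mean-squared error of estimating $\sigma_u\in\{\pm1\}$ from $S_s$ by a linear function is $1 - \frac{(\EE[\sigma_u S_s])^2}{\EE[S_s^2]}$, and by symmetry of the two hypotheses $\EE[\sigma_u S_s] = \EE[S_s\mid\sigma_u=1]$ and $\EE[S_s^2] = \mathrm{Var}[S_s\mid\sigma_u=1] + (\EE[S_s\mid\sigma_u=1])^2$. So the MMSE is $\frac{1}{1 + \mathrm{SNR}}$ where $\mathrm{SNR}\ge (C-1)/(1-\eps)^2 \ge C-1$. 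Since $X_{u,s}$ is the Bayes-optimal estimator, $\EE[(X_{u,s}-\sigma_u)^2] \le \frac{1}{C}$ for $C$ large, hence $1 - \EE[X_{u,s}\sigma_u] \le \frac{1}{2C}$, i.e. $\EE[X_{u,s}\sigma_u]\ge 1 - \frac1{2C}$. Finally, by symmetry $\EE[X_{u,s}\mid\sigma_u=+1] = \EE[X_{u,s}\sigma_u \mid \sigma_u = +1]\cdot$... more carefully, $\EE[X_{u,s}\sigma_u] = \EE[X_{u,s}\mid\sigma_u=+1]$ using that $\{X_{u,s}\mid\sigma_u=-1\}$ is the negative of $\{X_{u,s}\mid\sigma_u=+1\}$, so $\EE[X_{u,s}\mid\sigma_u=+1]\ge 1-\frac1{2C}\ge\frac78$ for $C\ge 4$.

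The main obstacle I anticipate is making sure the reduction "$X_{u,s}$ is Bayes-optimal, hence beats the linear estimator" is stated cleanly in a form that handles the conditioning on $\sigma_u$ and the $\pm$-symmetry correctly; in particular one must be careful that $X_{u,s}$ is a function of \emph{all} the leaf spins, not just their sum, but this only helps (the Bayes estimator dominates any measurable function of a coarser statistic). A secondary, purely bookkeeping point is tracking constants through $\mathrm{SNR}\ge (\eps^2 d-1)/(1-\eps)^2$ to land exactly at the threshold $\tfrac78$; this is routine once $C$ is taken large enough.
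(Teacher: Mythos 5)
Your argument is correct and shares the same skeleton as the paper's: a second-moment/Chebyshev bound on the leaf-sum $S$ via Lemma~\ref{lem:var-majority}, an invocation of Bayes optimality of $X_{u,s}$ to transfer the guarantee from $S$ to $X_{u,s}$, and the $\pm$-symmetry of the process to translate $\EE[X_{u,s}\sigma_u]$ (resp.\ $\EE|X_{u,s}|$) into $\EE[X_{u,s}\mid\sigma_u=+1]$. Where you differ is in \emph{which} notion of Bayes optimality you invoke. The paper works with $0$--$1$ loss: $\sgn(X_{u,s})$ is the MAP estimator, so $\PP[\sgn(X_{u,s})=\sigma_u]\geq\PP[\sgn(S)=\sigma_u]$; the left side equals $\tfrac{1+\EE|X_{u,s}|}{2}$, Chebyshev lower-bounds the right side, and then $\EE[X_{u,s}\mid\sigma_u=+1]\geq 2\,\EE[|X_{u,s}|]-1$ closes the loop. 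You work with squared loss: $X_{u,s}=\EE[\sigma_u\mid\sigma_L]$ minimizes MSE, so $\EE[(X_{u,s}-\sigma_u)^2]\leq\min_\lambda\EE[(\lambda S-\sigma_u)^2]=\tfrac{1}{1+\mathrm{SNR}}$, and the projection identity $\EE[(X_{u,s}-\sigma_u)^2]=1-\EE[X_{u,s}\sigma_u]$ plus symmetry finish. The two routes are essentially interchangeable; the $L^2$/Pythagoras version you chose is arguably a bit cleaner because it avoids the detour through $\PP[X_{u,s}<0]$. One small arithmetic slip: the projection identity gives $1-\EE[X_{u,s}\sigma_u]=\EE[(X_{u,s}-\sigma_u)^2]\leq \tfrac{1}{C}$, not $\tfrac{1}{2C}$ (the extra factor of two is not warranted); this does not affect the conclusion, you just need $C\geq 8$ rather than $C\geq 4$.
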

\begin{proof}
    Let $S$ be the sum of the spins of the children $s$ levels down from $u$. By the optimality of $\sgn(X_{u,s})$ as an estimate for $\sigma_u$ given the spins of children $s$ levels down, we have 
    \begin{align*}
        \frac{\EE|X_{u,s}| + 1}{2} &\geq \PP(\sgn(S) > 0) \\
        &\geq 1 - \frac{\mathrm{Var}[s \, | \, \sigma_u = +1]}{(\EE[S \, | \, \sigma_u =+1])^2} \\
        &= 1 - \frac{\eps^{2s}d^{2s}}{(\eps^{s}d^{s})^2(\eps^2 d - 1)} \\
        &\geq 1 - \frac{1}{C - 1}
    \end{align*}
    where in the second inequality we applied the Chebyshev inequality. Observe that 
    \[ \PP[X_{u,s} <0 \, | \, \sigma_u = +1]  \geq \frac{1-\EE[|X_{u,s}| \, | \, \sigma_u = +1]}{2}. \]
    Consequently, 
    \begin{align*}
        \EE[X_{u,s} <0 \, | \, \sigma_u = +1]  &\geq \EE[|X_{u,s}| \, | \, \sigma_u = +1] - 2\PP[X_{u,s} < 0 \, | \, \sigma_u = +1] \\
        &= 2 \EE[|X_{u,s}| \, | \, \sigma_u = +1] - 1 \\
        &\geq 1 - \frac{4}{C-1} \\
        &\geq \frac{7}{8}
    \end{align*}
    for sufficiently large $C$. 
\end{proof}


\begin{proof}[Proof of Lemma~\ref{lem:contract-induct}]
    We will prove the case of $Y_{ui,r}$; the $Y_{ui,r}'$ case follows by nearly identical reasoning. 
    
    Now, let $f(x) = \left(\frac{1 - \eps x}{1+\eps x} \right)^{1/2} $, let $g(y_1,\ldots,y_d) = \prod_{j \leq d} f(y_j)$, and let $g_i(y_1,\ldots,y_d) = \tfrac{\partial g}{\partial y_i}$.
     By applying the mean value theorem, we can write 
    \begin{align}\label{eq:independence}
        &\EE \left[\max_{\sf{adversary}} \left| g(X_{u1,r}, \ldots, X_{ud,r}) - g(Y_{u1,r}, \ldots, Y_{ud,r}) \right| \mathrel{\Big|} \sigma_u = +1 \right] \notag  \\
        &\leq \EE \left[ \max_{\sf{adversary}} \sum_{i=1}^{d} |X_{ui,r} - Y_{ui,r}| \cdot \max_{\substack{y_1 \in[Y_{u1,r}, X_{u1,r}] \\ \vdots \\ y_d \in [Y_{ud,r}, X_{ud,r}]}} \left| g_i(y_1, \ldots, y_d) \right| \mathrel{\Big|}  \sigma_u = +1 \right] \\
        &\leq \kappa \EE \left[ \sum_{i=1}^{d} \max_{\sf{adversary}} |X_{ui,r} - Y_{ui,r}| \cdot \max_{\sf{adversary}} \max_{\substack{y_1 \in[Y_{u1,r}, X_{u1,r}] \\ \vdots \\ y_d \in [Y_{ud,r}, X_{ud,r}]}} \prod_{\substack{j=1 \\ j \neq i}}^{d} f(y_j)  \mathrel{\Big|}  \sigma_u = +1 \right] \notag.
    \end{align}
    
    In the above, $\kappa$ is the constant from Claim~\ref{claim:small-4}. In the second inequality, we introduced the \emph{``local'' adversaries} who only looks at the subtree beneath $ui$. At this point, we note that $\max_{\sf{adversary}} |X_{ui,r} - Y_{ui,r}|$ only depends on the randomness  which we revealed in the subtree below $ui$, while \\
    $\max_{\sf{adversary}}\max_{\substack{y_1 \in[Y_{u1,r}, X_{u1,r}] \\ \vdots \\ y_d \in [Y_{ud,r}, X_{ud,r}]}} \prod_{\substack{j=1 \\ j \neq i}}^{d} f(y_j)$ only depends on the randomness that we reveal in the subtrees below $uj$ for $j \neq i$. Because of the tree structure, it follows that if we condition on $\sigma_u = +1$, then $\max_{\sf{adversary}}|X_{ui,r} - Y_{ui,r}|$ is independent of $\max_{\sf{adversary}} \max_{\substack{y_1 \in[Y_{u1,r}, X_{u1,r}] \\ \vdots \\ y_d \in [Y_{ud,r}, X_{ud,r}]}} \prod_{\substack{j=1 \\ j \neq i}}^{d} f(y_j)$. It is crucial that the derivative of the belief propagation function has this separation property, allowing us to leverage the independence to write:
    
\begin{align*}
    &\EE \left[\max_{\sf{adversary}} \left| \prod_{i=1}^{d} f(X_{ui,r}) - \prod_{i=1}^{d} f(Y_{ui,r}) \right| \mathrel{\Big|}  \sigma_u = +1 \right]  \notag \\
        &\leq \kappa \sum_{i=1}^{d} \EE\left[\max_{\sf{adversary}} |X_{ui,r} - Y_{ui,r}| \mathrel{\Big|}  \sigma_{ui} = +1 \right] \cdot \EE \left[ \max_{\sf{adversary}} \max_{\substack{y_1 \in[Y_{u1,r}, X_{u1,r}] \\  \vdots \\  y_d \in [Y_{ud,r}, X_{ud,r}]}} \prod_{\substack{j=1 \\ j \neq i}}^{d} f(y_j)  \mathrel{\Big|}  \, \sigma_u = +1 \right] \notag \\
        &\leq \kappa d \xi \EE \left[ \max_{\sf{adversary}} \max_{\substack{y_1 \in[Y_{u1,r}, X_{u1,r}] \\ \vdots \\  y_d \in [Y_{ud,r}, X_{ud,r}]}} \prod_{j=1}^{d-1} f(y_j)  \mathrel{\Big|}  \sigma_u = +1 \right].
\end{align*}
    where in the last inequality we used symmetry of the subtrees and also the fact that 
    \begin{align*}
        \EE\left[  \max_{\sf adversary} |X_{ui,k} - Y_{ui,k}| \mathrel{\Big |} \sigma_{ui} = +1 \right] = \EE \left[ \max_{\sf adversary}|X_{ui,k} - Y_{ui,k}| \mathrel{\Big |} \sigma_{ui} = -1 \right] = \EE\left[ \max_{\sf adversary} |X_{ui,k} - Y_{ui,k}| \right].
    \end{align*}
    
    This equality holds because $\max_{\sf adversary}|X_{ui,k} - Y_{ui,k}|$ measures the magnitude of the change in beliefs of the worst adversary, and there by flipping $+1$ to $-1$ and vice versa there is a coupling between the broadcast processes when $ui = +1$ and $ui = - 1$ which leaves this magnitude invariant. In particular, this implies that 
    \begin{align*}
        \EE\left[\max_{\sf{adversary}} |X_{ui,r} - Y_{ui,r}| \mathrel{\Big|} \sigma_{ui} = +1 \right] \leq \xi
    \end{align*}
    by the hypothesis of the claim. Next, we write 
   \begin{align*}
       \EE \left[ \max_{\sf{adversary}} \max_{\substack{y_1 \in[Y_{u1,r}, X_{u1,r}] \\ \vdots \\ y_d \in [Y_{ud,r}, X_{ud,r}]}} \prod_{j=1}^{d-1} f(y_j)  \mathrel{\Big|} \sigma_u = +1 \right] &\leq \EE \left[ \max_{\substack{y_1 \in[Y_{u1,r}, X_{u1,r}] \\ \vdots \\ y_d \in [Y_{ud,r}, X_{ud,r}]}} \prod_{j=1}^{d-1} \max_{\sf{adversary}}f(y_j)  \mathrel{\Big|} \sigma_u = +1 \right] \notag \\
       &= \prod_{j=1}^{d-1}   \EE\left[ \max_{\sf{adversary}} \max_{\substack{y_1 \in[Y_{u1,r}, X_{u1,r}] \\ \vdots \\ y_d \in [Y_{ud,r}, X_{ud,r}]}} f(y_j)  \mathrel{\Big|}  \sigma_u = +1 \right] \notag \\
       &\leq  \prod_{j=1}^{d-1} \EE \left[ \max_{\sf{adversary}} f(X_{uj,r} - |Y_{uj,r} - X_{uj,r}|) \mathrel{\Big|} \sigma_u = +1 \right]
   \end{align*}
   where the third equality follows because of independence of $x_{ui}$ and $x_{uj}$ for $i \neq j$ and the final inequality follows from the monotonicity of $f$ so that $\max_{x \in [a,b]}f(x) = f(a)$. Now we apply Claim~\ref{claim:small-3} to the each term in the above inequality to obtain 
   \begin{align*}
       &\EE \left[\max_{\sf{adversary}} \left| \prod_{i=1}^{d} f(X_{ui,r}) - \prod_{i=1}^{d} f(Y_{ui,r}) \right| \mathrel{\Big|} \sigma_u = +1 \right] \\
       &\leq \kappa d \xi  \prod_{j=1}^{d-1} \EE \left[ \max_{\sf{adversary}} \left(1 - \eps(X_{uj,r} - |Y_{uj,r} - X_{uj,r}|) + \frac{3 \eps^2(X_{uj,r} - |Y_{uj,r} - X_{uj,r}|)^2}{5} \right) \mathrel{\Big|} \sigma_u = +1 \right] \\
       &\leq \kappa d \xi  \prod_{j=1}^{d-1} \EE \left[  1 -\eps(X_{uj,r} - \max_{\sf{adversary}}|Y_{uj,r} - X_{uj,r}|) + \frac{3\eps^2\max_{\sf{adversary}}(X_{uj,r} - |Y_{uj,r} - X_{uj,r}|)^2}{5} \mathrel{\Big|}  \sigma_u = +1 \right] \\
       &\leq \kappa d \xi  \prod_{j=1}^{d-1} \Bigg(1-\eps \EE[X_{uj,r} | \sigma_u = +1  ] + \eps \EE\max_{\sf{adversary}} [| Y_{uj,r} - X_{uj,r}| | \sigma_u = +1 ]  \\
       &\hskip 10cm + \frac{3\eps^2 \EE\underset{\sf adversary}{\max} [ (X_{uj,r} - |Y_{uj,r} - X_{uj,r}|)^2 |  \sigma_u = +1 ]}{5} \Bigg) \\
       &\leq \kappa d \xi  \prod_{j=1}^{d-1} e^{-\eps \EE[X_{uj,r} | \sigma_u = +1  ] + \eps \EE\max_{\sf{adversary}} [| Y_{uj,r} - X_{uj,r}| | \sigma_u = +1 ] + \frac{3\eps^2 \EE\max_{\sf{adversary}} [ (X_{uj,r} - |Y_{uj,r} - X_{uj,r}|)^2 |  \sigma_u = +1 ]}{5}} \\
        &\leq \kappa d \xi  \prod_{j=1}^{d-1}  e^{-\eps \EE[X_{uj,r} | \sigma_u = +1  ] + \eps \xi + \frac{3\eps^2 \EE\max_{\sf{adversary}} [ (X_{uj,r} - |Y_{uj,r} - X_{uj,r}|)^2 | \sigma_u = +1 ]}{5}},
   \end{align*}
   where for the penultimate inequality we used the fact that $1 - x \leq e^{-x}$ for all $x \in \mathbb{R}$.

    To simplify this further, note that by Lemma~\ref{lem:36} we have
 \begin{align}\label{eq:symmetrize}
    \EE[X_{ui,r} | \sigma_u = + 1] &= \frac{1 + \varepsilon}{2} \EE[X_{ui,r} | \sigma_{ui} = +1] + \frac{1 - \varepsilon}{2} \EE[X_{ui,r} | \sigma_{ui} = -1] \notag \\
    &= \EE\left[ \frac{1+ \varepsilon}{2} \cdot X_{ui,r} - \frac{1 - \varepsilon}{2} \cdot X_{ui,r} \ | \  \sigma_{ui} = +1 \right] \notag \\
    &= \varepsilon \EE[X_{ui,r} \ | \ \sigma_{ui} = +1] \geq \frac{7\varepsilon}{8}. 
\end{align}
And furthermore $ \EE\max_{\sf{adversary}} [ (X_{uj,r} - |Y_{uj,r} - X_{uj,r}|)^2 \mathrel{\Big|} \sigma_u = +1 ] \leq  1 + 4\xi$, since we have both $|X_{uj,r}|^2 \leq 1$, $\EE \max_{\sf{adversary}}[2 X_{uj,r} |Y_{uj,r} - X_{uj,k}|  \mathrel{\Big|} \sigma_u = +1] \leq 2\EE\max_{\sf{adversary}}[|Y_{uj,r} - X_{uj,r}|  \mathrel{\Big|}  \sigma_u = +1] \leq  2\xi$ and $\EE\max_{\sf{adversary}} [| Y_{uj,r} - X_{uj,r}|^2 \mathrel{\Big|} \sigma_u = +1 ] \leq 2\xi$. This allows us to write 

\begin{align}\label{eq:end-condition-base}
     \EE \left[\max_{\sf{adversary}} \left| \prod_{i=1}^{d} f(X_{ui,r}) - \prod_{i=1}^{d} f(Y_{ui,r}) \right| \mathrel{\Big|}  \sigma_u = +1 \right] \notag \leq \kappa d \xi e^{ (d-1)\eps^2 \left( -\frac{7}{8} + \xi \eps^{-1} + \frac{3 + 12\xi}{5} \right)} 
\end{align}
as desired.
\end{proof}

\subsection{Large-$\eps$ case}
\label{subsec:large-eps}
In this subsection, we aim to prove the following result. 

\begin{lemma}\label{lem:contract-large}
     For any $0 < \varepsilon^{*}<1$, there is some $C(\eps^*) > 0$ such that for all $\varepsilon > \varepsilon^{*}$ and $d$ satisfying $\eps^2 d > -C \log(1-\eps)$, then with $\rho(\eps), t_0(\delta,d)$ as in \eqref{eq:parameter-bounds}, for any $t \geq t_0$,
    \[\EE \max_{\sf{adversary}} {|X_{\sf{root}, t} - Z_{\sf{root}, t}|}  \leq \delta.\]
    Here, the expectation $\EE$ is taken with respect to the underlying broadcast process which produces leaf spins $\sigma_L$, and the $\text{Ber}(\rho)$ variables $x$ indicating which leaf spins can be corrupted by the adversary.
 \end{lemma}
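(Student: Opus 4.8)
The plan is to run the same two-step contraction as in the small-$\eps$ case (Lemma~\ref{lem:small-eps}): a base case controlling $\EE\max_{\sf adversary}|X_{u,1}-Z_{u,1}|$ for $u$ at level $t-1$, and an inductive step showing this quantity contracts by a factor $\le\tfrac12$ per level, so that after $t_0\asymp\log(\delta^{-1})$ levels the root belief is recovered to precision $\delta$. The base case is handled exactly as in Lemma~\ref{lem:contract-small}: the same Chernoff estimates on the weighted majority of the $d$ leaves below $u$ and on the number $|x_u|$ of allowed flips give $\EE\max_{\sf adversary}|X_{u,1}-Z_{u,1}|\le 4e^{-\Omega(\eps^2d)}+\tfrac{(1-\eps)^{1/4}}{200d}$, and since $\eps>\eps^*$ the hypothesis $\eps^2d>-C\log(1-\eps)$ makes both terms small enough to seed the induction. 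The work is therefore in redoing the inductive step of Lemma~\ref{lem:contract-induct}, whose proof invoked Claims~\ref{claim:small-3} and~\ref{claim:small-4} — the Taylor bound $f_0(x)\le 1-x+\tfrac{3x^2}5$ near $x=0$ and a uniform bound on $f_0'$, where $f_0(x)=\big(\tfrac{1-x}{1+x}\big)^{1/2}$ — both of which fail once $\eps$ is not small, since they blow up near the poles of $x\mapsto\tfrac{1-x}{1+x}$ at $x=\pm1$.

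We keep the skeleton of Lemma~\ref{lem:contract-induct}: polynomialize the belief recursion via Claim~\ref{claim:small-1} with $p=\tfrac12$; bound $\big|\prod_i f(\eps X_{ui,r})-\prod_i f(\eps Z_{ui,r})\big|$ by the mean value theorem as $\sum_i|X_{ui,r}-Z_{ui,r}|\cdot|f'(\eps y_i)|\cdot\prod_{j\ne i}f(\eps y_j)$ for suitable intermediate points; and split the $\rho$-semirandom adversary into independent local adversaries on the subtrees below the $ui$, so that conditioned on $\sigma_u$ the factor $\max_{\sf adversary}|X_{ui,r}-Z_{ui,r}|$ becomes independent of $\prod_{j\ne i}\max_{\sf adversary}f(\eps y_j)$. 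The two inputs that change are: \textbf{(i)} every genuine belief $X_{v,r},Z_{v,r}$ lies in $[-1,1]$, so each $\eps y_j\in(-\eps,\eps)$ stays bounded away from the poles $\pm\tfrac1\eps$; this yields only $|f'(\eps y_j)|=O\!\big((1-\eps)^{-3/2}\big)$ — no longer $O(1)$, but merely polynomial in $\tfrac1{1-\eps}$; \textbf{(ii)} by Lemma~\ref{lem:36} the ground-truth beliefs concentrate near the true spin, and conditioned on $\sigma_u=+1$ the children's subtrees are i.i.d.\ with $\EE\big[\big(\tfrac{1-\eps}{1+\eps}\big)^{\sigma_{ui}/2}\,\big|\,\sigma_u=+1\big]=\sqrt{1-\eps^2}<1$, so $\EE\big[\prod_{j\ne i}f(\eps X_{uj,r})\,\big|\,\sigma_u=+1\big]\le(1-\eps^2)^{\Omega(d)}$ — exponentially small in $d$, the analogue of the $e^{-R(d-1)\eps^2}$ factor. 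Combining the $O(d)$ loss from the union bound over children, the $\mathrm{poly}(1/(1-\eps))$ loss from the derivative, and the $(1-\eps^2)^{\Omega(d)}$ gain from the product, the contraction factor is $\le\tfrac12$ exactly when $\eps^2d\gtrsim\log\tfrac1{1-\eps}$, i.e.\ under the stated hypothesis.

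It is cleanest to split the argument at a constant $c_0<1$. For $\eps^*<\eps\le c_0$ everything above is routine: $\eps y$ ranges over a fixed compact interval $[-c_0,c_0]\subset(-1,1)$, so the analogues of Claims~\ref{claim:small-3} and~\ref{claim:small-4} hold with constants depending on $\eps^*$ and $c_0$, and the contraction holds once $d\ge d_0(\eps^*,c_0)$, which is forced by $\eps^2d>-C\log(1-\eps)\ge C\log\tfrac1{1-\eps^*}$ upon choosing $C=C(\eps^*)$ large. The delicate regime is $c_0<\eps<1$, and there the main obstacle is the interaction between the poles of $x\mapsto\tfrac{1-\eps x}{1+\eps x}$ (now at $\pm\tfrac1\eps$, close to the belief range) and the adversary's ability to make $|X_{ui,r}-Z_{ui,r}|$ large: the artificial quantities $X_{ui,r}\pm|X_{ui,r}-Z_{ui,r}|$ used in the monotone worst-case reduction of Lemma~\ref{lem:contract-induct} can then fall outside $(-\tfrac1\eps,\tfrac1\eps)$, where $f$ is undefined, and the crude fallback $f(\eps\,\cdot)\le\big(\tfrac{1+\eps}{1-\eps}\big)^{1/2}$ is $\mathrm{poly}(1/(1-\eps))$-large. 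The fix is to argue directly with $X_{ui,r}$ and $Z_{ui,r}$ — both of which \emph{are} in $[-1,1]$ — rather than with $Y,Y'$, and to absorb the residual bad events (that $\max_{\sf adversary}|X_{ui,r}-Z_{ui,r}|$ is large, of probability $\le$ (inductive bound)$/$threshold by Markov; and that $X_{ui,r}$ has the wrong sign, of probability $\le(1-\eps^2)^{\Omega(d)}$ by a Chernoff bound on the weighted majority — which one must check is uniform in the level $r$, i.e.\ that this failure probability does not accumulate up the tree). Balancing these $\mathrm{poly}(1/(1-\eps))$-amplified error contributions against the $(1-\eps^2)^{\Omega(d)}$ gain is precisely where $\eps^2d>-C\log(1-\eps)$ is used, and is the technical heart of the proof.
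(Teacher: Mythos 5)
Your proposal is correct and follows essentially the same route as the paper: the paper's Lemma~\ref{lem:contract-large} is exactly the base case of Lemma~\ref{lem:contract-small} plus a large-$\eps$ contraction step (Lemma~\ref{lem:contract-unif-large}) that uses the mean value theorem with the derivative bounded by $(1-\eps)^{-3/2}$ on $[-1,1]$, Markov's inequality on the induction hypothesis to control the children with large perturbation, and a shifted version of \cite[Lemma 3.16]{MNS16} (Lemma~\ref{lem:term-der}) giving $\EE[f(\eps Y_{ui})\mid\sigma_u=+1]\le\lambda<1$, balanced against the $\mathrm{poly}(1/(1-\eps))$ losses precisely via $\eps^2 d>-C\log(1-\eps)$. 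The only cosmetic difference is that the paper resolves the singularity issue you flag not by abandoning $Y,Y'$ but by truncating them to $[-1,1]$.
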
 

The gist of the proof of Lemma~\ref{lem:contract-large} is similar to that of Lemma~\ref{lem:contract-small}, with the main caveat being that approximations such as Claim~\ref{claim:small-3} no longer work, and so we need to use other estimates to bound the belief propagation function. To that end, we first state a technical lemma that we need to bound the ``worst case'' derivative of the belief propagation function. 


\begin{lemma}[Effectively {\cite[Lemma 3.16]{MNS16}}]\label{lem:term-der}
    For any $0 < \varepsilon^{*} < 1$, there is some $d^{*}(\varepsilon^{*})$, some $\lambda = \lambda(\varepsilon^{*}) < 1$ such that for all $1 > \varepsilon \geq \varepsilon^{*}$, $d \geq d^{*}$ there exists $\nu(\eps, \eps^{*})$ such that if $|\xi| \leq \nu $ and $Y_i = \min\{\max\{ X_{ui,k} - \xi, -1 \}, 1 \}$ for $k \geq 1$,
    \[ \EE \left[ \sqrt{\frac{1 - \varepsilon Y_i}{1 + \varepsilon Y_i}}  \mathrel{\Big|} \sigma_u = +1 \right] \leq \lambda. \] 
    In fact, we can take $\nu(\eps, \eps^{*}) = \frac{(\eps^{*})^2 (1-\eps)^{1/4}}{8\sqrt[4]{2}}$.
\end{lemma}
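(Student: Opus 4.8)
The plan is to follow the proof of \cite[Lemma~3.16]{MNS16}, which is the case $\xi=0$, and to argue that the extra shift by $\xi$ and the truncation to $[-1,1]$ are harmless once $|\xi|\le\nu$ with $\nu$ as in the statement. Write $f(x)=\sqrt{\tfrac{1-\eps x}{1+\eps x}}$, set $X=X_{ui,k}$ and $\mathrm{cl}(z)=\min\{\max\{z,-1\},1\}$, and note $f(-x)=1/f(x)$ and, by the product form of the belief-propagation recursion, that $X_{ui,k}$ flips sign when every leaf below $ui$ flips sign. The first step is a symmetrization over $\sigma_{ui}$: conditioned on $\sigma_{ui}=-1$ the subtree below $ui$ has the law of the subtree below a $+1$ vertex with all spins flipped, so $\mathrm{cl}(X-\xi)$ has, conditioned on $\sigma_{ui}=-1$, the law of $-\mathrm{cl}(X+\xi)$ conditioned on $\sigma_{ui}=+1$, and hence (using $\PP[\sigma_{ui}=+1\mid\sigma_u=+1]=\tfrac{1+\eps}{2}$)
\[
\EE\!\left[\sqrt{\tfrac{1-\eps Y_i}{1+\eps Y_i}}\;\middle|\;\sigma_u=+1\right]
=\EE\!\left[\tfrac{1+\eps}{2}f(\mathrm{cl}(X-\xi))+\tfrac{1-\eps}{2}\,\tfrac{1}{f(\mathrm{cl}(X+\xi))}\;\middle|\;\sigma_{ui}=+1\right]
=:\EE\!\left[\psi(X)\mid\sigma_{ui}=+1\right].
\]
Both summands defining $\psi$ are nondecreasing in $\xi$, so $\psi\le\bar\psi$ pointwise, where $\bar\psi$ replaces $\xi$ by $\nu$; it therefore suffices to bound $\EE[\bar\psi(X)\mid\sigma_{ui}=+1]$.

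I would next record the deterministic facts about $\bar\psi$. At $\nu=0$ one has $\bar\psi=\phi$ with $\phi(x)=\tfrac{1+\eps}{2}f(x)+\tfrac{1-\eps}{2}/f(x)$, and $\phi'(x)=-\tfrac{\eps^2(1-x)}{(1-\eps^2x^2)^{3/2}}$; hence $\phi$ is strictly decreasing on $(-1,1)$ with $\phi(0)=1$, and integrating $\phi'$ gives $\phi(\tfrac12-\nu)\le 1-c_0\eps^2$ for an absolute $c_0>0$ (using $\nu\le\tfrac18$, which our $\nu$ satisfies). For general small $\nu$, writing $\bar\psi(x)=\phi(\mathrm{cl}(x-\nu))+\tfrac{1-\eps}{2}\bigl(\tfrac{1}{f(\mathrm{cl}(x+\nu))}-\tfrac{1}{f(\mathrm{cl}(x-\nu))}\bigr)$ and bounding the nonnegative second term via $\tfrac1A-\tfrac1B=\tfrac{B-A}{AB}$, the algebraic identity $f(a)^2-f(b)^2=\tfrac{2\eps(b-a)}{(1+\eps a)(1+\eps b)}$, and $f(\mathrm{cl}(x\pm\nu))\ge f(1)=\sqrt{\tfrac{1-\eps}{1+\eps}}$, one obtains for $x\ge\tfrac12$ an estimate of the form $\bar\psi(x)\le\phi(\mathrm{cl}(x-\nu))+O\!\bigl(\eps\nu/\sqrt{\max\{1-\eps,\eps\nu\}}\bigr)$ (absolute implied constant), after a short case split on whether $\mathrm{cl}(x+\nu)=1$. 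With $\nu=\tfrac{(\eps^*)^2(1-\eps)^{1/4}}{8\sqrt2}$, this error term is at most $\tfrac12 c_0\eps^2$ uniformly over $\eps\in[\eps^*,1)$, so that $\sup_{x\in[\frac12,1]}\bar\psi(x)\le 1-c(\eps^*)$ for some $c(\eps^*)>0$. The precise value of $\nu$ (in particular the constant $8\sqrt2$) is calibrated exactly for this inequality, and verifying it when $\eps$ is close to $1$ — where $f$ and $1/f$ have large derivatives, but $\phi(1)=\sqrt{1-\eps^2}$ and the weight $\tfrac{1-\eps}{2}$ are correspondingly small — is the one genuinely delicate point, and I expect it to be the main obstacle. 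For $x<\tfrac12$ I would only use the crude bound $\bar\psi(x)\le\tfrac{1+\eps}{2}f(-1)+\tfrac{1-\eps}{2}/f(1)=\sqrt{\tfrac{1+\eps}{1-\eps}}$.

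Finally I would bring in concentration of $X=X_{ui,k}$ around $+1$ under $\sigma_{ui}=+1$, which is where Lemma~\ref{lem:var-majority} enters. Since $\sgn(X)$ is the Bayes-optimal estimator of $\sigma_{ui}$ from the level-$k$ descendants of $ui$, its error probability is at most that of the majority-vote estimator, so by Chebyshev and Lemma~\ref{lem:var-majority} (which gives $\mathrm{Var}[S_k\mid\sigma_{ui}=+1]\le\tfrac{(1-\eps)^2}{\eps^2d-1}(\eps d)^{2k}$ while $\EE[S_k\mid\sigma_{ui}=+1]=(\eps d)^k$) one gets $\PP[X<0\mid\sigma_{ui}=+1]\le\tfrac{(1-\eps)^2}{\eps^2d-1}$ for all $k\ge1$; combining this with the identity $\EE[X\mid\sigma_{ui}=+1]=\EE[X^2]\ge(\EE|X|)^2\ge 1-\tfrac{4(1-\eps)^2}{\eps^2d-1}$ and Markov applied to $1-X\ge0$ yields $\PP[X<\tfrac12\mid\sigma_{ui}=+1]\le\tfrac{8(1-\eps)^2}{\eps^2d-1}$. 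Splitting $\EE[\bar\psi(X)\mid\sigma_{ui}=+1]$ over $\{X\ge\tfrac12\}$ and $\{X<\tfrac12\}$ and inserting the three estimates,
\[
\EE\!\left[\sqrt{\tfrac{1-\eps Y_i}{1+\eps Y_i}}\;\middle|\;\sigma_u=+1\right]\le\bigl(1-c(\eps^*)\bigr)+\sqrt{\tfrac{1+\eps}{1-\eps}}\cdot\tfrac{8(1-\eps)^2}{\eps^2d-1}\le 1-c(\eps^*)+\tfrac{8\sqrt2}{\eps^2d-1},
\]
using $(1-\eps)^{3/2}\sqrt{1+\eps}\le\sqrt2$. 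Since $d\ge d^*(\eps^*)$ forces $\eps^2d-1\ge(\eps^*)^2d^*(\eps^*)-1$, choosing $d^*(\eps^*)$ large enough makes the last term at most $\tfrac12 c(\eps^*)$, and the lemma follows with $\lambda(\eps^*):=1-\tfrac12c(\eps^*)<1$. Every step besides the uniform-in-$\eps$ control of $\sup_{x\ge1/2}\bar\psi$ near $\eps=1$ is a routine adaptation of the $\xi=0$ argument of \cite{MNS16}.
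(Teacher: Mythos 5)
Your proposal is correct in overall strategy but follows a genuinely different route from the paper's proof, and it has one real gap which you yourself flag.

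\textbf{Comparison of approaches.} The paper's proof is a two-point bound: it applies Markov to Lemma~\ref{lem:36} to show $X_{ui,k}\ge 1-\tau\eta^{1/4}$ with probability $\ge 1-\tau\eta^{3/4}-\eta$ (with $\eta=\tfrac{1-\eps}{2}$), then, using that $f$ is decreasing, bounds $\EE[f(Y_i)]\le f(1-\tau\eta^{1/4}-|\xi|)\alpha+f(-1)(1-\alpha)$ and estimates the two terms separately for the specific choices $\tau=(\eps^*)^2/8$ and $\nu=\tau\eta^{1/4}$. You instead first symmetrize over $\sigma_{ui}$ using $f(-y)=1/f(y)$ and $\mathrm{cl}(-a)=-\mathrm{cl}(a)$, reducing the problem to bounding $\EE[\bar\psi(X)\mid\sigma_{ui}=+1]$ for a deterministic function $\bar\psi$, then prove a pointwise bound on $[1/2,1]$ and a crude one on $[-1,1/2)$, and close with concentration of $X$ under $\sigma_{ui}=+1$. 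The symmetrization and the pointwise-function formulation are not in the paper; they are a clean reorganization and the concentration step is fine. However, this reorganization shifts all of the difficulty onto a single deterministic inequality, and that is where the gap is.

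\textbf{The gap.} The linchpin of your argument is the claim $\sup_{x\in[1/2,1]}\bar\psi(x)\le 1-c(\eps^*)$ uniformly over $\eps\in[\eps^*,1)$, obtained by writing $\bar\psi(x)=\phi(\mathrm{cl}(x-\nu))+\tfrac{1-\eps}{2}\bigl(\tfrac{1}{f(\mathrm{cl}(x+\nu))}-\tfrac{1}{f(\mathrm{cl}(x-\nu))}\bigr)$, bounding $\phi\le 1-c_0\eps^2$, and then asserting the second term is ``$O\!\bigl(\eps\nu/\sqrt{\max\{1-\eps,\eps\nu\}}\bigr)$ (absolute implied constant), after a short case split'' and is ``at most $\tfrac12 c_0\eps^2$ uniformly.'' This is precisely the delicate point you flag, and it is not established. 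The straightforward lower bounds $f(a),f(b)\ge f(1)$ and $(1+\eps a)(1+\eps b)\ge 1$ give the correction term a bound of order $\eps\nu(1+\eps)^{3/2}/\sqrt{1-\eps}$, which with $\nu\asymp(\eps^*)^2(1-\eps)^{1/4}$ blows up as $(1-\eps)^{-1/4}$ when $\eps\to 1$; the improvement to $O\bigl(\eps\nu/\sqrt{\max\{1-\eps,\eps\nu\}}\bigr)$ requires using not just the clipping but the actual sizes of $a,b$ in a way that is not demonstrated. More to the point, even granting the stated error form, the claim that it is $\le\tfrac12 c_0\eps^2$ depends on the precise numerical constant in the $O(\cdot)$ and on the constant $8\sqrt[4]{2}$ in $\nu$, neither of which is pinned down, so the conclusion $\bar\psi\le 1-c(\eps^*)$ is asserted rather than proved. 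The paper's route sidesteps this entirely: it never needs a uniform pointwise bound on $\bar\psi$ near $x=1$, because the bad event $\{X<1-\tau\eta^{1/4}\}$ is charged to $f(-1)$ and its probability is made small enough (via the assumption that $\eps^2 d$ is large) to kill the $f(-1)\sim\eta^{-1/2}$ factor. To close your argument you would need to actually carry out the case analysis near $\eps=1$ (distinguishing $\mathrm{cl}(x+\nu)=1$ from $\mathrm{cl}(x+\nu)<1$ and tracking how both $\phi(\mathrm{cl}(x-\nu))$ and the correction decay there) rather than absorbing it into an unverified $O(\cdot)$ and a calibration claim.
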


Because the proof of this lemma is near verbatim that of \cite[Lemma 3.16]{MNS16}, we defer its proof to Subsection~\ref{subsec:defer}.

First, we prove an analogue of Lemma~\ref{lem:contract-induct} in the setting of large $\eps$. For technical reasons, in the following we define the truncated random variables
\[Y_{ui,k} =  \max \{-1,  X_{ui,k} - |X_{ui,k} - Z_{ui,k}| \},\] 
and 
\[ Y_{ui,k}' = \min \{ X_{ui,k} + |X_{ui,k} - Z_{ui,k}|, 1 \}.\]
These random variables effectively behave like their counterparts from before, in terms of recording the ``worst case'' adversary. In fact, we observe that this definition of $Y_{ui,k}$ and $Y_{ui,k}'$ is quite natural; $-1$ and $1$ are the extreme points that the adversary can perturb the beliefs. The reason why such a truncation is necessary is basically because when we apply equation \cref{eq:polyrize}, we need to ensure that $\frac{1 - \eps Y_{ui,k}}{1 + \eps Y_{ui,k}}$ avoids the singularity $-\eps^{-1}$ of $\frac{1-\eps x}{1+\eps x}$. 

\begin{lemma}\label{lem:contract-unif-large}
For any $0 < \varepsilon^{*}<1$, there is some $C(\eps^*) > 0$ such that for every $1> \eps> \eps^{*}$ and $d$ such that $\eps^2d > C$, the following holds: let $r \geq 0$, for any vertex $u$ at level $t-r-1$ with children $u1, \ldots, ud$ and suppose $\xi := \EE  \left[\max_{\sf{adversary}}| Y_{ui,r} - X_{ui,r}|\mathrel{\Big|}  \sigma_{ui} = +1 \right] \leq \frac{(1-\eps) \nu(\eps, \eps^{*})}{d}$, where $\nu$ is as in Lemma~\ref{lem:term-der}, then 
    \[ \EE\left[ \max_{\sf{adversary}} \left| \left( \prod_{i=1}^{d} \frac{1 - \varepsilon X_{ui,r}}{1 + \varepsilon X_{ui,r}} \right)^{1/2} - \left( \prod_{i=1}^{d} \frac{1 - \varepsilon Y_{ui,r}}{1 + \varepsilon Y_{ui,r}} \right)^{1/2} \right| \mathrel{\Big|}  \sigma_u = +1 \right] \leq \frac{\xi}{100}. \]

Furthermore, the above holds with $Y'_{ui,r}$ replacing $Y_{ui,r}$.
\end{lemma}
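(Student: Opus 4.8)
The plan is to follow the template of Lemma~\ref{lem:contract-induct}, using Lemma~\ref{lem:term-der} in place of the polynomial estimate Claim~\ref{claim:small-3}, which is what breaks down for large $\eps$. Write $f(x)=\bigl(\tfrac{1-\eps x}{1+\eps x}\bigr)^{1/2}$. Because $Y_{ui,r}\le X_{ui,r}$ and, by the truncation in the definition of $Y$, both lie in $[-1,1]$ --- this is exactly why the truncated variables are needed, namely to keep $f$ and $f'$ away from the pole of $\tfrac{1-\eps x}{1+\eps x}$ at $x=-1/\eps$ --- the mean value theorem gives, as in \eqref{eq:MVT},
\[
\left|\prod_{i=1}^d f(X_{ui,r})-\prod_{i=1}^d f(Y_{ui,r})\right|\ \le\ \sum_{i=1}^d |X_{ui,r}-Y_{ui,r}|\cdot\max\left|f'(y_i)\prod_{j\ne i}f(y_j)\right|,
\]
the maximum being over $y_j\in[Y_{uj,r},X_{uj,r}]$ for every $j$. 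The new ingredient is the derivative estimate $|f'(y)|=\tfrac{\eps}{1-\eps^2y^2}f(y)\le\tfrac{\eps}{1-\eps^2}f(y)$ on $(-1,1)$, so (since $f$ decreases) the inner maximum is at most $\tfrac{\eps}{1-\eps^2}\prod_j f(Y_{uj,r})$ and the right-hand side above is at most $\tfrac{\eps}{1-\eps^2}\sum_i |X_{ui,r}-Y_{ui,r}|\prod_j f(Y_{uj,r})$.

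Now apply $\max_{\sf{adversary}}$ and $\EE[\,\cdot\mid\sigma_u=+1\,]$ and reintroduce independence exactly as in the proof of Lemma~\ref{lem:contract-induct}: split the $\rho$-semirandom adversary into $d$ independent ``local'' adversaries, one per subtree, using that the corruption indicators are independent across subtrees and that, once $\sigma_u$ is fixed, every factor in each summand depends only on its own subtree. Using symmetry, this bounds the quantity of interest by $\tfrac{d\eps}{1-\eps^2}\cdot A\cdot B^{\,d-1}$, where $A=\EE[\max_{\sf{adversary}}(|X_{u1,r}-Y_{u1,r}|f(Y_{u1,r}))\mid\sigma_u=+1]$ and $B=\EE[\max_{\sf{adversary}}f(Y_{u1,r})\mid\sigma_u=+1]$. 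For $A$, use $f(Y_{u1,r})\le f(-1)$ pointwise, so $A\le f(-1)\,\EE[\max_{\sf{adversary}}|X_{u1,r}-Y_{u1,r}|\mid\sigma_u=+1]=f(-1)\,\xi$ (the last step by the spin-flip symmetry noted in Lemma~\ref{lem:contract-induct}). For $B$, split on whether the worst-case perturbation $\max_{\sf{adversary}}|X_{u1,r}-Y_{u1,r}|$ (conditional mean $\xi$) exceeds $\nu$: on the good event, monotonicity of $f$ and Lemma~\ref{lem:term-der} with perturbation $\nu$ give a contribution of at most $\lambda$, while Markov's inequality and the hypothesis $\xi\le(1-\eps)\nu/d$ bound the bad event's probability by $\xi/\nu\le(1-\eps)/d$, contributing at most $f(-1)(1-\eps)/d\le1/d$; so $B\le\lambda+1/d$. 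Since $\tfrac{d\eps}{1-\eps^2}f(-1)\le d(1-\eps)^{-3/2}$, we obtain
\[
\EE\!\left[\max_{\sf{adversary}}\left|\prod_i f(X_{ui,r})-\prod_i f(Y_{ui,r})\right|\ \middle|\ \sigma_u=+1\right]\ \le\ d\,(1-\eps)^{-3/2}(\lambda+1/d)^{d-1}\,\xi,
\]
so the lemma reduces to verifying $d(1-\eps)^{-3/2}(\lambda+1/d)^{d-1}\le\tfrac{1}{100}$ for all $\eps\in(\eps^*,1)$ and all $d$ with $\eps^2 d>C(\eps^*)$.

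For $\eps$ bounded away from $1$ this is immediate, since $\lambda=\lambda(\eps^*)<1$ forces $d(\lambda+1/d)^{d-1}\to0$ while $(1-\eps)^{-3/2}$ stays bounded; so one only has to choose $C(\eps^*)$ large. The main obstacle is the regime $\eps\to1$, where $(1-\eps)^{-3/2}$ diverges although $d$ need only exceed a constant. Here one must sharpen the bound on $B=\EE[\max_{\sf{adversary}}f(Y_{u1,r})\mid\sigma_u=+1]$ from ``$<1$'' to ``$O(\sqrt{1-\eps})$''; this is the quantitative content underlying Lemma~\ref{lem:term-der}, and it holds because, conditioned on $\sigma_u=+1$, the only way $X_{u1,r}$ (and hence $Y_{u1,r}$) can be near $-1$, where $f$ is large, is for the child spin $\sigma_{u1}$ to be $-1$, which has probability only $\tfrac{1-\eps}{2}$. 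With each of the $d-1$ factors of order $\sqrt{1-\eps}$, their product overcomes $(1-\eps)^{-3/2}$ once $d\ge5$ (which we may assume by taking $C(\eps^*)\ge5$), finishing the proof. The version with $Y'_{ui,r}$ replacing $Y_{ui,r}$ follows by the same spin-flip symmetry, and, as in the small-$\eps$ case, this lemma is designed to be the inductive step in a contraction up the tree that will prove Lemma~\ref{lem:contract-large}.
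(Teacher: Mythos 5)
Your overall structure matches the paper's: mean value theorem with $f(x)=\sqrt{\tfrac{1-\eps x}{1+\eps x}}$, independence across subtrees (conditional on $\sigma_u$) to split into a product of one-subtree expectations, and Lemma~\ref{lem:term-der} to control each factor via a good/bad split on whether $\max_{\sf adversary}|X_{ui,r}-Y_{ui,r}|$ exceeds $\nu$. Your packaging is somewhat cleaner than the paper's: you bound the derivative by the identity $|f'(y)|=\tfrac{\eps}{1-\eps^2y^2}f(y)\le\tfrac{\eps}{1-\eps^2}f(y)$ and absorb the extra $f(Y_{ui})$ into the $A$-term, and you pass directly to $B^{d-1}$ rather than summing over the events $\cB_I$ indexed by subsets $I\subseteq[d]$ as the paper does (these are equivalent; your $B\le\lambda+1/d$ plays the role of the paper's $\widetilde\lambda+O(1/d)$). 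You also correctly note that on the good event one does not even need to renormalize by $\PP[\text{good}]$, since $f\ge0$ makes $\EE[f\cdot\one_{\text{good}}]\le\EE[f]$.

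One claim you make to close the $\eps\to1$ regime is not right as stated. You assert $B=O(\sqrt{1-\eps})$ and justify this by the heuristic that the only way $X_{u1,r}$ can be near $-1$ is $\sigma_{u1}=-1$, which has probability $\tfrac{1-\eps}{2}$. That heuristic neglects the shift by $\nu$: with the paper's choice $\nu\asymp(1-\eps)^{1/4}$ we have $\nu\gg 1-\eps$ as $\eps\to1$, so even on the typical event $X_{u1,r}\approx 1$ the argument of $f$ is roughly $1-\nu$, and $f(1-\nu)=\sqrt{\tfrac{(1-\eps)+\eps\nu}{(1+\eps)-\eps\nu}}\approx\sqrt{\nu/2}\asymp(1-\eps)^{1/8}$. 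So the correct decay is $\lambda=O((1-\eps)^{1/8})$, not $O((1-\eps)^{1/2})$. This is not fatal to your argument — the exponent just changes the threshold on $d$ needed to overcome $(1-\eps)^{-3/2}$, from roughly $d\ge 5$ to roughly $d\ge 13$, still a constant — but the stated rate and its justification are wrong. (Incidentally, the issue you are working to patch here is real: the paper's final verification in this lemma invokes an exponent $-\tfrac{C}{4}\log(1-\eps)$, implicitly requiring $d\gtrsim\log\tfrac{1}{1-\eps}$, which does not follow from the stated hypothesis $\eps^2 d>C$ alone; the hypothesis of the outer Lemma~\ref{lem:contract-large}, $\eps^2d>-C\log(1-\eps)$, is what makes that step go through. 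Your route via the quantitative decay of $\lambda$ is a valid alternative fix, once the exponent is corrected.)
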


\begin{proof}[Proof of Lemma~\ref{lem:contract-unif-large}]

    First, by applying Markov's inequality on the hypothesis of the claim, note that for any $1\leq i \leq d$ we have 
   \begin{equation}\label{eq:prob-large}
       \PP\left[ \max_{\sf adversary} |Y_{ui,r} - X_{ui,r}| \geq \nu \mathrel{\Big |} \sigma_u = +1 \right] \leq \frac{1-\eps}{d}.
   \end{equation}
    Because of the independence of $x_{ui}$ from $x_{uj}$ for $i \neq j$, we have for any $I\subset[d]$ such that $I = \{ i_1, \ldots, i_m\}$,
    \[ \PP\left[ \max_{\sf adversary} |Y_{ui,r} - X_{ui,r}| \geq \nu \,  \forall i \in I \mathrel{\Big |} \sigma_u = +1  \right] = \prod_{i=1}^{m} \PP\left[ \max_{\sf adversary} |Y_{ui,r} - X_{ui,r}| \geq \nu \mathrel{\Big |} \sigma_u = +1 \right] \leq \left( \frac{1-\eps}{d}\right)^{|I|}.\]

    Let $f(x) = \sqrt{\frac{1-\eps x}{1 + \eps x}}$. By the independence of $x_{ui}$ from $x_{uj}$ for $i \neq j$, we apply the mean value theorem to obtain
    \begin{align*}
            &\EE \left[\max_{\sf{adversary}} \left| \prod_{i=1}^{d} f(X_{ui,r}) - \prod_{i=1}^{d} f(Y_{ui,r}) \right| \mathrel{\Big|}  \sigma_u = +1 \right]  \notag \\
        &\leq  \sum_{i=1}^{d} \EE\left[\max_{\sf{adversary}} |X_{ui,r} - Y_{ui,r}| \mathrel{\Big|} \sigma_{ui} = +1 \right] \cdot \frac{1}{(1-\eps)^{3/2}} \cdot  \prod_{\substack{j=1 \\ j \neq i}}^{d}  \EE \left[ \max_{\sf{adversary}} \max_{\substack{y_1 \in[Y_{u1,r}, X_{u1,r}] \\  \vdots \\  y_d \in [Y_{ud,k}, X_{ud,k}]}}f(y_j)  \mathrel{\Big|}  \sigma_u = +1 \right] \notag \\ \\
        &\leq \frac{d \cdot \xi}{(1-\eps)^{3/2}} \EE \left[ \prod_{j=1}^{d-1} \max_{\sf adversary} f(Y_{uj,r}) \mathrel{\Big |}  \sigma_u = +1\right]
    \end{align*}
    where the second inequality holds by monotonocity of $f$ and the symmetry of subtrees rooted at $u1, \ldots, ud$. For the first inequality, note that since $\left|\frac{df}{dx} \right|$ is convex on $[-1,1]$ with a minimum at $x = \frac{1}{2}$ and we truncated the random variables $Y_{ui,k}$ so that we always have $X_{ui,k}, Y_{ui,k} \in [-1,1]$, we can bound $\left|\frac{df}{dx} \right|$ by its value at the endpoints \[\max \left\{ \left|\frac{df}{dx}(-1) \right|, \left|\frac{df}{dx}(1)\right| \right\} = \max \left\{ \frac{\eps}{\sqrt{1-\eps}(\eps +1)^{3/2}}, \frac{\eps}{\sqrt{1+\eps}(1-\eps)^{3/2}} \right\} \leq \frac{1}{(1-\eps)^{3/2}}.\]
    For the first inequality, we also used the fact that 
    \begin{align*}
        \EE\left[\max_{\sf{adversary}} |X_{ui,r} - Y_{ui,r}| \mathrel{\Big|}  \sigma_{ui} = +1 \right] &= \EE\left[\max_{\sf{adversary}} |X_{ui,r} - Y_{ui,r}| \mathrel{\Big|} \sigma_{u} = +1 \right] \leq \xi
    \end{align*}
    which is fundamentally because $|X_{ui,r} - Y_{ui,r}|$ measures the worst-case \emph{magnitude} by which an adversary can perturb the beliefs. 
    
    Let $\cB_I$ be the event that $\max_{\sf adversary} |Y_{ui,r} - X_{ui,r}| \geq \nu $ for all $i \in I$ and $\max_{\sf adversary} |Y_{ui,r} - X_{ui,r}| \leq \nu$ for all $i \not \in I$. 
    
    By the law of total probability, 
    \begin{align*}
    &\EE \left[ \prod_{j=1}^{d-1} \max_{\sf adversary} f(Y_{uj,r}) \mathrel{\Big |}  \sigma_u = +1\right] \\
    &\leq  \sum_{I \subset [d]} \PP[\cB_I \, | \,  \sigma_u = +1 ] \EE \left[ \prod_{j=1}^{d-1} \max_{\sf adversary} f(Y_{uj,r}) \mathrel{\Big |} \cB_I \, , \, \sigma_u = +1\right] \\
    &\leq \sum_{I \subset [d]}\PP[\cB_I \, | \, \sigma_u = +1 ] \left( \frac{1+\eps}{1-\eps} \right)^{\frac{|I|}{2}} \EE \left[ \prod_{i \not \in I}\sqrt{\frac{1 - \varepsilon (\max \{-1, X_{ui,r} - \nu \})}{1 + \varepsilon (\max \{-1, X_{ui,r} - \nu \})}}  \mathrel{\Big|} \cB_I \, , \,  \sigma_u = +1 \right] \\
    &\leq \sum_{I \subset [d]}\PP[\cB_I \, | \,  \sigma_u = +1 ] \left( \frac{1+\eps}{1-\eps} \right)^{\frac{|I|}{2}}\prod_{i \not \in I} \EE \left[ \sqrt{\frac{1 - \varepsilon (\max \{-1, X_{ui,r} - \nu \})}{1 + \varepsilon (\max \{-1, X_{ui,r} - \nu \})}}  \mathrel{\Big|} \cB_I \,  , \, \sigma_u = +1 \right].
\end{align*}

Next, we use Lemma~\ref{lem:term-der} to bound $\EE \left[ \sqrt{\frac{1 - \varepsilon (\max \{-1, X_{ui,r} - \nu \})}{1 + \varepsilon (\max \{-1, X_{ui,r} - \nu \})}}  \mathrel{\Big|} \cB_I \, , \, \sigma_u = +1 \right]$. Now, we note that 
\begin{align*}
    &\EE \left[ \sqrt{\frac{1 - \varepsilon (\max \{-1, X_{ui,r} - \nu \})}{1 + \varepsilon (\max \{-1, X_{ui,r} - \nu \})}}  \mathrel{\Big|} \cB_I \, ,  \, \sigma_u = +1 \right] \\
    &= \EE \left[ \sqrt{\frac{1 - \varepsilon (\max \{-1, X_{ui,r} - \nu \})}{1 + \varepsilon (\max \{-1, X_{ui,r} - \nu \})}}  \mathrel{\Big|} | \max_{\sf adversary} |Y_{ui,k} - X_{ui,k}| < \nu \, ,  \, \sigma_u = +1 \right]\\
    &\leq \frac{\EE \left[ \sqrt{\frac{1 - \varepsilon (\max \{-1, X_{ui,r} - \nu \})}{1 + \varepsilon (\max \{-1, X_{ui,r} - \nu \})}}  \mathrel{\Big|}  \sigma_u = +1 \right]}{\PP[\max\limits_{\sf adversary} |Y_{ui,r} - X_{ui,r}| < \nu  \, | \,  \sigma_{u} = +1 ]} \\
    &\leq \frac{\lambda}{1 - \frac{1}{d}} \\
    &\leq \widetilde{\lambda},
\end{align*}
where the second equality follows since $x_{ui}$ is independent from $x_{uj}$ for $i \neq j$, the penultimate inequality follows by applying Lemma~\ref{lem:term-der} to bound the numerator and applying \eqref{eq:prob-large} to bound the denominator and the last equality follows for some $\widetilde{\lambda}< 1$ by taking sufficiently large $C(\eps^{*})$ so that $d > C(\eps^{*})$ is sufficiently large. Putting all this together, it follows that 
\begin{align*}
    &\EE \left[ \prod_{j=1}^{d-1} \max_{\sf adversary} f(Y_{uj,r}) \mathrel{\Big |}  \sigma_u = +1\right]  \\
    &\leq \sum_{m=0}^{d-1} \binom{d-1}{m}\left( \frac{1-\eps}{d}\right)^{m} \left( \frac{2}{1-\eps} \right)^{\frac{m}{2}} \widetilde{\lambda}^{d- m-1} \\
    &\leq \left(\widetilde{\lambda} + O\left(\frac{1}{d} \right)\right)^{d-1}
\end{align*}
for sufficiently large $d > d^{*}$. Combining these bounds, we obtain 
\begin{align*}
     \EE \left[\max_{\sf{adversary}} \left| \prod_{i=1}^{d} f(X_{ui,r}) - \prod_{i=1}^{d} f(Y_{ui,r}) \right| \mathrel{\Big|}  \sigma_u = +1 \right] \leq \frac{d\left(\widetilde{\lambda} + O\left(\frac{1}{d} \right)\right)^{d-1}}{(1-\eps)^{3/2}} \cdot \xi \leq \frac{\xi}{100}
\end{align*}
by taking sufficiently large $C(\eps^{*})$ to ensure that \[\left( \frac{\lambda}{1 - \frac{1}{d}} + O\left( \frac{1}{d}\right)\right)^{d-1} \geq d^{-1}  \left( \frac{\lambda}{1 - \frac{1}{d}} + O\left( \frac{1}{d}\right)\right)^{\frac{d}{2}} \geq d^{-1}  \left( \frac{\lambda}{1 - \frac{1}{d}} + O\left( \frac{1}{d} \right)\right)^{-\frac{C}{4} \log(1- \eps)} \geq \frac{(1-\eps)^{3/2}}{100 d},\] as desired.
\end{proof}

With this inductive contraction lemma in place, the proof of Lemma~\ref{lem:contract-large} is immediate.

\begin{proof}[Proof of Lemma~\ref{lem:contract-large}]

    Let $\nu = \frac{(1-\eps)\nu(\eps, \eps^{*})}{d}$. We induct on $r \geq 1$ to prove that 
    \begin{equation}\label{eq:non-asymp-large}
       \EE \left[ \max_{\sf{adversary}} |X_{u,t-r} - Z_{u,t-r}| \mathrel{\Big|} \sigma_u = +1 \right] \leq 2^{-(r-1)} \nu.
   \end{equation}
   The base case of $r=1$ is Lemma~\ref{lem:contract-small} by taking $C$ sufficiently large, and the inductive step follows by applying Lemma~\ref{lem:contract-unif-large}. 
\end{proof}

\section{$(c,k)$-spread adversary}\label{sec:well-spread}

Finally, we introduce a deterministic adversary - the $(c,k)$-spread adversary - for which we can also accurately reconstruct the spin of the root.  

\begin{definition}
    Let $c >0$ and $k >0$ be given. The \emph{$(c,k)$-spread adversary} is an adversary that receives leaf signs $\sigma_L$ and is allowed to flip $c$ of the leaf spins of his choice among every height $k$ subtree. 
\end{definition}

\begin{theorem}\label{thm:well-spread}
    There exists a universal constant $C > 0$ with the following property. For any $c > 0$ and $d, \varepsilon > 0$ such that $\varepsilon^2 d > C\log(d)$, there exists some $k(c)$ and $t_0$ such that if $t>t_0$ then there exists an algorithm \texttt{ALG} which takes as input an element of $\{ \pm 1 \}^{d^t}$ and outputs an element of $\Delta(\pm1)$ with the property that for any $(c,k)$-spread adversary \texttt{A} acting on the broadcast, then 
    \[ \EE_{(\sigma_R,\sigma_L) \sim \cD_{d, \eps, t} \, , \, x} d_{TV}(\{ \sigma_R \, | \, \sigma_L \}, \texttt{ALG}(A_{\rho}(\sigma_L))) \leq \delta \, .
\]
\end{theorem}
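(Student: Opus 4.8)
The plan is to use the two-phase algorithm sketched in the overview. On input $y \in \{\pm 1\}^{d^t}$, the algorithm $\texttt{ALG}$ first produces $y' = y \oplus n$, where $n \in \{0,1\}^{d^t}$ has i.i.d.\ $\Ber(\rho)$ coordinates with $\rho = \eps/4$ (this is the internal randomness of $\texttt{ALG}$, i.e.\ the variable $x$ appearing in the statement), and then outputs the element of $\Delta(\{\pm 1\})$ whose bias is $\texttt{BP}(y')$, computed by the usual bottom-up recursion. As in Section~\ref{sec:semi-rand}, write $X_{u,r}$ for the belief $\texttt{BP}$ assigns to a height-$r$ vertex $u$ when run on the clean leaves $\sigma_L$, and $Z_{u,r}$ for the belief it assigns when run on $A(\sigma_L) \oplus n$ restricted to the subtree of $u$; thus $X_{\sf{root},t}$ is the exact posterior bias $\EE[\sigma_R \mid \sigma_L]$ and $\texttt{ALG}(A(\sigma_L))$ has bias $Z_{\sf{root},t}$. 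Since $d_{TV}$ of two $\{\pm 1\}$-distributions is $O(1)$ times the difference of their biases, it suffices to bound $\EE\, \max_{\sf{adversary}} |X_{\sf{root},t} - Z_{\sf{root},t}| \leq O(\delta)$, where the expectation is over the broadcast and over $n$, and the maximum is over all $(c,k)$-valid choices of the adversary.

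The contraction argument is then run exactly as in the proof of Lemma~\ref{lem:small-eps}, except the induction is started at height $k$ rather than height $1$. The inductive step is unchanged: for a vertex $u$ at height $h+1$ with $k \le h < t$, the reduction to the extremal perturbations $Y_{ui,h}, Y'_{ui,h}$ through \eqref{eq:polyrize}--\eqref{eq:monotone}, followed by Lemma~\ref{lem:contract-induct} (if $\eps \le \eps^*$) or Lemma~\ref{lem:contract-unif-large} (if $\eps > \eps^*$), transfers verbatim. The only property of the adversary these lemmas use is that, conditioned on $\sigma_u$, the quantity $\max_{\sf{adversary}} |X_{ui,h} - Y_{ui,h}|$ is mutually independent across $i = 1, \dots, d$ — this is what allows the product of the functions $f(\cdot)$ to be split term by term. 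This independence still holds for the $(c,k)$-spread adversary, in fact more transparently than for the $\rho$-semirandom one: the flips the $(c,k)$-spread adversary is \emph{permitted} to make inside the subtree rooted at $ui$ (at most $c$ per height-$k$ sub-subtree) form a fixed combinatorial constraint carrying no randomness, so $\max_{\sf{adversary}}|X_{ui,h} - Y_{ui,h}|$ is a deterministic function of the broadcast spins and the injected noise bits within the subtree of $ui$, and those are independent across $i$ once $\sigma_u$ is fixed. Since $\eps^2 d > C \log d$, each level contracts $\EE[\max_{\sf{adversary}}|X_{u,h} - Z_{u,h}| \mid \sigma_u = +1]$ by a factor at least $\tfrac12$ (and by $\tfrac{1}{100}$ in the large-$\eps$ case), just as before.

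The one genuinely new ingredient is the base case at height $k$: after noise injection, flipping at most $c$ leaves inside a height-$k$ subtree — even with full knowledge of the spins — must perturb the computed belief by at most $\min\{\tfrac{\eps}{10},\ (1-\eps)\nu(\eps,\eps^*)/d\}$ in expectation, which is precisely the hypothesis $\xi$ demanded by Lemmas~\ref{lem:contract-induct} and~\ref{lem:contract-unif-large}. This is exactly what the bounded-sensitivity theorem Theorem~\ref{thm:c-flip} provides, and since its bound tends to $0$ as $k \to \infty$ for fixed $c, \eps, d$, we may fix $k = k(c,\eps,d)$ large enough to meet the threshold; this is also where noise injection is essential, since without it a single adversarial flip inside an almost-unanimous subtree can move the unregularized belief by a constant and the base case would fail. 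Iterating the per-level contraction from height $k$ to the root gives $\EE\,\max_{\sf{adversary}}|X_{\sf{root},t} - Z_{\sf{root},t}| \le 2^{-(t-k)} \cdot \tfrac{\eps}{10} \le \delta$ once $t \ge t_0 := k(c,\eps,d) + \log_2\!\big(\tfrac{\eps}{10\delta}\big)$, which together with the $d_{TV}$ bound above proves the theorem after rescaling $\delta$ by the absolute constant. I expect the main obstacle to be the base case, Theorem~\ref{thm:c-flip}; granting it, the remaining work is a cosmetic adaptation of Section~\ref{sec:semi-rand}, the one point needing care being the observation that the deterministic, globally coordinated spread adversary does not destroy conditional independence across subtrees.
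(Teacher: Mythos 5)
Your high-level roadmap — noise injection to regularize the posterior, a base case at height $k$, and then the semirandom-adversary contraction from height $k$ to the root — is the same as the paper's. The inductive step (Lemmas~\ref{lem:contract-induct} and~\ref{lem:contract-unif-large}) transfers for exactly the reason you state: once independence of $\max_{\sf adversary}|X_{ui,r} - Y_{ui,r}|$ across $i$ conditioned on $\sigma_u$ is available, the adversary's identity is irrelevant, and the $(c,k)$-spread constraint respects subtree boundaries, so this independence holds. That part of your argument is correct.

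The gap is in the base case, and it is not minor. Your algorithm injects independent $\mathrm{Ber}(\rho)$ noise with $\rho = \eps/4$ and then runs the \emph{clean} belief-propagation recursion $\texttt{BP}$ (with parameter $\eps$) on the noised-and-corrupted leaves. But Theorem~\ref{thm:c-flip} does not bound the sensitivity of clean $\texttt{BP}$ on noisy input. Its algorithm $\texttt{A}_\cN$ computes the posterior \emph{under the noisy model} $\cN$ (broadcast followed by a BSC with crossover $\tfrac{1-\psi}{2}$, where $\psi = \Theta(\delta/c)$ is \emph{tiny}, so the injected noise is close to maximal), and the whole proof of Theorem~\ref{thm:c-flip} — in particular the key bound~\eqref{eq:psi-bound} — hinges on the fact that under $\cN$, flipping a single leaf changes the likelihood by a factor in $\left[\tfrac{1-\psi}{1+\psi}, \tfrac{1+\psi}{1-\psi}\right]$, which is controllably close to $1$. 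The analogous per-flip likelihood ratio for the clean model $\cT$ that your $\texttt{BP}$ implicitly assumes is $\tfrac{1+\eps}{1-\eps}$, which is not close to $1$ when $\eps$ is bounded away from zero. So the pointwise smoothness argument that gives~\eqref{eq:lem1} simply does not apply to your algorithm, and you cannot cite Theorem~\ref{thm:c-flip} for your base case: it is a statement about a different function applied to differently-scaled noise.

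Two further mismatches compound this. First, your noise rate $\rho = \eps/4$ is the semirandom adversary's parameter, chosen to balance against $\eps$; the paper's injected BSC has flip probability $\approx \tfrac{1}{2}$, chosen so that $c$ flips cost at most $O(\psi c) = O(\delta)$ in posterior ratio regardless of $\eps$. These are not interchangeable. Second, clean $\texttt{BP}$ on leaves that have been passed through a genuine noise channel is not the Bayes-optimal estimate under \emph{any} model — it is over-confident — so even with zero adversarial flips, $Z_{u,k} = \texttt{BP}((\sigma_L \oplus n)|_u)$ is not $X_{u,k}$, and bounding this discrepancy requires its own contraction argument down from the leaves of the height-$k$ subtree, which you have not set up. The paper avoids all of this by applying the \emph{entire} Theorem~\ref{thm:c-flip} procedure (noise injection with crossover $\approx \tfrac12$ followed by $\texttt{A}_\cN$, which is BP under the noisy model) inside each height-$k$ subtree to produce $Z_{u,k}$, and then switching to clean $\texttt{BP}$ only from height $k$ upward, where Lemma~\ref{lem:contract-small-spread} (a direct consequence of Theorem~\ref{thm:c-flip}) supplies the base case. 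To fix your write-up you would either need to adopt that two-phase algorithm, or prove from scratch an averaged (not pointwise) sensitivity bound for clean $\texttt{BP}$ on $\mathrm{Ber}(\eps/4)$-noised leaves under $c$ adversarial flips, which is a substantially harder analytic task than what Theorem~\ref{thm:c-flip} provides.
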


As before, we implement a contraction argument. Fix $d, \eps$. Note, however, that for instance when $ c \geq d$, we cannot expect the base case of our contraction argument as in Lemma~\ref{lem:contract-small} to work because the adversary could choose to concentrate all of his flips in one subtree of height 1. For the base case of the induction, we instead look at a subtree of height $k$; this is captured by the following Theorem~\ref{thm:c-flip}.

\begin{definition}
    The \emph{$c$-flip adversary} is an adversary that is allowed to flip $c$ of the leaves of his choice after witnessing the entire broadcast tree process. 
\end{definition}

\begin{theorem}\label{thm:c-flip}
     There exists $C>0$ such that the following holds: for every $c,d,\varepsilon, \delta >0$ there exists an algorithm \texttt{A} and a constant $K(c) > 0$ such that if $d \varepsilon^2 > C$ and if $t \geq \log(K \delta^{-1})$, there exists an algorithm $\texttt{ALG}\colon \{\pm 1\}^{d^t} \to \Delta(\pm1)$ such that for any $c$-flip adversary \texttt{A}, we have 
     \[ \EE_{(\sigma_R, \sigma_L) \sim \cD_{d, \eps, t}} d_{TV}(\{ \sigma_R \, | \, \sigma_L \}, \texttt{ALG}(\texttt{A}(\sigma_L))) \leq \delta. \]
\end{theorem}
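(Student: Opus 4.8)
The plan is to follow the two-stage algorithm sketched in the introduction: $\texttt{ALG}$ first performs a \emph{noise-injection} step, flipping each leaf independently with probability $q = q(\eps,d,\delta) \in (0,\tfrac12)$, and then runs belief propagation on the resulting leaves, using the leaf-edge parameter $\eps' = \eps(1-2q)$ so that it outputs the \emph{exact} posterior of $\sigma_R$ given the noised observations. I would then prove the guarantee by a contraction argument whose skeleton is identical to that of Theorem~\ref{thm:const-corrupt}.

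First I would reduce to a statement comparing two runs of $\texttt{ALG}$'s internal BP: let $X_{u,r}$ be the belief it produces at $u$ on the noised-but-uncorrupted leaves and $Z_{u,r}$ the belief on the noised-\emph{and}-corrupted leaves. A robust-reconstruction argument in the spirit of \cite{MNS16} — using that $d\eps^2 > C$ places us above the Kesten--Stigum threshold by a constant factor, together with the identity $\EE\,|F_t(\sigma_L) - G_t|^2 = \EE\, F_t(\sigma_L)^2 - \EE\, G_t^2$ for the clean posterior $F_t$ and the noised posterior $G_t$, whose right-hand side tends to $0$ as $q \to 0$ uniformly in $t$ (the relevant fact being a bound, uniform in $t$, on the effect of a small amount of random leaf noise on the root posterior) — shows that for $q$ small enough the noised posterior is within $d_{TV}$-distance $\delta/2$ of $\{\sigma_R \mid \sigma_L\}$. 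It therefore remains to prove $\EE\,\max_{|T|\le c}|X_{\text{root},t} - Z_{\text{root},t}| \le \delta/2$, where $T$ ranges over placements of at most $c$ leaf flips.

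Next I would run the contraction. For the inductive step, at a vertex $u$ of height $r > k$ the $c$-flip adversary can perturb at most $c$ of $u$'s $d$ child subtrees — and above the least common ancestor of its $c$ corruptions only a single subtree is perturbed — so I would plug this into the mean-value/Taylor expansion of the $BP$ function exactly as in Lemma~\ref{lem:contract-induct}/Lemma~\ref{lem:contract-unif-large}. The key point is that the blow-up from the $\le c$ perturbed (possibly adversarial) children is dominated by the product $\prod_j f(X_{uj})$ over the remaining $\ge d-c$ unperturbed children, which by Lemma~\ref{lem:36} are confident; this is enough to recover a per-level contraction factor $\le \tfrac12$ from $d\eps^2 > C$ alone, rather than from $d\eps^2 > C\log d$. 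Iterating over the top $t - k$ levels then turns a base-case bound of $\eps/10$ into $\le \delta/2$ once $t \ge k(c) + \log_2(\eps/(5\delta))$, i.e. $t \ge \log(K(c)\delta^{-1})$.

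The main obstacle is the base case at height $k = k(c)$: showing $\EE\big[\max_{|T|\le c}|X_{u,k} - Z_{u,k}| \,\big|\, \sigma_u = +1\big] \le \eps/10$ for $k$ large enough in terms of $c$. This is exactly the ``bounded sensitivity of the posterior inference function'' that Theorem~\ref{thm:c-flip} asserts, and it is where the noise injection earns its keep: without it, a constant number of corruptions placed on an atypical sub-subtree beneath a chain of near-balanced ancestors could shift the height-$k$ belief by $\Omega(1)$, and when $d\eps^2$ exceeds the threshold only by a constant such configurations persist inside a height-$k$ subtree with probability bounded away from $0$ however large $k$ is, so raw BP would fail the base case. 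After noise injection the height-$k$ belief is a sufficiently smoothed function of its leaves that the worst-case $c$-flip perturbation, averaged over the broadcast and the noise, decays with $k$ and can be pushed below $\eps/10$ using only $d\eps^2 > C$. The delicate point is to choose $q$ small enough that the noise-only error of the second paragraph stays below $\delta/2$ while still being large enough for this base-case estimate to go through — possible precisely because we have a constant factor of slack above Kesten--Stigum.
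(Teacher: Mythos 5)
Your proposal takes a genuinely different route from the paper, and the route has gaps that I do not see how to close. The paper does \emph{not} prove Theorem~\ref{thm:c-flip} with a second contraction argument for the corruption error. Its algorithm injects \emph{heavy} noise --- each leaf is passed through a binary symmetric channel with flip probability $\tfrac{1-\psi}{2}$, i.e.\ nearly $1/2$, with $\psi = \min\{\delta/(8c), \log(1+\delta/4)/(4c)\}$ small --- and then runs exact belief propagation for the noised model. Because each leaf observation now carries an almost-vanishing amount of information, the per-leaf likelihood ratio $\PP_{\cN}[\sigma_L = y \oplus e_v\mid\sigma_R]/\PP_{\cN}[\sigma_L = y\mid\sigma_R]$ lies in $\bigl[\tfrac{1-\psi}{1+\psi},\tfrac{1+\psi}{1-\psi}\bigr]$, so by Bayes' rule $c$ flips change the posterior by at most a factor $e^{O(\psi c)} = e^{O(\delta)}$ \emph{pointwise}, for every input and every placement of the $c$ flips (equation~\eqref{eq:psi-bound}). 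The contraction machinery from \cite{MNS16} is only used to control the noise-only error $\EE|\texttt{A}_{\cN}(\sigma_L\oplus x) - \texttt{A}_{\cT}(\sigma_L)|$; the corruption error is handled by this direct Lipschitz bound, with no averaging over the broadcast, no base case, and no dependence of the $d\eps^2$ threshold on $c$.

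Your version injects \emph{light} noise ($q$ small, chosen to keep the noise-only error below $\delta/2$) and then attempts a contraction. Two things go wrong. First, the base case at height $k(c)$ is not established: you observe yourself that without smoothing a chain of near-balanced ancestors makes raw BP fail the base case, and that fixing this is "exactly the `bounded sensitivity of the posterior inference function' that Theorem~\ref{thm:c-flip} asserts," which is circular. The mechanism by which \emph{light} noise would supply this smoothing is absent: with small $q$ the per-leaf likelihood ratio is $\tfrac{1-q}{q}\gg 1$, so the pointwise Lipschitz bound that drives the paper's argument is simply false in your regime, and the adversary's chosen flips remain critical with probability $\geq (1-q)^{O(1)}$ after the noise is added. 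Second, even granting the base case, the per-level contraction factor when $m$ child subtrees are perturbed scales like $\kappa\, m\, \lambda^{d-m}$ (with $\lambda<1$ from Lemma~\ref{lem:term-der}), and the adversary can keep $m_r = c$ for nearly all $t-k$ levels by placing one flip under each of $c$ distinct children of the root. To make even this single-level factor $<1$ one needs roughly $d - c \gtrsim \log(c)/\log(1/\lambda)$, so the threshold on $d\eps^2$ would have to grow with $c$ --- contradicting the theorem's universal $C$ --- and the argument collapses entirely once $c \geq d$. The paper's Bayes-rule bound suffers from none of this because no per-level recursion is used. (The introduction's phrase "flipping each leaf node independently with small probability" is misleading shorthand --- the probability $\tfrac{1-\psi}{2}$ is close to $1/2$, not close to $0$; it is $\psi$, the retained correlation, that is small.)
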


\begin{proof}[Proof of Theorem~\ref{thm:c-flip}]

Let $\psi = \min \{ \delta/(8c), \log(1+\delta/4)/(4c)\} $. To define the algorithm \texttt{A}, consider first the following random process $\mathcal{N}$. A draw $\sigma_L \sim \mathcal{N}$ corresponds to the output $\{ \pm 1 \}^{(d-1)^t}$ from running the broadcast on tree process with flip probability $\frac{1 - \varepsilon}{2}$ down a $d$-regular tree (i.e., each non-leaf node has $d-1$ children) and then passing each bit of the leaves through independent binary symmetric channels with flip probability $\frac{1 - \psi}{2}$.

We also define the ``clean'' process ${\cT}$: a draw $\sigma_L \sim {\cT}$ corresponds to the $\{\pm 1\}^{(d-1)^t}$ spins of the leaves at level $t$ of a broadcast on tree process with flip probability $\frac{1 - \varepsilon}{2}$. 

We let
\[
\texttt{A}_\cN(x) = \PP_{\mathcal{N}}(\sigma_R = 1 \mathrel{\Big|} \sigma_L = x) - \PP_{\mathcal{N}}(\sigma_R = -1 \mathrel{\Big|} \sigma_L = x) \, ,
\]
and
\[
\texttt{A}_{\cT}(x) = \PP_{{\cT}}(\sigma_R = 1 \mathrel{\Big|} \sigma_L = x) - \PP_{{\cT}}(\sigma_R = -1 \mathrel{\Big|} \sigma_L = x) \, ,
\]
both of which can be computed by belief propagation.
Ultimately, we will take $\texttt{A} = \texttt{A}_{\cN}$.

The goal is to find some $K$ such that for $t \geq \log(K\delta^{-1})$, we have the bound 
\begin{equation}
\label{eq:c-flip-posterior}
    \EE_{\sigma_L \sim {\cT}} \max_{v_1, \ldots, v_c \in L} \EE_{x \sim \Ber(\frac{1-\psi}{2})^{(d-1)^t}} \left|\texttt{A}_\cN \left(\sigma_L \oplus x \oplus \bigoplus_{i=1}^{c} e_{v_i} \right) - \texttt{A}_{\cT}(\sigma_L) \right| \leq \delta.
\end{equation}

We claim that \cite[Lemmas 3.4, 3.5]{MNS16} gives the quantitative bound that there is some constant $K(c) > 0$ such that if $t \geq \log(K\delta^{-1})$, then 
\begin{equation}\label{eq:abs-noisy}
    \EE_{\sigma_L \sim \mathcal{N}}|\texttt{A}_{\mathcal{N}}(\sigma_L)| \geq 1 - \frac{4}{\varepsilon^2 d}.
\end{equation}
Define $S$ to be the sum of all the spins drawn from the process $\mathcal{N}$. Then \cite[Lemmas 3.4, 3.5]{MNS16} gives 
\begin{align*}
    \begin{cases}
        \EE[S | \sigma_{\sf root} = +1] = \psi \varepsilon^t d^t, \\
        \mathrm{Var}[S | \sigma_{\sf root} = +1] = (1 - \psi^2) d^t + (1-\varepsilon^2) \psi^{2} \frac{((\eps^2d)^t - 1)d^t}{\eps^2d - 1}.
    \end{cases}
\end{align*}

Since $\psi \asymp_c \delta$, there exists $K(c)$ such that if $t \geq \log(K\delta^{-1})$, then $(\eps^2 d)^t > \frac{\psi^{-2}}{2}$, meaning that $(1 - \psi^2) d^t$ is dominated by $(1-\varepsilon^2) \psi^{2} \frac{((\eps^2d)^t - 1)d^t}{\eps^2d - 1}$.
In particular, in this case we have the (crude) bound of 
\[ \mathrm{Var}[S | \sigma_{\sf root} = +1] \leq \frac{3}{2}(1-\varepsilon^2) \psi^2 \frac{((\eps^2d)^t - 1)d^t}{\eps^2d - 1}.\]

By Chebyshev's inequality, it follows that if $t \geq \log(K\delta^{-1})$ then 
\[ \PP[S >0 | \sigma_{\sf root} = +1] \geq 1 - \frac{2}{\eps^2 d}. \]
Because $\sgn(\texttt{A}_{\cN}(x))$ is the optimal estimator of $\sigma_{\sf root}$ given $\sigma_L$, we have 
\begin{align*}
     \frac{1 + \EE|\texttt{A}_{\cN}(x)|}{2}&=\PP[\sgn(\texttt{A}_{\cN}(x)) = +1| \sigma_{\sf root} = +1]   \\
    &\geq \PP[\sgn(S) = +1| \sigma_{\sf root} = +1] \\
    &\geq 1 - \frac{2}{\eps^2 d}
\end{align*}
which upon rearranging gives \eqref{eq:abs-noisy}.

Combining \eqref{eq:abs-noisy} with the proof of \cite[Theorem 3.3]{MNS16}, we have that if $t \geq \log(K\delta^{-1})$, then 
\[ \EE_{\sigma_L \sim {\cT}} \EE_{x \sim \Ber(\frac{1 - \psi}{2})^{(d-1)^t}}|\texttt{A}_{\mathcal{N}}(x \oplus \sigma_L) - \texttt{A}_{{\cT}}(\sigma_L)| \leq \frac{1}{2} \EE_{\sigma_{L} \sim {\cT_{t-1}}} \EE_{x \sim \Ber(\frac{1 - \psi}{2})^{(d-1)^{t-1}}}|\texttt{A}_{\mathcal{N}}(x \oplus \sigma_{L}) - \texttt{A}_{{\cT_{t-1}}}(\sigma_L)|. \]
By iterating this contraction result, it follows that if $t \geq \log( \max\{2, K \} \delta^{-1})$, then we have 
\[ \EE_{\sigma_L \sim {\cT}} \EE_{x \sim \Ber(\frac{1-\psi}{2})^{(d-1)^t}}|\texttt{A}_\cN(x \oplus \sigma_L) - \texttt{A}_{\cT}(\sigma_L)| \leq \delta/2. \]

By the triangle inequality, in order to obtain the desired conclusion it suffices therefore to prove 
\[\EE_{\sigma_L \sim {\cT}} \max_{v_1, \ldots, v_c \in L} \EE_{x \sim \Ber(\frac{1-\psi}{2})^{(d-1)^t}} \left|\texttt{A}_\cN\left(\sigma_L \oplus x \oplus \bigoplus_{i=1}^{c} e_{v_i} \right) - \texttt{A}_\cN(x \oplus \sigma_L) \right| \leq \delta/2.\]

By symmetry and the triangle inequality, it suffices to prove that 
\begin{equation}\label{eq:lem1}
    \EE_{\sigma_L' \sim {\cT}}\max_{v_1, \ldots, v_c \in L} \EE_{x \sim \Ber(\frac{1-\psi}{2})^{(d-1)^t}} \left| \mathbb{P}_{\mathcal{N}}\left(\sigma_R = 1 | \sigma_L = \sigma_L' \oplus x \oplus \bigoplus_{i=1}^{c} e_{v_i} \right) - \mathbb{P}_{\mathcal{N}}(\sigma_R = 1 | \sigma_L = \sigma_L' \oplus x)  \right| \leq \delta/4.
\end{equation}

In fact, we will show that the difference above is at most $\delta/4$ for \emph{any} choice of $\sigma_L', v_1,\ldots,v_c, x$.
For any $x_L \in \{\pm 1\}^{(d-1)^t}$, using Bayes' rule, we can write 
\[ \PP_{(\sigma_R,\sigma_L) \sim \cN}[\sigma_R = 1\mathrel{\Big|}\sigma_L = x_L] = \frac{\PP_{(\sigma_R,\sigma_L) \sim \cN} [\sigma_L = x_L|\sigma_R = 1] \PP_{(\sigma_R,\sigma_L) \sim \cN}[\sigma_R = 1]}{\PP_{(\sigma_R,\sigma_L) \sim \cN}[\sigma_L = x_L]} \]
and for any choice of $v_1, \ldots, v_c$,
\[ \PP_{(\sigma_R,\sigma_L) \sim \cN}\left[\sigma_R = 1\mathrel{\Big|}\sigma_L = x_L \oplus \bigoplus_{i=1}^{c} e_{v_i} \right] = \frac{\PP_{(\sigma_R,\sigma_L) \sim \cN} \left[\sigma_L = x_L \oplus \bigoplus_{i=1}^{c} e_{v_i}|\sigma_R = 1 \right] \PP_{(\sigma_R,\sigma_L) \sim \cN}[\sigma_R = 1]}{\PP_{(\sigma_R,\sigma_L) \sim \cN}[\sigma_L = x_L \oplus e_v]}. \]
Note that $(\frac{1 - \psi}{2})^{c} \leq \frac{\PP_{(\sigma_R,\sigma_L) \sim \cN}\left[\sigma_L = x_L \oplus \bigoplus_{i=1}^{c} e_{v_i} \mathrel{\Big|}\sigma_R = 1 \right]}{\PP_{(\sigma_R,\sigma_L) \sim \cN}[\sigma_L = x_L|\sigma_R = 1]} \leq (\frac{1 + \psi}{2})^{c} $ and $(\frac{1 - \psi}{2})^{c} \leq \frac{\PP_{(\sigma_R,\sigma_L) \sim \cN}\left[\sigma_L = x_L \oplus \bigoplus_{i=1}^{c} e_{v_i} \right]}{\PP[\sigma_L = x_L]} \leq (\frac{1 + \psi}{2})^c$. That is, 
\begin{equation}\label{eq:psi-bound}
    1 - 2\psi c \leq \left( \frac{\frac{1 - \psi}{2}}{\frac{1 + \psi}{2}} \right)^c \leq \frac{\PP_{(\sigma_R,\sigma_L) \sim \cN}\left[\sigma_R = 1\mathrel{\Big|}\sigma_L = x_L \oplus \bigoplus_{i=1}^{c} e_{v_i} \right]}{\PP_{(\sigma_R,\sigma_L) \sim \cN}[\sigma_R = 1\mathrel{\Big|}\sigma_L = x_L]} \leq \left( \frac{\frac{1+\psi}{2}}{\frac{1-\psi}{2}} \right)^c \leq e^{4\psi c}.
\end{equation}

Our choice of $\psi$ gives the desired inequality \cref{eq:lem1}.
\end{proof}

As mentioned earlier, we use Theorem~\ref{thm:c-flip} as a primitive to kickstart our contraction argument. To be more precise, our algorithm and the relevant parameters in our proof of Theorem~\ref{thm:well-spread} are as follows. Fix $d,\eps, \delta$ and let $K(c)$ be the constant from Theorem~\ref{thm:c-flip} and set 
\begin{equation}
    \label{eq:parameter-bounds-1}
    \quad k = K(c) \, , \, t_0 = 10^5(\log(d) - \log(1-\eps) + \log(\delta^{-1})).
\end{equation}

Given the output of a broadcast on tree process that was run up till depth $t$, we partition the bottom $k$ levels of the tree into subtrees of size $d^{k}$ (for an illustration, refer to Figure~\ref{fig:tree}).
\begin{figure}[!htp]
    \centering
    \includegraphics[width = 8.5cm]{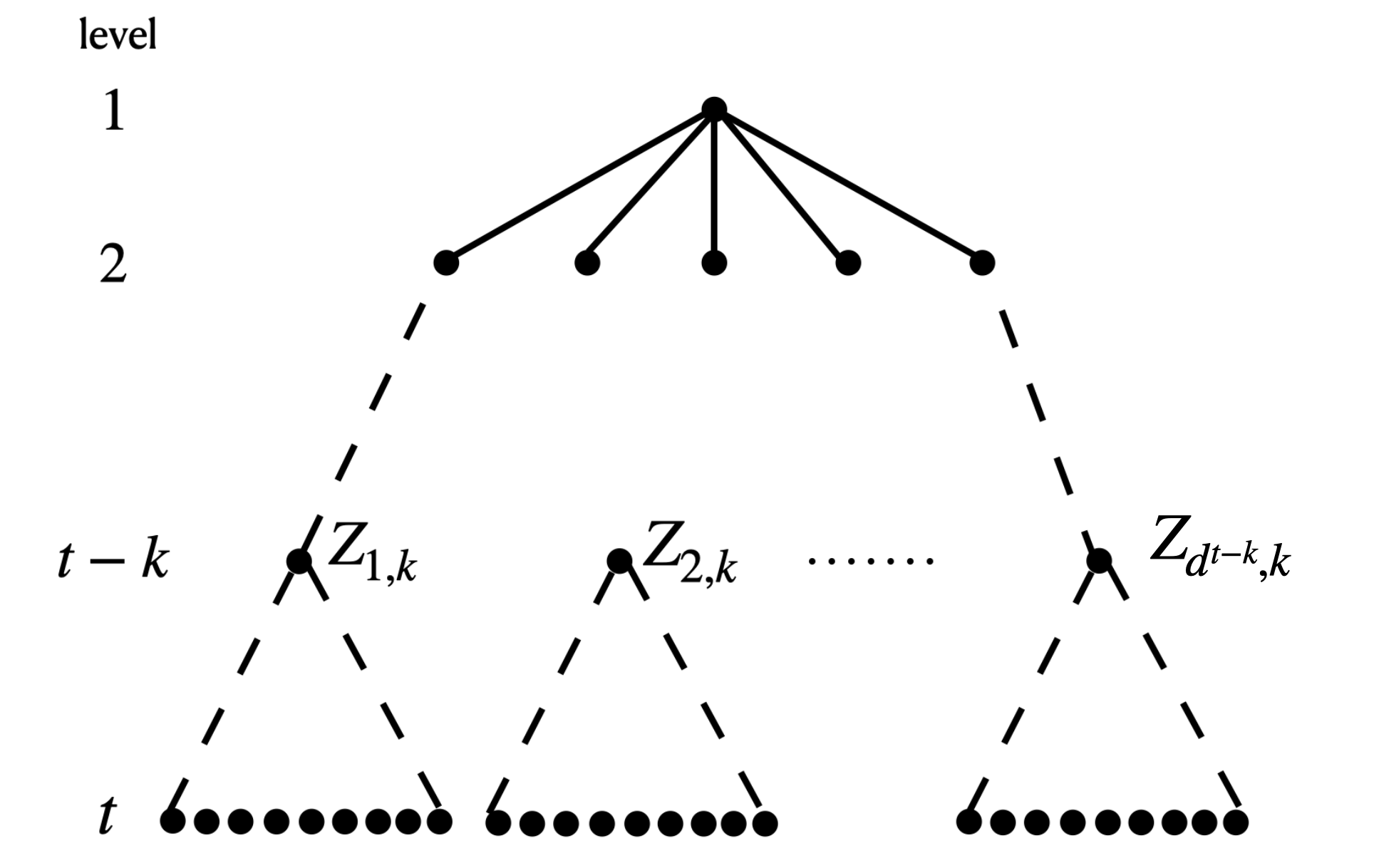}
    \caption{An illustration of the partition of the levels of the tree in the algorithm.}
    \label{fig:tree}
\end{figure}
On the leaves of each of these subtrees, run the algorithm from Theorem~\ref{thm:c-flip} and suppose the outputs are beliefs $Z_{i, k}$ ($1 \leq i \leq d^{t - k}$) for the $d^{t - k}$ nodes at level $t - k$ of the broadcast tree. Now, recursively apply belief propagation until we reach the root and output the belief thus obtained. Specifically, recursively define 
\[ Z_{j,i} = BP(Z_{dj+1, i-1}, Z_{dj+2, i-1}, \ldots, Z_{d(j+1)-1, i-1}) \]
for $k \leq i \leq t$ and $1 \leq j \leq d^{i}$, where $BP \colon \RR^d \to \RR$ is the belief propagation function given by 
\[ BP(X_1, \ldots, X_d) = \frac{\prod_{i=1}^{d}(1 + \varepsilon X_{i}) - \prod_{i=1}^{d}(1 - \varepsilon X_{i})}{\prod_{i=1}^{d}(1 + \varepsilon X_{i}) + \prod_{i=1}^{d}(1 - \varepsilon X_{i})}. \] 

We can restate Theorem~\ref{thm:c-flip} in terms of an upper bound on the maximum perturbation from the ground truth $X_{u,k}$. 

\begin{lemma}[Consequence of Theorem~\ref{thm:c-flip} for the parameters in (\ref{eq:parameter-bounds-1})]\label{lem:contract-small-spread}
      There exists an absolute constant $C$ such that if $d\varepsilon^2 \geq C \log(d)$ then for any vertex $u$ at level $t-k$ where $t \geq t_0$ where $k$ and $t_0$ are as in \eqref{eq:parameter-bounds-1}, we have
    \[ \EE \max_{\sf{adversary}} \left[\left|X_{u, k} - Z_{u, k}\right| \mathrel{\Big|}  \sigma_u = +1 \right] \leq \frac{\nu(\eps, \eps^{*})}{10 d}.\] 
\end{lemma}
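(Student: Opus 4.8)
The plan is to read the bound off Theorem~\ref{thm:c-flip} applied to a single height-$k$ subtree, after two elementary reductions: localising the adversary to the subtree rooted at $u$, and removing the conditioning on $\sigma_u=+1$ via the sign-flip symmetry of the broadcast process.

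First I would identify the two beliefs. Fix a vertex $u$ at level $t-k$ and let $T_u$ be the height-$k$ subtree hanging below it, with leaf set $L(u)$. By definition $X_{u,k}=\PP(\sigma_u=1\mid\sigma_{L(u)})-\PP(\sigma_u=-1\mid\sigma_{L(u)})$ is exactly $\texttt{A}_{\cT}$ evaluated on the clean spins of $L(u)$, where $\cT$ is the height-$k$ broadcast process of Theorem~\ref{thm:c-flip}; and $Z_{u,k}$ is the output of the algorithm of Theorem~\ref{thm:c-flip} run on $T_u$, i.e.\ $\texttt{A}_{\cN}$ evaluated on the spins of $L(u)$ after (a) the spread adversary has applied its flips and (b) the algorithm has injected its independent $\Ber\bigl(\tfrac{1-\psi}{2}\bigr)$ noise. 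Localisation is then valid for two reasons: $X_{u,k}$ and $Z_{u,k}$ depend only on the spins of $L(u)$, so the adversary's flips outside $T_u$ are irrelevant; and $T_u$ is itself a height-$k$ subtree, so inside $T_u$ a $(c,k)$-spread adversary may flip at most $c$ leaves, i.e.\ it acts exactly as a $c$-flip adversary on $T_u$. Hence $\max_{\sf adversary}$ in the statement coincides with $\max_{v_1,\dots,v_c\in L(u)}$.

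Next I would remove the conditioning. The quantity $\max_{v_1,\dots,v_c}|X_{u,k}-Z_{u,k}|$ is invariant under a global sign flip of the spins of $T_u$: the posterior bias is odd in the leaf spins, $\texttt{A}_{\cN}$ is odd since $\cN$ is sign-symmetric, the adversary's optimal flip set transforms correspondingly, and the injected-noise distribution is itself sign-symmetric. Since the law of $\sigma_{L(u)}$ conditioned on $\sigma_u=-1$ is the global sign flip of its law conditioned on $\sigma_u=+1$, it follows that $\EE[\,\cdot\mid\sigma_u=+1]=\EE[\,\cdot\mid\sigma_u=-1]=\EE_{\sigma_L\sim\cT}[\,\cdot\,]$. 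Thus it suffices to bound $\EE_{\sigma_L\sim\cT}\max_{v_1,\dots,v_c}\EE_x\bigl|\texttt{A}_{\cN}(\sigma_L\oplus x\oplus\bigoplus_i e_{v_i})-\texttt{A}_{\cT}(\sigma_L)\bigr|$, which is exactly the left-hand side of \eqref{eq:c-flip-posterior} for a depth-$k$ tree. Finally I would invoke Theorem~\ref{thm:c-flip} with error parameter $\delta':=\nu(\eps,\eps^{*})/(10d)$ in place of $\delta$: its hypothesis $d\eps^2>C$ is implied by $d\eps^2\ge C\log d$ (for $C$ large), and its depth requirement becomes $k\ge\log\bigl(K(c)\cdot 10d/\nu(\eps,\eps^{*})\bigr)$, which is guaranteed by the choice of $k$ in \eqref{eq:parameter-bounds-1}. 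This gives \eqref{eq:c-flip-posterior} with right-hand side $\delta'=\nu(\eps,\eps^{*})/(10d)$, proving the lemma. If one prefers the $\max_{v}$ in innermost position (i.e.\ $\EE_{\sigma_L,x}\max_{v}$), the same conclusion holds: the $c$-flip correction \eqref{eq:psi-bound} is a pointwise bound valid for every $\sigma_L,x,v$, while the residual term $\EE_{\sigma_L,x}|\texttt{A}_{\cN}(x\oplus\sigma_L)-\texttt{A}_{\cT}(\sigma_L)|$ is controlled by the noise-robustness contraction in the proof of Theorem~\ref{thm:c-flip}.

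The one place genuinely worth care---and what I would flag as the crux---is the interaction between the order of operations and the noise injection: the spread adversary must commit its $c$ flips on the basis of the (nature-generated) leaf spins alone, before the algorithm adds the $\Ber\bigl(\tfrac{1-\psi}{2}\bigr)$ noise, which is precisely what licenses placing $\max_{v}$ outside $\EE_x$ as in \eqref{eq:c-flip-posterior}; this holds by construction of the algorithm. A secondary, purely bookkeeping point is reconciling the ``$d$ children'' convention of the belief-propagation function $BP\colon\RR^d\to\RR$ used here with the ``$d$-regular (i.e.\ $d-1$ children)'' convention appearing in the proof of Theorem~\ref{thm:c-flip}; this is a relabelling and presents no difficulty.
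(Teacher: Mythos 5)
The paper never writes out a proof of Lemma~\ref{lem:contract-small-spread}; it is labeled as a ``Consequence of Theorem~\ref{thm:c-flip}'' and left implicit. Your reconstruction captures exactly the intended argument: localize to the height-$k$ subtree below $u$ (where the $(c,k)$-spread adversary acts as a $c$-flip adversary on that subtree's leaves), remove the conditioning on $\sigma_u=+1$ by the sign-flip symmetry of the broadcast process and of the injected noise, and then invoke the $\EE_{\sigma_L}\max_{v_1,\dots,v_c}\EE_x$ form \eqref{eq:c-flip-posterior} that the proof of Theorem~\ref{thm:c-flip} actually establishes (slightly stronger than the theorem's own $\max_{\texttt{A}}\,\EE$ statement), applied with error parameter $\delta'=\nu(\eps,\eps^*)/(10d)$. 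Your two ``flags'' --- the order of operations (adversary commits before noise injection) and the $d$ vs.\ $d-1$ children convention --- are both real and both handled correctly.

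The one place you gloss over is the parameter check at the end. You assert that the depth requirement $k\ge \log\bigl(K(c)\cdot 10d/\nu(\eps,\eps^*)\bigr)$ ``is guaranteed by the choice of $k$ in \eqref{eq:parameter-bounds-1},'' but \eqref{eq:parameter-bounds-1} sets $k=K(c)$, which depends only on $c$, while the required lower bound grows like $\log d + \log\nu(\eps,\eps^*)^{-1}$. As stated, $k=K(c)$ does \emph{not} satisfy this for large $d$, so this step does not go through unless the constant $K(c)$ in Theorem~\ref{thm:c-flip} is read as secretly depending on $d,\eps$ as well (the paper's notation is ambiguous and this looks like a paper-level imprecision rather than something you introduced, but a careful write-up should either set $k \asymp \log(K(c)\,d/\nu)$ explicitly or note that $K$ absorbs the $d,\eps$ dependence). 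A smaller bookkeeping point you omit: Theorem~\ref{thm:c-flip} bounds a total-variation distance, so one should record the (exact, factor-$\tfrac12$) conversion $d_{TV}=\tfrac12|X-Z|$ for distributions on $\{\pm1\}$ when matching it to the belief-difference form of the lemma's conclusion.
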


\begin{proof}[Proof of Theorem~\ref{thm:well-spread}]
 We induct on $r \geq 1$ to prove that 
    \begin{equation}\label{eq:non-asymp-spread}
       \EE \left[ \max_{\sf{adversary}} |X_{u,t-r} - Z_{u,t-r}| \mathrel{\Big|} \sigma_u = +1 \right] \leq 2^{-(r-k)} \nu.
   \end{equation}
   The base case of $r=k$ is Lemma~\ref{lem:contract-small-spread}, and the inductive step follows by applying Lemma~\ref{lem:contract-induct} when $\eps \leq \eps^{*}$ and Lemma~\ref{lem:contract-unif-large} when $\eps > \eps^{*}$. 
   To finish up, it suffices to take $t_0 \asymp \log(\delta^{-1})$ and iteratively apply (\ref{eq:non-asymp-spread}).
\end{proof}

\section{Information-theoretic lower bounds}\label{sec:info-theory}

First, we establish the folklore result that the $\rho$-fraction adversary is very powerful and can in fact completely erode any information in the leaves about the posterior distribution of the root. 

\begin{definition}
    Fix $\rho >0$. The \emph{$\rho$-fraction adversary} is allowed to look at all the $d^t$ spins of the leaves and then flip the signs of $\rho d^t$ leaves of his choosing. 
\end{definition}

\begin{theorem}\label{thm:non-spread}
    For every $\rho$, $d$ and $\varepsilon$, there exists some $t_0$ such that if the $\rho$-fraction adversary is allowed to corrupt the leaves of the broadcast on tree process with parameters $(d, \varepsilon)$ that was run up to level $t \geq t_0$, it is information theoretically impossible to recover the root vertex. In fact, we prove that for $(\sigma_R, \sigma_L) \sim \cD_{d, \eps t}$ there exists a $\rho$-fraction adversary \texttt{A} for which we have
    \[
    d_{TV}( \{ \sigma_L \, | \, \sigma_R = 1 \}, \{ A(\sigma_L) \, | \, \sigma_R = -1 \}) \leq e^{-\Omega(t)} \,.
    \]
\end{theorem}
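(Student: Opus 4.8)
The plan is to fix $\rho,d,\eps$ (with $0<\eps<1$), exhibit a coupling between the full-tree laws $P^+:=\{\sigma\mid\sigma_R=+1\}$ and $P^-:=\{\sigma\mid\sigma_R=-1\}$ whose projections to the leaves differ on at most $\rho d^t$ coordinates except with probability $e^{-\Omega(t)}$, and then upgrade this coupling into a legitimate $\rho$-fraction adversary. The coupling is built recursively from the root down: set $\sigma_R=+1$, $\sigma'_R=-1$, and at an already-exposed node $v$, if $\sigma_v=\sigma'_v$ draw the children once and copy (making the two subtrees below $v$ identical), while if $\sigma_v\neq\sigma'_v$ run, for each child $vi$ independently, the maximal coupling of the single-spin laws $\{\sigma_{vi}\mid\sigma_v\}$ and $\{\sigma'_{vi}\mid\sigma'_v\}$ and recurse. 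A direct check shows these single-spin laws agree with probability exactly $1-\eps$ (this is the ``$\eps d$ Wasserstein per level'' observation from the overview), and by construction the marginals are $P^+$ and $P^-$. A leaf $\ell$ satisfies $\sigma_\ell\neq\sigma'_\ell$ iff every node on the root-to-$\ell$ path is ``mismatched'' ($\sigma_v\neq\sigma'_v$).

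Next I would count the differing leaves. The mismatched nodes form a Galton--Watson tree truncated at depth $t$: the root is mismatched, and conditioned on a node being mismatched its $d$ children are mismatched independently each with probability $\eps$ — here one must verify that the child-couplings are independent and that, conditioned on the child spins, the subtree recursions are independent, so that no long-range dependence sneaks in. Hence the number $N$ of differing leaves is the size of generation $t$, with $\EE N=(\eps d)^t$, and always $N\le d^t$. Markov's inequality then gives $\PP[N>\rho d^t]\le (\eps d)^t/(\rho d^t)=\eps^t/\rho$, which is $e^{-\Omega(t)}$ once $t\ge t_0(\rho,\eps)$ because $\eps<1$; no concentration beyond Markov is needed.

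Finally, I would convert this into the adversary $A$: on input $y\in\{\pm1\}^{d^t}$, sample internal spins from the plain-broadcast posterior $\{\sigma_{\mathrm{int}}\mid\sigma_L=y,\ \sigma_R=-1\}$ to obtain a full configuration $s^-$, run the coupling rules to sample the paired configuration $s^+$ from the conditional coupling law given $s^-$, and output $s^+_L$ if it differs from $y$ on at most $\rho d^t$ leaves and $y$ itself otherwise. This is a valid (randomized) $\rho$-fraction adversary on every input, and when $y\sim\{\sigma_L\mid\sigma_R=-1\}$, adjoining the sampled internal spins makes $s^-$ distributed exactly as $P^-$, so $(s^-,s^+)$ has precisely the coupled law, whence $s^+_L\sim\{\sigma_L\mid\sigma_R=+1\}$ and $A(y)=s^+_L$ outside an event of probability $\le e^{-\Omega(t)}$; this yields $d_{TV}(\{\sigma_L\mid\sigma_R=1\},\{A(\sigma_L)\mid\sigma_R=-1\})\le e^{-\Omega(t)}$. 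The one genuinely non-mechanical point is exactly this last conversion: the adversary sees only the leaves, not the internal spins that define the coupling, so it must reconstruct a plausible internal configuration, and the key realization is that since we only need to fool the $\sigma_R=-1$ branch, the adversary is entitled to condition on $\sigma_R=-1$ when doing so — after which $s^-$ has the correct law $P^-$ and the recursive coupling applies verbatim. Everything else (the $1-\eps$ agreement probability, the Galton--Watson mean, and the Markov bound) is routine bookkeeping.
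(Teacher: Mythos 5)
Your proposal is correct and follows essentially the same route as the paper: a root-to-leaf coupling of the two conditioned laws in which disagreements propagate along a subcritical-per-level branching structure, a Markov bound on the number of disagreeing leaves, and a conversion to an adversary by resampling the internal spins from the posterior given the observed leaves and the target root sign. The only differences are cosmetic — you use the exactly-maximal per-edge coupling (mismatch probability $\eps$ per child), whereas the paper implements the same idea via an explicit ``marking'' process with a slightly larger flip parameter $\xi = \tfrac{4\eps}{1+2\eps}$, and it runs the coupling from the $+$ sample to the $-$ sample rather than the reverse.
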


The quantitative bound in the theorem implies that it is impossible to have an algorithm \texttt{A} that recovers the root vertex with an advantage larger than $e^{-\Omega(t)}$.

\begin{proof}
    Set $t_0 = \log \rho^{-1}/(\log \frac{1+ 2 \varepsilon}{4 \varepsilon}) + 1$ and throughout this proof we consider $t \geq t_0$.
    Let $\mathcal{D}_t^{+}$ be the distribution of the leaves at level $t$ of a broadcast process with parameters $(\varepsilon, d)$ with root being $+1$ and $\mathcal{D}_t^{-}$ be the distribution with root being $-1$.
    And, let $D_{\leq t}^+, D_{\leq t}^-$ be the analogous joint distributions on the first $t$ levels of the broadcast tree.
    
    We claim that it suffices to exhibit a coupling $(\mathbf{x}, \pi_t(\mathbf{x})) \in \{ \pm 1 \}^{(d-1)^t} \times \{ \pm 1 \}^{(d-1)^t} $ between $\mathbf{x} \sim \mathcal{D}_t^{+}$ and $\pi_t(\mathbf{x})\sim \mathcal{D}_t^{-}$ such that with probability $1 - e^{-\Omega(t)}$, we have $\mathrm{dist}(\mathbf{x}, \pi_t(\mathbf{x})) \leq (4e \varepsilon)^{t} (d-1)^t$ where $\mathrm{dist}$ refers to Hamming distance between the two strings. 
     
    Indeed, consider the following $\rho$-fraction adversary $\mathbf{A}$: for $\mathbf{x} \sim \mathcal{D}_t^{+}$, if $\mathrm{dist}(\mathbf{x}, \pi_t(\mathbf{x})) \leq (4e \varepsilon)^{t} (d-1)^t$ then let $\mathbf{A}(\mathbf{x}) = \pi_t(\mathbf{x})$, where by our choice of parameters we have that $(4e \varepsilon)^t \leq \rho$. 
    Otherwise, let $\mathbf{A}(\mathbf{x}) = \mathbf{x}$. Let $\mathcal{A}_t^{+}$ denote the distribution corresponding to $\mathbf{A}(\mathbf{x})$ where $\mathbf{x} \sim \mathcal{D}_t^{+}$. In particular, this means that 
     \[ d_{TV}(\mathcal{D}_t^{-}, \mathcal{A}_t^{+}) \leq e^{-\Omega(t)}. \]
     
    It suffices therefore to exhibit the coupling $\pi_t$ as claimed.
    Define $\pi_t(\mathbf{x})$ for $\mathbf{x} \sim \mathcal{D}_{t}^{+}$ to act as follows: (in the following, let $\xi = \frac{4\varepsilon}{1+2 \varepsilon}$) here one should think of marked vertices as the vertices where the adversary did not make any changes, and the goal of the adversary is to flip unmarked vertices that are labelled `+' to `-'

    \begin{itemize}
        \item Sample $\mathbf{y} \sim \mathcal{D}_{\leq t}^{+} \mathrel{\Big|} \mathbf{x}$ (that is, $\mathcal{D}_{\leq t}^+$ conditioned on the leaves of the tree taking labels $\mathbf{x}$).
        \item Let $\mathbf{y}'(\sf{root}) = -1$.
        \item Traverse down the tree starting from the the nodes at level 1, that is, from the children of the root. If the current node $v$ is marked, let $\mathbf{y}'(v) = \mathbf{y}(v)$ and mark all its children. Otherwise, if $v$ is unmarked and $\mathbf{y}(v) = -1$, then mark $v$ and all its children and set $\mathbf{y}'(v) = \mathbf{y}(v)$. Else, if $v$ is unmarked and $\mathbf{y}(v) = +1$, with probability $\xi$ leave it unmarked and set $\mathbf{y}'(v) = -1$. Otherwise with probability $1 - \xi$, mark $v$ and its children and set $\mathbf{y}'(v) = +1$.
        \item Let $\pi_t(\mathbf{x})$ be the restriction of $\mathbf{y}'$ to the vertices at level $t$. 
    \end{itemize}
In order to prove that this coupling has the desired properties, we first show that with high probability, the Hamming distance between $\mathbf{x}$ and $\pi_t(\mathbf{x})$ is small.

Note that if for a vertex $v$ at the $t$-th level, we have $\pi_t(\mathbf{x}(v)) \neq \mathbf{x}(v)$ then $v$ has to be unmarked, and furthermore there is a path of $t$ unmarked vertices connecting $v$ to the root.
The latter occurs with probability at most $\xi^t$. Applying Markov's inequality immediately gives  

\[ \mathbb{P}[\mathrm{dist}(\mathbf{x}, \pi_t(\mathbf{x})) \geq \rho^t(d-1)^t] \leq e^{-\Omega(t)}.\]

Next, we need to show that $\pi_t(\mathbf{x}) \sim \mathcal{D}_t^{-}$. 
We will actually show that $\mathbf{y'} \sim \mathcal{D}^-_{\leq t}$ by inducting on $t$. For the base case, note that for any vertex $w$ on level 1, we have that 
\[ \mathbb{P}[\mathbf{y}'(w) = -] = \left( \frac{1}{2} - \varepsilon \right) + \left( \frac{1}{2} + \varepsilon \right) \xi = \frac{1}{2} + \varepsilon\]
as desired. 

For $s \leq t$, we write $\mathbf{y'}_{\leq s}$ to denote the restriction of $\mathbf{y'}$ to labels of vertices in levels $r \leq s$.
Note that it suffices to prove that for any vertex $v$ at level $s+1$ with parent $w$, 
\[ \mathbb{P}[\mathbf{y}'(v) = + | \mathbf{y}'_{\leq s}] = \frac{1}{2} + \mathbf{y}'(w)\varepsilon.\]
Indeed, note that 
\begin{align*}
    & \mathbb{P}[\mathbf{y}'(v) = + \ | \ \mathbf{y}'_{\leq s}]\\
    & \quad = \mathbb{P}[\mathbf{y}'(v) = + \ | \ \mathbf{y}'_{\leq s}, w \text{ marked}] \mathbb{P}[w \text{ marked} \ | \ \mathbf{y}'_{\leq s}]  +\mathbb{P}[\mathbf{y}'(v) = + \ | \ \mathbf{y}'_{\leq s}, w \text{ unmarked}] \mathbb{P}[w \text{ unmarked} \ | \ \mathbf{y}'_{\leq s}]\\
    & \quad = \left( \frac{1}{2} +  \mathbf{y}'(w)\varepsilon \right) \mathbb{P}[w \text{ marked} \ | \ \mathbf{y}'_{\leq s}] + \mathbb{P}[\mathbf{y}'(v) = + \ | \ \mathbf{y}'(w) = -, w \text{ unmarked}] \mathbb{P}[w \text{ unmarked} \ | \ \mathbf{y}'_{\leq s}].
\end{align*}
Since 
\begin{align*}
    \mathbb{P}[\mathbf{y}'(v) = + \ | \ \mathbf{y}'(w) = -, w \text{ unmarked}] &= 1 - \mathbb{P}[\mathbf{y}'(v) = - \ | \ \mathbf{y}'(w) = -, \mathbf{y}(w) = +, w \text{ unmarked}]  \\
    &= 1 - \left( \frac{1}{2} - \varepsilon + \left( \frac{1}{2} + \varepsilon \right) \xi \right) = \frac{1}{2} - \varepsilon \\
    &= \frac 12 + \mathbf{y}'(w) \varepsilon
\end{align*}

and $ \mathbb{P}[w \text{ marked} \ | \ \mathbf{y}'_{\leq s}] + \mathbb{P}[w \text{ unmarked} \ | \ \mathbf{y}'_{\leq s}] = 1$, the above indeed implies that 
\[ \mathbb{P}[\mathbf{y}'(v) = + | \mathbf{y}'_{\leq s}] = \frac{1}{2} + \mathbf{y}'(w)\varepsilon \]
as desired. 
\end{proof}

 Another natural question is whether the bound on $\rho$ in Theorem~\ref{thm:const-corrupt} is tight; that is, can we prove any information theoretic lower bounds on $\rho$-semi-random adversaries? To that end, we have the following upper bound of $\rho \lesssim \varepsilon$. This means that at least in the range of $d\eps^2 \gtrsim \log \tfrac{d}{1-\eps} $ being a log-factor above the Kesten-Stigum threshold, belief propagation achieves the optimal $\rho$ (up to constant factors) in terms of being robust against $\rho$-semi-random adversaries.  
\begin{theorem}\label{thm:semi-random-info}
    For every $d$ and $\varepsilon$, there exists $t_0$ such that if a $\left( \frac{4\varepsilon}{1+ 2 \varepsilon} \right)$-semi-random adversary is allowed to corrupt the leaves of the broadcast on tree process with parameters $(d,\varepsilon)$ that was run up to level $t \geq t_0$, then it is information theoretically impossible to recover the root vertex. That is, for $(\sigma_R, \sigma_L) \sim \cD_{d, \eps, t}$ there exists a $\left( \frac{4\varepsilon}{1+ 2 \varepsilon} \right)$-semi-random adversary \texttt{A} such that
    \[d_{TV}(\{\sigma_L \, | \, \sigma_R = 1 \}, \{\texttt{A}(\sigma_L) \, | \, \sigma_R = -1 \}) \leq e^{-\Omega(t)} \,.\]
\end{theorem}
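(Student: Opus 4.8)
The plan is to re-run the coupling from the proof of Theorem~\ref{thm:non-spread}, observing that the corruptions it performs are exactly of the type a $\bigl(\tfrac{4\varepsilon}{1+2\varepsilon}\bigr)$-semirandom adversary is allowed to make. The parameter $\xi=\tfrac{4\varepsilon}{1+2\varepsilon}$ appearing in that proof is forced: it is the unique flip probability for which $\bigl(\tfrac12-\varepsilon\bigr)+\bigl(\tfrac12+\varepsilon\bigr)\xi=\tfrac12+\varepsilon$, i.e.\ the rate at which one must flip ``$+$'' children to ``$-$'' children to convert the one-step marginals of a $\mathcal{D}^+$ broadcast into those of a $\mathcal{D}^-$ broadcast. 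This is precisely the semirandom corruption rate $\rho$ in the statement, and that coincidence is what makes the argument work.

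Concretely, given $(\sigma_R,\sigma_L)\sim\cD_{d,\eps,t}$ conditioned on $\sigma_R=-1$ (so $\sigma_L\sim\mathcal{D}_t^-$), I would take $\texttt{A}$ to be the $+\!\leftrightarrow\!-$ reflection of the coupling $\pi_t$ of Theorem~\ref{thm:non-spread}: sample $\mathbf{y}\sim\mathcal{D}_{\le t}^-\mid\sigma_L$, set $\mathbf{y}'(\mathrm{root})=+1$, and traverse down the tree marking vertices as in that proof but with ``$+$'' and ``$-$'' interchanged. Thus at an unmarked vertex $v$ with $\mathbf{y}(v)=-1$ we flip a $\xi$-coin, leaving $v$ unmarked and setting $\mathbf{y}'(v)=+1$ on heads and otherwise marking $v$ together with its subtree and setting $\mathbf{y}'(v)=-1$; at an unmarked vertex with $\mathbf{y}(v)=+1$ we mark it and set $\mathbf{y}'(v)=+1$; at a marked vertex we copy $\mathbf{y}'(v)=\mathbf{y}(v)$. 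The single change is that at the bottom level we take the $\xi$-coin at each leaf $v$ to be that leaf's corruption indicator $x_v\sim\mathrm{Ber}(\rho)$, which is legitimate because $\rho=\xi$ and $x_v$ is independent of everything else. The adversary outputs $\mathbf{y}'$ restricted to level $t$.

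Two verifications then finish the proof. For distributional correctness, the level-by-level induction in the proof of Theorem~\ref{thm:non-spread}, applied verbatim with the signs swapped, gives $\mathbf{y}'\sim\mathcal{D}_{\le t}^+$; the only new point is that at the leaf level the role of the ``fresh'' $\xi$-randomness is played by the $\mathrm{Ber}(\rho)=\mathrm{Ber}(\xi)$ variable $x_v$, which is independent of the restriction of $\mathbf{y}'$ to levels $<t$, so that induction step is unchanged. Hence $\texttt{A}(\sigma_L)\sim\mathcal{D}_t^+=\{\sigma_L\mid\sigma_R=1\}$ and the total variation distance in the statement is in fact $0\le e^{-\Omega(t)}$ (so any $t_0\ge1$ works). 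For admissibility, a leaf $v$ has $\mathbf{y}'(v)\ne\sigma_v$ only in the single branch ``$v$ unmarked, $\mathbf{y}(v)=-1$, $\xi$-coin heads,'' and that coin is exactly $x_v$; so every leaf $\texttt{A}$ flips has $x_v=1$, i.e.\ $\texttt{A}$ is a valid $\rho$-semirandom adversary. The one place needing care---the entire difference from just quoting Theorem~\ref{thm:non-spread}---is this admissibility bookkeeping: one checks that the reflected coupling only ever turns a $-1$ into a $+1$ and never acts at a leaf whose coin is tails, both of which follow from identifying the leaf-level event $\mathbf{y}'(v)\ne\mathbf{y}(v)$ with ``$x_v=1$ at an unmarked leaf carrying $\mathbf{y}(v)=-1$.''
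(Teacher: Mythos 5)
Your proposal is correct and follows essentially the same approach as the paper: reuse the coupling $\pi_t$ from Theorem~\ref{thm:non-spread}, identify the leaf-level $\xi$-coins with the $\mathrm{Ber}(\rho)$ corruption indicators (legitimate since $\rho=\xi=\tfrac{4\varepsilon}{1+2\varepsilon}$ and the indicators are independent of $\sigma_L$ and the internal randomness), and observe that a leaf is flipped only on the branch where its coin is heads, so the adversary is admissible while the output is exactly distributed as $\mathcal{D}_t^{\mp}$. Your write-up is in fact a bit more careful than the paper's terse version, in particular in spelling out the $+\leftrightarrow-$ reflection to match the direction of the $d_{TV}$ statement and in the admissibility bookkeeping, and it correctly notices that the coupling here gives total variation exactly $0$ rather than merely $e^{-\Omega(t)}$.
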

\begin{proof}
    Set $t_0 = 1/(\log(2e\varepsilon)^{-1})$. We adapt the adversary that we used to prove the information theoretic bound in Theorem~\ref{thm:non-spread}. Retain the notations as in the proof of Theorem~\ref{thm:non-spread}, where $\pi_t$ defines a coupling on $(\mathcal{D}_t^{+}, \mathcal{D}_t^{-})$. Consider the following adversary \textbf{B}: for $\mathbf{x} \sim \mathcal{D}_t^{+}$, if all the sites of $\pi_t(\mathbf{x})\triangle \mathbf{x}$ (i.e. the sites where the two vectors differ) are selected as possible sites for corruption, and we call this event \emph{good}, then output $\pi_t(\mathbf{x})$. Otherwise, return $\mathbf{x}$. Let $\mathcal{B}_t^{+}$ denote the distribution corresponding to $\mathbf{B}(\mathbf{x})$ where $\mathbf{x}\sim \mathcal{D}_t^{+}$.
    
    We claim that the good event happens with probability $1$. To that end, we traverse down the tree and mark vertices as in the process described in Theorem~\ref{thm:non-spread} until we reach level $t-1$. For each of these marked vertices on level $t-1$, we simulate the process of choosing whether to mark its children with the coin flips in the $\left( \frac{4\varepsilon}{1+ 2 \varepsilon} \right)$-semi-random adversary. In other words we couple random coin flips to mark the children and change their spins with the coin tosses in the adversary. The bound on the semi-random robust gain follows from Theorem~\ref{thm:non-spread} as well. 
\end{proof}

\section*{Acknowledgements}
We thank Guy Bresler and Po-Ling Loh for helpful conversations as this manuscript was being prepared. 

\appendix 

\section{Deferred proofs}\label{subsec:defer}

\begin{proof}[Proof of Corollary~\ref{cor:intro-misspecification}]
    We claim that an equivalent formulation of Theorem~\ref{thm:const-corrupt} is that for any $\delta',d,\eps, \rho$ satisfying suitable hypothesis, there exists $t_0'(d, \delta')$ such that for any adversary \texttt{B} that takes as input $\sigma_L$ and $S \subset [n]$ and is only allowed to flip the signs of the restriction $\sigma_L|_{S}$, there exists an algorithm \texttt{ALG} with the following guarantee:
    \[ \EE_{(\sigma_R, \sigma_L) \sim \cD_{d, \eps, t}, S \sim \Delta_{\rho,t}} d_{TV}\left( \{ \sigma_R \, | \, \sigma_L \}, \texttt{ALG}(\texttt{B}(\sigma_L, S))\right) \leq \delta' \]
    for all $t \geq t_0'$ where $\Delta_{\rho,t}$ is the uniform distribution on subsets $S \subset [d^t]$ of cardinality $\frac{\rho d^t}{2}  \leq |S| \leq 2 \rho d^t$. 

    To this end, note that if $k \leq m$ then 
    \begin{multline*}
        \EE_{(\sigma_R, \sigma_L) \sim \cD_{d, \eps, t}, S \sim \Delta_{\rho,t}} \max_{|S| = k} d_{TV}\left( \{ \sigma_R \, | \, \sigma_L \}, \texttt{ALG}(\texttt{B}(\sigma_L, S))\right) \\ \leq \EE_{(\sigma_R, \sigma_L) \sim \cD_{d, \eps, t}, S \sim \Delta_{\rho,t}} \max_{|S| = m} d_{TV}\left( \{ \sigma_R \, | \, \sigma_L \}, \texttt{ALG}(\texttt{B}(\sigma_L, S))\right).
    \end{multline*}
    Let $\Delta'_{\rho,t}$ be the distribution on subsets $S \subset [d^t]$ of cardinality $\rho d^t \leq |S| \leq 2 \rho d^t$ where $\mathbb{P}[S] = \rho^{|S|}(1-\rho)^{n-|S|}$. Note that $\rho^{m}(1-\rho)^{n-m}$ is decreasing on the range $ \rho d^t \leq m \leq 2 \rho d^t$ since $\rho < \frac{1}{2}$. Consequently,
    \begin{align*}
    \EE_{(\sigma_R, \sigma_L) \, , \, S \sim \Delta_{\rho,n}} d_{TV}\left( \{ \sigma_R \, | \, \sigma_L \}, \texttt{ALG}(\texttt{B}(\sigma_L, S))\right) &\leq \EE_{(\sigma_R, \sigma_L) \, , \, S \sim \Delta'_{\rho,n}} d_{TV}\left( \{ \sigma_R \, | \, \sigma_L \}, \texttt{ALG}(\texttt{B}(\sigma_L, S))\right) \\ 
    &\leq \EE_{(\sigma_R,\sigma_L)} d_{TV}(\{ \sigma_R \, | \, \sigma_L \}, \texttt{ALG}(A(\sigma_L))) + \mathbb{P}[|\mathbf{x}| \geq 2\rho n] \\
    &\leq \EE_{(\sigma_R,\sigma_L)} d_{TV}(\{ \sigma_R \, | \, \sigma_L \}, \texttt{ALG}(A(\sigma_L))) + e^{-\rho d^t /6},
    \end{align*}
    where the final inequality follows from the Chernoff bound. This implies that we can take $t_0'(d, \delta') = t(d, \delta + e^{-\rho d^t/6})$. 
    
    By an application of Markov's inequality, it follows that $0.99$  of subset $S \subset [n]$ have the property that 
    \begin{equation}\label{eq:cor-intro}
        \EE_{(\sigma_R, \sigma_L) \sim \cD_{d, \eps, t}} d_{TV}\left( \{ \sigma_R \, | \, \sigma_L \}, \texttt{ALG}(\texttt{B}(\sigma_L, S))\right) \leq \delta.
    \end{equation}
    We claim that we can take $\cS$ to consist of the subsets $S$ which satisfy \eqref{eq:cor-intro}. First we check that these subsets have the desired property. Indeed, it suffices to note that for any $\mu$ with $\mu|_S = \cD_{d, \eps, t} |_S$, we have 
    \begin{align*}
        \EE_{ \sigma_L \sim \mu|_L} d_{TV}(\texttt{ALG}(\sigma_L, S), \{ \sigma_R |  \sigma_L \}) &= \EE_{\sigma_L|_{S}} \EE_{\sigma'_L|_{\overline{S}}} d_{TV}\left(\texttt{ALG}(\sigma_L|_S, \sigma'_L|_{\overline{S}},S), \EE_{\sigma_L|_{\overline{S}}}\{ \sigma_R |  \sigma_L\} \right) \\
        &\leq \EE_{\sigma_L, \sigma'_L|_{\overline{S}}} d_{TV}\left(\texttt{ALG}(\sigma_L|_S, \sigma'_L|_{\overline{S}},S), \{\sigma_R \, | \, \sigma_L \}  \right)
    \end{align*}
    where the first inequality follows because of the convexity of the total variation distance and the second inequality follows by \eqref{eq:cor-intro}. 

    Finally, we show that $\cS$ is large in size. This follows because 
    \[ |\mathcal{S}| \geq 0.99 \binom{d^t}{\rho d^t} \geq \exp(\rho d^t) = 2^{\Omega_{\eps, d}(d^t)}\]
    where the second inequality is true for sufficiently large $t$. 
\end{proof}


\begin{proof}[Proof of Claim~\ref{claim:small-1}]
    Let $f(x) = \frac{1}{1+x}$ and $g(x) = x^p$. We claim that 
    \[ |f'(x)| \leq g'(x)/p = x^{p-1}.\]
    The desired inequality then follows from the fundamental theorem of calculus. Now, $|f'(x)| = (1+x)^{-2}$ and so when $x \geq 1$, we have $|f'(x)| \leq x^{-2} \leq x^{p-1}$ and when $x \leq 1$ we have $|f'(x)| \leq 1 \leq x^{p-1}$. 
\end{proof}


\begin{proof}[Proof of Claim~\ref{claim:small-3}]
    It suffices to note that for sufficiently small $x$, we have 
    \[ \left( \frac{1-x}{1+x} \right)^{1/2} \leq 1 - x + \frac{3x^2}{5}. \]
    Indeed, the above holds as for sufficiently small $x$, we have
\begin{align*}
    (1+x)\left( 1 - x + \frac{3x^2}{5} \right)^2 &=  1 - x + \frac{x^2}{5} + O(x^3) \geq 1-x,
\end{align*}
as desired.
\end{proof}

\begin{proof}[Proof of Claim~\ref{claim:small-4}]
    It suffices to note that for
    \[  \frac{d}{dx} \left( \frac{\alpha -x}{\beta +x} \right)^{1/2}  = \frac{-(\alpha + \beta)}{4(\alpha - x)^{1/2} (\beta + x)^{3/2}}. \]
    By setting $\alpha = 1 - \varepsilon a$ and $\beta = 1 + \varepsilon a$, it follows that for sufficiently small $\varepsilon$, it follows that the desired quantity is bounded above by a constant.
\end{proof}

\begin{proof}[Proof of Lemma~\ref{lem:term-der}]
    In what follows, denote $\eta := \frac{1-\varepsilon}{2}$. We adapt the proof of \cite[Lemma 3.16]{MNS16}. For simplicity of notation denote $f(x) = \sqrt{\frac{1- \varepsilon x}{1 + \varepsilon x}}$. By applying Markov's inequality to Lemma~\ref{lem:36}, it follows that for some $\tau  = \tau(\varepsilon^{*})$ to be chosen and sufficiently large $\varepsilon^2 d$ relative to $\tau$, we have 
\begin{equation*}
    \PP[X_{ui,k} \geq 1 - \tau\eta^{1/4} \, | \, \sigma_u = +1   ] \geq 1 - \tau \eta^{3/4} - \eta.
\end{equation*}
Since $Y_i \geq X_{ui,k} - |\xi|$, it follows that
\begin{align*}
     \PP\left[Y_i \geq 1- \tau \eta^{1/4} - |\xi| \, | \, \, \sigma_u = +1  \right] \geq \PP\left[X_{ui,k} - |\xi| \geq 1- \tau \eta^{1/4} - |\xi| \, | \, \, \sigma_u = +1  \right] \geq 1 - \tau \eta^{3/4} - \eta  =: \alpha.
\end{align*}

The remainder of the proof is near verbatim that of \cite[Lemma 3.16]{MNS16} and we repeat it here for completeness. Since $f(x)$ is decreasing in $x$, it follows that 
\[ \EE \left[ f(X) \, | \, \sigma_u = +1   \right] \leq f(s) \PP[X\geq s \, | \, \sigma_u = +1   ] + f(-1) \PP[X \leq s \, | \, \sigma_u = +1  ]\]
for any random variable $X$ supported on $[-1,1]$ and $s \in [-1, 1]$. Applying this for $s = 1 - \tau \eta^{1/4} - \xi$ and $X = Y_i$, the above probability estimate implies that 
\[ \EE [f(Y_i) \, | \, \sigma_u = +1  ] \leq f(1-\tau \eta^{1/4} - |\xi|) \alpha + f(-1)(1 - \alpha). \]

We claim that we can make each of the terms above bounded away from $\frac{1}{2}$ by taking $\tau(\eps^{*}) = \frac{(\eps^{*})^2}{8}$ and $\nu(\eps, \eps^{*}) = \frac{(\eps^{*})^2 \eta^{1/4}}{8}$. Note that this choice of $\nu(\eps, \eps^{*})$ satisfies $\nu = \tau \eta^{1/4}$.  

We can compute
\begin{align*}
    f(1 - \tau \eta^{1/4} - |\xi|)(1 - \tau \eta^{3/4} - \eta) &\leq \sqrt{\frac{2 \eta + \eps \tau \eta^{1/4} + \eps \xi}{2 - (2\eta + \eps \tau \eta^{1/4} + \eps |\xi|)}}\cdot (1 - \tau\eta^{3/4} - \eta) \\
    &\leq \left( \sqrt{\frac{\eta}{1 - \eta}} + \frac{1}{\sqrt{2 \eta}} (\eps\tau  \eta^{1/4} + \eps |\xi|) \right)\cdot (1 - \tau \eta^{3/4} - \eta) \\
    &\leq \sqrt{\eta(1-\eta)} + \frac{1}{\sqrt{2\eta}} \eps (\tau \eta^{1/4} + |\xi|)(1-\eta) \\
    &\leq \sqrt{\eta(1-\eta)} + \sqrt{2}\tau \eta^{1/4}\sqrt{1 - \eta}
\end{align*}
where the second inequality follows by Taylor expanding $\sqrt{x}{1-x}$ around $x = \eta$, and our choice of parameters ensure that $2(\eta + \eps \nu) \leq 2(\eta + \nu) < 1$ so that the derivative of $\sqrt{x}{1-x}$  is bounded. Finally, it can be checked that the above is bounded away from $\frac{1}{2}$ for our choice of parameters for our choice of $\tau$ and $\nu$ since $\eta \in \left(0, \frac{1 - \eps^{*}}{2} \right]$; in fact our choice of parameters ensures that $\sqrt{\eta(1-\eta)} + 2\tau \eta^{1/4}\sqrt{1 - \eta} \leq \frac{1}{2} - c(\eps^{*})$ for some $c(\eps^{*}) >0$. This follows because for $ 0 < \eps < 1$, we have that $\eta^{1/4}\sqrt{1 - \eta} \leq \frac{1}{2^{3/4}} + \frac{\eps}{2^{3/4}}$ and we also have $\sqrt{\eta(1-\eta)} \leq \frac{1}{2} - \frac{\eps^2}{4}$ and so we can write 
\[ \sqrt{\eta(1-\eta)} + 2 \tau \eta^{1/4} \sqrt{1 - \eta} \leq \frac{1}{2} - (\eps^{*})^2 \left( \frac{1}{4} - \frac{1}{4 \cdot 2^{3/4}} \right). \]

We can also compute 
\[ f(-1)(2\tau \eta^{3/4} + \eta) = 2 \tau \eta^{1/4}\sqrt{1 - \eta} + \sqrt{\eta(1 - \eta)}. \]
Note that $\eta(1-\eta)$ is bounded away from $\frac{1}{2}$ on the interval $\eta \in \left(0, \frac{1 - \eps^{*}}{2} \right]$ by our earlier calculation. 

It follows that there is some $d^{*}(\varepsilon)$ such that for all $d \geq d^{*}(\varepsilon^{*})$, there is some $\lambda(\varepsilon^{*}) < 1$ such that 
\[ \EE\left[\sqrt{f(Y_i)} \mathrel{\Big |} \sigma_u = +1 \right] \leq \lambda \]
as desired.
\end{proof}

\bibliographystyle{alpha}
\bibliography{ref.bib}

\end{document}